\newcommand{\switchtoPLF}{\lfstyle}
\renewcommand{\@biblabel}[1]{[\switchtoPLF{}#1]}        %
\DeclareMathAlphabet{\mathsfsl}{OT1}{cmss}{m}{sl}
\newcommand{\bigoh}[1]{\mathrm{O} ( #1 )}
\newcommand{\littleoh}[1]{\mathrm{o} ( #1 )}
\newcommand{\bigomega}[1]{\Omega ( #1 )}
\newcommand{\set}[1]{\{ #1 \}}
\newcommand{\setdescr}[3][\mid]{\set{ #2 #1 #3 }}
\newcommand{\setsize}[1]{\lvert#1\rvert}
\newcommand{\intersection}{\cap}
\newcommand{\union}{\cup}
\newcommand{\Union}{\bigcup}
\newcommand{\Lor}{\bigvee}
\newcommand{\limpl}{\rightarrow}
\newcommand{\olnot}[1]{\overline{#1}}
\newcommand{\eqperiod}{\enspace .}
\newcommand{\eqcomma}{\enspace ,}
\newcommand{\ie}{i.e.,\ }
\newtheorem{standardlocalcounter}{Dummy}[section]
\theoremstyle{plain}
\newtheorem{theorem}[standardlocalcounter]{Theorem}
\newtheorem{lemma}[standardlocalcounter]{Lemma}
\newtheorem{proposition}[standardlocalcounter]{Proposition}
\newtheorem{corollary}[standardlocalcounter]{Corollary}
\newtheorem{observation}[standardlocalcounter]{Observation}
\theoremstyle{definition}
\newtheorem{definition}[standardlocalcounter]{Definition}
\theoremstyle{remark}
\preto\wrong@fontshape{\expandafter\gdef\csname \curr@fontshape/sub\endcsname{}}
\SetMathAlphabet{\mathsf}{normal}{OT1}{fca}{m}{n}
\SetMathAlphabet{\mathtt}{normal}{OT1}{zi4}{m}{n}
\newcommand{\ClauseStrut}{\strut}
\renewcommand{\Axiom}[1]{\AxiomC{\ClauseStrut$#1$}}
\renewcommand{\UnaryInf}[1]{\UnaryInfC{\ClauseStrut$#1$}}
\renewcommand{\BinaryInf}[1]{\BinaryInfC{\ClauseStrut$#1$}}
\newcommand{\ib}[1]{\llbracket{#1}\rrbracket}
\newcommand{\dl}{\mathrm{dl}}
\newcommand{\res}{\mathrm{Res}}
\newcommand{\closure}{\mathrm{Cl_i}}
\newcommand{\restrict}[2]{{#1}{\upharpoonright_{#2}}}
\newcommand{\vars}[1]{\mathrm{vars(#1)}}
\newcommand{\ONE}{{\lfstyle1}}
\newcommand{\WW}{\mathcal{W}}
\newcommand{\XX}{\mathcal{X}}
\newcounter{refutations}
\newcommand{\newrefutation}{%
  \refstepcounter{refutations}%
  \emph{refutation~\arabic{refutations}}%
  \label{ref:\arabic{refutations}}%
}
\newif\iffullversion
\title{Limits of CDCL Learning via Merge Resolution}
\author{Marc Vinyals \and Chunxiao Li \and Noah Fleming \and Antonina Kolokolova \and Vijay Ganesh}
\begin{document}

\maketitle

\begin{abstract}
In their seminal work, Atserias et al. and independently Pipatsrisawat and Darwiche in 2009 showed that CDCL solvers can simulate resolution proofs with polynomial overhead. However, previous work does not address the tightness of the simulation, i.e., the question of how large this overhead needs to be. In this paper, we address this question by focusing on an important property of proofs generated by CDCL solvers that employ {\it standard learning schemes}, namely that the derivation of a learned clause has at least one inference where a literal appears in both premises (aka, a merge literal). Specifically, we show that proofs of this kind can simulate resolution proofs with at most a linear overhead, but there also exist formulas where such overhead is necessary or, more precisely, that there exist formulas with resolution proofs of linear length that require quadratic CDCL proofs.
 \end{abstract}

\section{Introduction}

Over the last two decades, CDCL SAT solvers have had a dramatic impact on many areas of software engineering~\cite{cadar2008exe}, security~\cite{dolby2007security,xie2005security}, and AI~\cite{blum1997fast}. This is due to their ability to solve very large real-world formulas that contain upwards of millions of variables and clauses~\cite{HandbookCDCL}. Both theorists and practitioners have expended considerable effort in understanding the CDCL algorithm and the reasons for its unreasonable effectiveness in the context of practical applications. While considerable progress has been made, many questions remain unanswered.

Perhaps the most successful set of tools for understanding the CDCL algorithm come from proof complexity, and a highly influential result is the one that shows that idealized models of CDCL can polynomially simulate the resolution proof system, proved independently by Atserias, Fichte, and Thurley~\cite{AFT11ClauseLearning}, and Pipatsrisawat and Darwiche~\cite{PD11OnThePower}, building on initial results by Beame et al.~\cite{BKS04TowardsUnderstanding} and Hertel et al.~\cite{HBPV08ClauseLearning}. (See also a recent alternative proof by Beyersdorff and Böhm~\cite{BB21QBF}.)
Such simulation results are very useful because they reassure us that whenever a formula has a short resolution proof then CDCL with the right choice of heuristics can reproduce it.

Recent models make assumptions that are closer to real solvers, but pay
for that with a polynomial overhead in the simulation. A series of papers have focused on understanding which of the assumptions are needed for these simulations to hold, often using and/or introducing refinements of resolution along the way.
For instance, the question of whether restarts are needed, while still
open, has been investigated at length, and the pool
resolution~\cite{VanGelder05PoolResolution} and
RTL~\cite{BHJ08ResolutionTrees} proof systems were devised to capture
proofs produced by CDCL solvers that do not restart. 
The importance of decision heuristics has also been explored recently,
with results showing that neither static~\cite{MPR20CDCL} nor
VSIDS-like~\cite{Vinyals20HardExamples} ordering of variables are enough to simulate
resolution in full generality (unless VSIDS scores are periodically erased~\cite{li2020towards}). In the case of static ordering, the (semi-)ordered
resolution proof system~\cite{MPR20CDCL} was used to reason about such variants of CDCL solvers. 

But even if we stay within the idealized model, it is not clear how efficient
CDCL is in simulating resolution. The analysis of Pipatsrisawat and
Darwiche gives an $\bigoh{n^4}$ overhead---that is, if a formula over
$n$ variables has a resolution refutation of length $L$, then a CDCL
proof with no more than $\bigoh{n^4L}$ steps exists. Beyersdorff and Böhm~\cite{BB21QBF} improved the overhead to $\bigoh{n^3}$, but we do not
know what the optimal is. Furthermore, to the best of our knowledge, prior
to our paper, we did not even know if the overhead can be avoided altogether.

\subsection{Learning Schemes in CDCL and Connection with Merges}

A common feature of CDCL solvers is the use of 1-empowering learning schemes~\cite{PD08Learning,AFT11ClauseLearning}: that is, they only learn clauses which enable unit propagations that were not possible before. An example of 1-empowering learning scheme is the popular \ONE UIP learning scheme~\cite{MS99Grasp}.  To model this behavior we build upon a connection between 1-empowerment, and merges~\cite{Andrews68Resolution}, i.e., resolution steps involving clauses with shared literals.
Nearly every CDCL solver nowadays uses the
First Unique Implication Point (\ONE UIP) learning scheme, where
conflict analysis  starts with a clause falsified by the
current state of the solver and sequentially resolves it with clauses
responsible for unit propagations leading to the conflict,
until the clause becomes \emph{asserting}, i.e., unit immediately upon backjumping.

Descriptions of early implementations of CDCL
solvers~\cite{MS99Grasp,MMZZM01Engineering} already remark on the
importance of learning an asserting clause, since that nudges the
solver towards another part of the search space, and consequently early
alternative learning schemes explored learning many kinds of asserting
clauses.
First observe that conflict analysis can be extended to produce other
asserting clauses that appear after the \ONE UIP during conflict
analysis such as intermediate UIPs and the last
UIP~\cite{BS97UsingCSP}. The early solver GRASP can even learn
multiple UIP clauses from a single conflict.  While there is empirical
evidence that it is often best to stop conflict analysis at the \ONE
UIP~\cite{ZMMM01EfficientConflict}, recent work has identified
conditions where it is advantageous to continue past
it~\cite{FB20ClauseSize} (see also the discussion of learning schemes therein).

Ryan~\cite[\S 2.5]{Ryan04Thesis} also observed empirically that
clause quality is negatively correlated with the length of the
conflict analysis derivation and considered the opposite approach, that
is, learning clauses that appear before the \ONE UIP during conflict
analysis in addition to the \ONE UIP. This approach is claimed to be useful for
some empirical benchmarks but, like any scheme that learns multiple
clauses, slows down Boolean constraint propagation (BCP) in comparison
to a scheme that learns just the \ONE UIP.

Later works provide a more theoretically oriented approach to
understanding the strength of \ONE UIP and to learning clauses that
appear before the \ONE UIP~\cite{DHN07Towards,PD08Learning}.
In particular, and highly relevant for our discussion, Pipatsrisawat
and Darwiche identified 1-empowerment as a fundamental property of asserting clauses.
Furthermore they identified a connection between 1-empowering clauses
and merges, and used the simplicity of checking for merges as an approximation for
1-empowerment.

An orthogonal approach is to extend the \ONE UIP derivation
by resolving it with clauses other than those that would usually be
used during conflict analysis~\cite{ABHJS08Conflict}. A prominent
example is clause minimization~\cite{SB09Minimizing}, where literals
are eliminated from the \ONE UIP clause by resolving it with the
appropriate input clauses, independently of their role in
the conflict, so the resultant clause that is actually learned is a shorter and
therefore stronger version of the \ONE UIP.

Furthermore, a relation between merges and unit-resolution completeness has also
been observed in the context of knowledge
compilation~\cite{delVal94Tractable}. Finally, 
 the amount of merges directly
inferable from a formula (\ie in a single resolution step) has been
proposed,  under the name of mergeability, as a measure to help explain
the hardness of a formula based on both controlled experiments as well as analysis of real-world instances~\cite{ZMWLCG18Effect}.

To summarize, merges are relevant in the context of CDCL learning schemes for the following reason: all practical CDCL learning schemes either produce a
1-empowering clause or extend one, and since 1-empowering clauses
always contain a merge in its derivation, we have that all practical learning
schemes produce a clause that contains a merge in its derivation,
which is exactly the property imposed by the proof systems we introduce below.

\subsection{Our Contributions}

As mentioned earlier, we build upon a connection between 1-empowerment and merges~\cite{PD08Learning,AFT11ClauseLearning},
and
introduce a proof system RMA (for ``resolution with merge ancestors'') which includes CDCL with an arbitrary 1-empowering learning scheme.
The ``merge ancestors'' in the name of this system comes from the fact that for any 1-empowering clause, at least one step in its resolution derivation must resolve two clauses that share a common literal: a \emph{merge} step in the sense of  \cite{Andrews68Resolution}.   Clause minimization procedures, as long as they are applied on top of 1-empowering clauses, are also modelled by RMA.

We prove that, on the one hand, RMA is able to
simulate resolution only with a linear overhead.
On the other hand, we show a quadratic separation between resolution
and RMA, that is there exist formulas with resolution proofs of linear
length that require %
RMA proofs of quadratic length. That is, we show that CDCL may be polynomially worse than resolution because of the properties of a standard learning scheme,  but that the blow-up due to these properties is not more than linear.

We also consider weaker proof systems, all of which contain \ONE UIP (and do so with finer granularity), but not necessarily other asserting learning schemes. A technical point of interest is that we work with proof systems that are provably not closed under restrictions,
which is unusual in proof complexity. This fact forces
our proof to exploit syntactic properties of
the proof system, as opposed to relying on more convenient semantic
properties.

\section{Preliminaries}
\label{sec:prelims}

A literal is either a variable $x^1=x$ or its negation
$x^0=\olnot{x}$. A clause is a disjunction of literals, and a CNF
formula is a conjunction of clauses.
The support of a clause or $\vars{C}$ is the set of variables it contains.
A resolution derivation
from a formula $F$ is a sequence of clauses $\eta=C_1,\ldots,C_L$ such that
$C_i$ is either an axiom in $F$ or it is the conclusion of applying
the resolution rule
\[\res(A \lor x, B \lor \olnot{x}) = A \lor B\]
on two premises $C_{j}$, $C_{k}$ with $j,k<i$.
The variable $x$ that appears with opposite signs in the premises of a
resolution inference is called the pivot. If furthermore there is a
\emph{literal} common to $A$ and $B$ the resolvent is called a merge.
If instead of being the result of a syntactic inference we allow $C_i$ to be any clause semantically implied by
$C_{j}$ and $C_{k}$, even if $C_j$ and $C_k$ might not be resolvable, then we say $\eta$ is a semantic resolution
derivation. A derivation is a refutation if its last clause is the empty clause $\bot$. We denote $\eta[a,b]=\setdescr{C_i\in\eta}{i\in[a,b]}$. 

We assume that every clause in a derivation is annotated with the
premises it is obtained from, which allows us to treat the proof as a
DAG where vertices are clauses and edges point from premises to
conclusions. When this DAG is a tree we call a derivation tree-like,
and when it is a centipede (\ie a maximally unbalanced
tree) we call it input.

A derivation is unit if in every inference at least one of the
premises is a unit clause consisting of a single literal. Since
neither input nor unit resolution are complete proof systems, we write
$F \vdash_i C$ (respectively $F \vdash_1 C$) to indicate that there
exists an input (resp. unit) resolution derivation of $C$ from $F$.

A clause $C$ syntactically
depends on an axiom $A$ with respect to a
derivation $\eta$ if there is a path from $A$ to $C$ in the DAG
representation of $\eta$. This does not imply that $A$ is required to derive $C$, since a different derivation might not use $A$. %

A restriction to variables is a mapping
$\rho\colon X \to X \union\set{0,1}$, successively extended to
literals, clauses, formulas, and refutations, simplifying where
needed. We write $\rho(x)=*$ as a shorthand for $\rho(x)=x$. It is
well-known that if $\eta$ is a resolution derivation from $F$ and
$\rho$ is a restriction, then $\restrict{\eta}{\rho}$ is a semantic
resolution derivation from $\restrict{F}{\rho}$.

It is convenient to leave satisfied clauses in place in a derivation
that is the result of applying a restriction to another derivation so
that we can use the same indices to refer to both derivations. To do
that we use the symbol $1$ and treat it as a clause that always
evaluates to true, is not supported on any set, does not depend on any
clause, and cannot be syntactically resolved with any clause.

A semantic derivation can be turned into a syntactic derivation by
ignoring unnecessary clauses. Formally,
if $\eta$ is a semantic resolution derivation, we define its syntactic equivalent $s(\eta)$ as
the following syntactic resolution derivation. Let $C \in \eta$ and
let $A$ and $B$ be the parents of $C$. If $s(A) \vDash C$ we set
$s(C) = s(A)$, analogously with $s(B)$. Otherwise we set
$s(C) = \res(s(A), s(B))$.
It is not hard to see that for each $C_i \in \eta$, $s(C_i) \vDash C_i$.

\iftrue
\subsection{CDCL}

We need to define a few concepts from CDCL proofs. An in-depth
treatment can be found in the Handbook of
Satisfiability~\cite{BN21Proof}. Fix a CNF $F$, also known as clause database. A trail $\tau$ is a sequence of tuples $(x_{j_i}=b,C_i)$ where $C_i$ is either a clause in
$F$ or the special symbol $d$ representing a decision. We denote by
$\alpha_{< i}$ the assignment $\setdescr{x_{j_i}=b}{i'<i}$, and
we denote by $\dl(i)=\dl(i-1)+\ib{C_i=d}$ the decision level at
position $i$, that is the number of decisions up to $i$.
We mark the position of the last decision in a trail by $i^*$.

A trail is valid if for every position $i$ that is not a decision we
have that $\restrict{C_i}{\alpha_{<i}}=x_{j_i}^b$ and for every
decision $i$ we have that for every clause $C\in F$ such that
$\restrict{C}{\alpha_{<i}} = x^b$, the literal $x^b$ appears in the
trail before $i$. In particular, for every position $i'<i$ with
$\dl(i')<\dl(i)$ we have $\restrict{C_i}{\alpha_{<i'}} \neq x_{j_i}^b$.

A clause $C$ is asserting if it is unit at the last decision in the
trail, that is $\restrict{C}{\alpha_{<i^*}}=x^b$.
It is 1-empowering if $C$ is implied by $F$ and can lead to new unit
propagations after being added to $F$, that is if there exists a literal
$\ell\in C$ such that for some $A\in\set{\bot,\ell}$, it holds that
$F \land\olnot{C\setminus\ell} \nvdash_1 A$. If a clause is not 1-empowering then we say it is absorbed by $F$.

Given a clause $D_{\setsize{\tau}}$ falsified by a trail $\tau$, the
conflict derivation is an input derivation
$D_{\setsize{\tau}},\ldots,D_{k+1},D_k$ where $D_{i-1}= \res(D_i,C_i)$ if
$x_{j_i} \in D_i$, and $D_{i-1} = D_i$ otherwise. The first (\ie with
the largest index) asserting clause in the derivation is called the
\ONE UIP. Note that $D_{i^*}$ is always asserting (because $D_i$ is falsified
by $\alpha_{\leq i}$ for $i^* \leq i \leq \setsize{\tau}$ and $D_{i^*}$
is not falsified by $\alpha_{< i^*}$), therefore we can assume that the
\ONE UIP always has index at least $i^*$.

We call a sequence of input derivations \emph{input-structured} if the
last clause of each derivation can be used as an axiom in successive
derivations. The last clause of each but the last derivation is called
a lemma.
A CDCL derivation is an input-structured sequence of conflict
derivations, where learned clauses are lemmas. This definition is
similar to that of Resolution Trees
with Input Lemmas~\cite{BHJ08ResolutionTrees}, with the difference
that the sequence only needs to be ordered, without imposing any further
tree-structure on the global proof.

The following Lemmas highlight the practical relevance of merges by relating them to \ONE UIP, asserting, and 1-empowering clauses.

\begin{lemma}[{\cite[Proposition 2]{PD08Learning}}]
  \label{lem:asserting-empowering}
  If a clause is asserting, then it is 1-empowering.%
  \footnote{The original result does not prove 1-consistency, but the
    proof is analogous.}
\end{lemma}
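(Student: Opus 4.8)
The plan is to take the asserting literal of $C$ itself as the witness for 1-empowerment, and to reduce the two required non-derivabilities to facts about the trail assignment $\alpha_{<i^*}$ immediately preceding the last decision, which is already saturated under unit propagation.

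First I would unpack the hypothesis. Write $\ell = x^b$ for the asserting literal, so that $\restrict{C}{\alpha_{<i^*}} = \ell$. By the definition of the restriction of a clause this says three things: no literal of $C$ is satisfied by $\alpha_{<i^*}$; the variable $x$ is unassigned by $\alpha_{<i^*}$; and every literal of $C \setminus \ell$ is falsified by $\alpha_{<i^*}$. The last point is equivalent to saying that $\olnot{C\setminus\ell}$, viewed as a partial assignment, is contained in $\alpha_{<i^*}$. Next I would record two consequences of $\tau$ being a valid trail in which $i^*$ is a decision: (i) $\alpha_{<i^*}$ is closed under unit propagation over $F$, since any clause of $F$ that is unit under $\alpha_{<i^*}$ has its unit literal already appearing in $\tau$ before $i^*$; and (ii) $\alpha_{<i^*}$ falsifies no clause of $F$, since the solver is deciding rather than in conflict. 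Together (i) and (ii) give $F \land \alpha_{<i^*} \nvdash_1 \bot$, and --- because $x$ is unassigned and no propagation is pending --- also $F \land \alpha_{<i^*} \nvdash_1 \ell$.

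The core step is the monotonicity of unit propagation: if $\gamma \subseteq \delta$ and $F \land \gamma \vdash_1 A$ for a literal $A$ or for $A = \bot$, then $F \land \delta \vdash_1 \bot$ or $F \land \delta \vdash_1 A$, because one can replay the propagation steps starting from the larger assignment and each step either still applies or a conflict is reached sooner. Apply this with $\gamma = \olnot{C \setminus \ell} \subseteq \alpha_{<i^*} = \delta$: if $F \land \olnot{C\setminus\ell} \vdash_1 A$ for some $A \in \set{\bot, \ell}$, then $F \land \alpha_{<i^*} \vdash_1 \bot$ or $F \land \alpha_{<i^*} \vdash_1 \ell$, contradicting the two facts above in either case. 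Hence $F \land \olnot{C\setminus\ell} \nvdash_1 \bot$ and $F \land \olnot{C\setminus\ell} \nvdash_1 \ell$, so $\ell$ witnesses that $C$ is 1-empowering; since the argument is uniform in $A$, it simultaneously yields the ``1-consistency'' strengthening mentioned in the footnote. Finally, $F \vDash C$ holds for the clauses under consideration, as they are produced from $F$ by resolution during conflict analysis.

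I expect the one genuinely delicate point to be pinning down the monotonicity step in the presence of conflicts: the disjunction ``$F \land \delta \vdash_1 \bot$ or $F \land \delta \vdash_1 A$'' really is necessary, since the naive implication $F \land \gamma \vdash_1 A \Rightarrow F \land \delta \vdash_1 A$ can fail when the extra literals in $\delta$ trigger a conflict first. A closely related (routine but not quite free) obligation is to extract property (ii) from the definition of a valid trail. Everything else is bookkeeping about how restrictions act on clauses and how the trail encodes unit propagation.
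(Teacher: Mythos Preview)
The paper does not supply its own proof of this lemma; it is stated with a citation to Pipatsrisawat and Darwiche and a footnote remarking that the 1-consistency part (the $A=\bot$ case) is analogous. Your argument is correct and is essentially the standard one: take the asserting literal $\ell$, observe that $\olnot{C\setminus\ell}$ is contained in the propagation-closed, non-conflicting assignment $\alpha_{<i^*}$, and use monotonicity of unit propagation to conclude both $F \land \olnot{C\setminus\ell} \nvdash_1 \ell$ and $F \land \olnot{C\setminus\ell} \nvdash_1 \bot$. Your care about the disjunctive form of monotonicity (conflict-or-$A$) is exactly right, and your flagging of property~(ii) is apt: the paper's formal definition of a valid trail only requires propagation saturation before a decision, not the absence of a falsified clause, but this is implicit in the CDCL model since a decision is only taken when no conflict is present.
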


\begin{lemma}[{\cite[Lemma 8]{AFT11ClauseLearning}}]
  \label{lem:empowering-merge}
  If $A \lor x$ and $B \lor \olnot{x}$ are absorbed but $A \lor B$ is
  1-empowering, then $A \lor B$ is a merge.
  In particular, if a clause is 1-empowering, then it contains a merge in its
  derivation.
\end{lemma}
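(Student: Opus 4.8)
The plan is to prove the displayed implication by contradiction, and then obtain the ``in particular'' statement by inducting along the derivation. So suppose $A\lor x$ and $B\lor\olnot x$ are absorbed by $F$, that $C \coloneqq A\lor B$ is 1-empowering, but that $C$ is not a merge; we will contradict the last assumption by showing $C$ is absorbed. First, a 1-empowering clause is non-tautological (adding a tautology can never trigger a unit propagation), so $C$ is non-tautological, and together with $C$ not being a merge this forces $\vars{A}$ and $\vars{B}$ to be disjoint; also $x\notin\vars{A}\cup\vars{B}$ by the usual convention on writing $A\lor x$ and $B\lor\olnot x$. Hence for every literal $\ell\in C$ the partial assignment $\alpha_\ell\coloneqq\olnot{C\setminus\ell}$, and its extension $\alpha_\ell\land\olnot x$, are consistent. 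Recalling that a clause $D$ implied by $F$ is absorbed exactly when, for every $\ell\in D$, unit propagation from $F\land\olnot{D\setminus\ell}$ derives $\ell$ or derives a contradiction, it suffices to fix $\ell\in C$, say with $\ell\in A$ (the case $\ell\in B$ is symmetric, swapping $A\leftrightarrow B$ and $x\leftrightarrow\olnot x$), and show $F\land\alpha_\ell\vdash_1\ell$ or $F\land\alpha_\ell\vdash_1\bot$.

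We carry this out in two propagation stages. Since $B\lor\olnot x$ is absorbed and $\alpha_\ell$ extends $\olnot B=\olnot{(B\lor\olnot x)\setminus\olnot x}$, monotonicity of unit propagation gives that $F\land\alpha_\ell$ unit-propagates $\olnot x$ or a contradiction; in the latter case we are done. Otherwise $\olnot x$ lies in the unit-propagation closure of $F\land\alpha_\ell$, so $F\land\alpha_\ell$ and $F\land\alpha_\ell\land\olnot x$ have the same closure. Now $\alpha_\ell\land\olnot x$ extends $\olnot{A\setminus\ell}\land\olnot x=\olnot{(A\lor x)\setminus\ell}$, and since $A\lor x$ is absorbed, unit propagation from $F\land\alpha_\ell\land\olnot x$ — equivalently from $F\land\alpha_\ell$ — derives $\ell$ or a contradiction. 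Either way the absorption condition for $C$ at $\ell$ holds. As $\ell$ was arbitrary (and $C$, being the resolvent of two $F$-implied clauses, is itself implied by $F$), the clause $C$ is absorbed, contradicting that it is 1-empowering.

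For the ``in particular'' statement, let $E_1,\dots,E_m$ be a resolution derivation from $F$ with $E_m$ a 1-empowering clause, and suppose toward a contradiction that none of its inferences is a merge. By induction on $i$ we show each $E_i$ is absorbed by $F$: if $E_i\in F$ then $F\land\olnot{E_i\setminus\ell}$ unit-propagates $\ell$ using the axiom $E_i$ itself, so $E_i$ is absorbed; and if $E_i=\res(E_j,E_k)$ with $j,k<i$, then $E_j$ and $E_k$ are absorbed by the induction hypothesis and the inference is not a merge, so the part of the lemma just proved forces $E_i$ to be not 1-empowering, i.e.\ absorbed. In particular $E_m$ is absorbed, contradicting the hypothesis.

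The main obstacle is the two-stage propagation argument of the second paragraph: one must check that the partial assignments involved stay consistent — this is exactly where the non-merge hypothesis enters, via disjointness of $\vars{A}$ and $\vars{B}$ — and one must chain the two invocations of absorption correctly through monotonicity of unit propagation, carrying along at each stage the two possible outcomes (propagating $\ell$ versus propagating a contradiction) permitted by the definition of absorbed.
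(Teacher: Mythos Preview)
The paper does not prove this lemma; it merely cites \cite[Lemma~8]{AFT11ClauseLearning} and moves on. Your argument is correct and is essentially the standard proof from that reference: assuming the resolvent is not a merge (so $\vars{A}$ and $\vars{B}$ are disjoint and $x$ is fresh), chain the absorption of $B\lor\olnot{x}$ to propagate $\olnot{x}$, then the absorption of $A\lor x$ to propagate $\ell$, handling the $\bot$ branch at each stage via monotonicity of unit propagation. The induction for the ``in particular'' clause is likewise the intended one.
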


\begin{lemma}
  \label{lem:1uip-is-merge}
  The \ONE UIP clause is a merge.
\end{lemma}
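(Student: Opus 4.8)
I would prove the statement by analysing the final step of the conflict derivation $D_{\setsize{\tau}},\ldots,D_k$ that produces the \ONE UIP clause $D_m$ (the asserting clause of largest index); note that Lemmas~\ref{lem:asserting-empowering} and~\ref{lem:empowering-merge} only yield a merge \emph{somewhere} in the derivation of $D_m$, so a direct syntactic argument is needed. I would first record two facts about the index $m$: on the one hand $m \ge i^*$, as noted above; on the other hand $m < \setsize{\tau}$, because the conflict clause $D_{\setsize{\tau}}$ cannot itself be asserting — if $\restrict{D_{\setsize{\tau}}}{\alpha_{<i^*}}$ were a unit clause $x^b$, then, since $D_{\setsize{\tau}}$ is a clause of the database, trail validity at the last decision would force $x^b$ into the trail before position $i^*$, making $D_{\setsize{\tau}}$ satisfied by $\alpha_{<i^*}$ and contradicting that it is falsified. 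Hence the step from $D_{m+1}$ to $D_m$ exists, and $C_{m+1}$ is a propagation at decision level $\dl(i^*)$. Moreover this step is a genuine resolution: if $x_{j_{m+1}} \notin D_{m+1}$ then $D_m = D_{m+1}$ would be asserting, contradicting maximality of $m$. So $D_m = \res(D_{m+1},C_{m+1})$ with pivot $x_{j_{m+1}}$; I would write $D_{m+1} = A \lor x_{j_{m+1}}^{1-b}$ and $C_{m+1} = B \lor x_{j_{m+1}}^{b}$ so that $D_m = A \lor B$, and the task becomes to exhibit a literal in $A \cap B$.

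The heart of the argument is a count of literals at the current decision level $\dl(i^*)$, using the routine fact that every $D_i$ in a conflict derivation is falsified by $\alpha_{\le i}$, together with the observation that every position before $i^*$ has decision level below $\dl(i^*)$. Since $D_m$ is asserting and falsified by $\alpha_{\le m}$, all of its literals lie on assigned variables, those on variables at levels below $\dl(i^*)$ are already falsified by $\alpha_{<i^*}$, so $\restrict{D_m}{\alpha_{<i^*}}$ is exactly the sub-clause of $D_m$ consisting of its level-$\dl(i^*)$ literals; being asserting, this is a single literal $y^c$, so $D_m$ has a \emph{unique} literal at level $\dl(i^*)$. Applying the same computation to $D_{m+1}$, which is falsified by $\alpha_{\le m+1}$ but \emph{not} asserting, shows $D_{m+1}$ has at least two level-$\dl(i^*)$ literals; one of them is the pivot literal $x_{j_{m+1}}^{1-b}$, so $A = D_{m+1}\setminus\{x_{j_{m+1}}^{1-b}\}$ still contains a level-$\dl(i^*)$ literal, which, being a literal of $D_m$, must be $y^c$. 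Hence $y^c \in A$.

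It remains to place $y^c$ in $B$ as well, and here I would invoke trail validity a second time. I claim $B = C_{m+1}\setminus\{x_{j_{m+1}}^{b}\}$ also contains a level-$\dl(i^*)$ literal: if it did not, then every literal of $B$ would be falsified already by $\alpha_{<i^*}$, so $\restrict{C_{m+1}}{\alpha_{<i^*}} = x_{j_{m+1}}^{b}$, and then validity at the last decision would force $x_{j_{m+1}}$ to be assigned before position $i^*$ — impossible, since it is assigned at position $m+1 > i^*$ and a valid trail assigns each variable at most once. So $B$ has a level-$\dl(i^*)$ literal, which again must coincide with $D_m$'s unique such literal $y^c$. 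Therefore $y^c \in A \cap B$, and the step producing $D_m$ is a merge.

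I expect this last paragraph to be the main obstacle: it is tempting to think that $D_m$ being asserting and $D_{m+1}$ not asserting already force a merge, but that only puts $y^c$ in $A$; one genuinely needs the ``no pending propagation at a decision'' content of trail validity to see that the reason clause $C_{m+1}$ carries a second literal at the current level, which is what drops $y^c$ into $B$. The remaining points — the precise behaviour of restrictions by $\alpha_{<i^*}$ versus $\alpha_{\le m}$, non-tautology of the clauses involved, and the edge cases surrounding $i^*$ — are bookkeeping and should be straightforward once the decision-level accounting is in place.
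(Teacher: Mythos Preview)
Your proof is correct and follows essentially the same approach as the paper: both arguments count literals at the last decision level in the two premises $D_{m+1}$ and $C_{m+1}$, observe that each contributes at least two such literals, that the two pivot literals disappear, and that since the asserting conclusion $D_m$ retains only one, the remaining two must coincide. Your treatment is in fact more careful than the paper's --- you explicitly justify $m<\setsize{\tau}$ and spell out the trail-validity step that forces $C_{m+1}$ to have a second literal at the current level, whereas the paper compresses this into the one-line assertion that ``every clause in the trail contains at least two literals at the same decision level it appears in'' --- but the underlying idea is identical.
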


\begin{proof}
  Let $D_{j} = \res(C_{j+1},D_{j+1})$ be the \ONE UIP.
  On the one hand, since every clause in the trail contains at least
  two literals at the same decision level it appears in, $C_{j+1}$
  contains two literals at the last decision level.
  On the other hand, any clause that is not in the trail also contains
  two literals at the last decision level, and in particular
  $D_{\setsize{\tau}}$. Since $\setsize{D_{i+1} \setminus D_i} \leq 1$
  and $D_{j+1}$ is not asserting, it also contains two literals at the
  last decision level.

  We accounted for 4 literals at the last decision level present in
  the premises of $D_j$, of which 2 are not present in the conclusion
  because they are the pivots. In order for $D_j$ to contain only one
  literal at the last decision level, the remaining two literals must
  be equal.
\end{proof}

\fi

\section{Proof Systems}
\label{sec:models}

We define our proof systems in terms of the \emph{input-structured} framework. Every resolution proof can be thought of as being input-structured if we consider it as a sequence of unit-length input resolutions and every clause as a lemma; it is when we impose restrictions on which clauses are permitted as lemmas that we obtain different proof systems. The diagram in Figure~\ref{fig:ps} can help keeping track of the proof systems.

\iffullversion\newcommand{\refprefixeccc}{-x}\else\newcommand{\refprefixeccc}{}\fi

\begin{figure}[ht]
    \tikzset{ps/.style={fill=black!5,rounded corners},execute at end node=\vphantom{fg}}
    \tikzset{algo/.style={fill=black!5,rounded corners},execute at end node=\vphantom{fg}}
    \tikzset{sim/.style={-Latex,draw=Blue}}
    \tikzset{sep/.style={-Latex,dashed,draw=Purple}}
    \begin{center}
      \begin{tikzpicture}[auto]
        \node[ps] (res) {Res};
        \node[ps,below = .7cm of res] (rma) {RMA};
        \node[ps,below = .7cm of rma] (srrma) {LRMA};
        \node[ps,left = of srrma] (rel) {REL};
        \node[ps,right = of srrma] (rml) {RML};
        \node[ps,below = .7cm of rml] (srrml) {LRML};
        \node[ps, below = 1.9cm of srrma] (greedy) {LREML};
        \node[algo, below = .7cm of greedy] (fuip) {\ONE{}UIP};
        \node[algo, below = 1.9cm of rel] (ass) {Asserting};
        \draw[sep] (res) to node {\ref{th:separation}} (rma);
        \draw[sep] (rma) to node[swap] {\ref{prop:srrml-rel\refprefixeccc}} (rel);
        \draw[sep] (rma) to node {\ref{prop:rma-srrma\refprefixeccc}} (srrma);
        \draw[sim] (rma) to (rml);
        \draw[sim] (rel) to (ass);
        \draw[sim] (rel) to (greedy);
        \draw[sep] (rml) to node {\ref{prop:rml-srrma-srrml\refprefixeccc}} (srrml);
        \draw[sep] (srrma) to node[swap] {\ref{prop:rml-srrma-srrml\refprefixeccc}} (srrml);
        \draw[sep] (srrml) to node {\ref{prop:srrml-rel\refprefixeccc}} (greedy);
        \draw[sim] (greedy) to (fuip);
        \draw[sim] (ass) to (fuip);
      \end{tikzpicture}
    \end{center}
    \caption[Relations between proof systems]{Relations between proof systems. A solid arrow
    \raisebox{0pt}[0pt][0pt]{\tikz[baseline]{
      \node[anchor=base] (a) {$A$};
      \node[right = 1.5em of a] (b) {$B$};
      \draw[sim] (a) to (b);
    }} indicates that $A$ simulates $B$ with no overhead.
    A dashed arrow
    \raisebox{0pt}[0pt][0pt]{\tikz[baseline]{
      \node[anchor=base] (a) {$A$};
      \node[right = 1.5em of a] (b) {$B$};
      \draw[sep] (a) to (b);
    }} indicates that $A$ simulates $B$ with no overhead, but $B$ requires linear overhead to simulate $A$. Statements proving separations are referenced.
  }
  \label{fig:ps}
\end{figure}

Andrews' definition of merge resolution~\cite{Andrews68Resolution} considers tree-like
proofs with the additional restriction that in every inference at least
one premise is an axiom or a merge. He also observes that such derivations can be made
input-structured.

\begin{observation}[\cite{Andrews68Resolution}]
  \label{obs:input-structure}
  A tree-like merge resolution derivation can be decomposed into an input-structured
  sequence where all the lemmas are merges.
\end{observation}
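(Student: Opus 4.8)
The plan is to induct on the number of inferences in the tree-like merge resolution derivation $\pi$ of a clause $C$ from $F$. The base case is immediate: if $\pi$ has no inferences then $C$ is an axiom, and the one-element sequence $C$ is already input-structured with no lemmas. So assume $C$ is the conclusion of a resolution inference. Note first that the merge restriction is inherited by every subderivation of $\pi$, since being an axiom is absolute and being a merge is a property of a single inference; this is what makes the induction go through.

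First I would isolate a \emph{spine}: a path $C = C_0, C_1, \ldots, C_k$ descending from the root to a leaf, such that each $C_{i-1}$ is the resolvent of $C_i$ with a ``side clause'' $B_i$ that is an axiom or a merge. This is possible precisely because the merge restriction guarantees that at every inference at least one premise is an axiom or a merge: at the root, let $B_1$ be such a premise and let $C_1$ be the other premise; if $C_1$ is itself the conclusion of an inference, repeat, always taking $B_i$ to be an axiom-or-merge premise and $C_i$ the remaining premise. Since $\pi$ is a finite tree, the process terminates at a leaf $C_k$, which is necessarily an axiom. Read from the leaf upward, $C_k, C_{k-1}, \ldots, C_0$ is an input (centipede) derivation whose successive side premises are $B_k, B_{k-1}, \ldots, B_1$, and whose starting clause $C_k$ is an axiom.

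Then I would handle the side clauses recursively. Any $B_i$ that is a merge is, being the conclusion of a resolution step, not an axiom, hence it is the root of a proper subderivation of $\pi$; this subderivation has strictly fewer inferences than $\pi$ (it omits at least the clauses of the spine), so by the induction hypothesis it decomposes into an input-structured sequence $\sigma_i$ whose lemmas are all merges and whose final clause is $B_i$, itself a merge. Concatenating the blocks $\sigma_i$ for the merge side clauses in any order, and then appending the single input derivation $C_k, \ldots, C_0$, yields a sequence that is input-structured: each $B_i$ used along the spine is available either as an axiom or, when it is a merge, as the lemma concluding the preceding block $\sigma_i$, and $C_k$ is an axiom. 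Its lemmas are exactly the lemmas internal to the $\sigma_i$ (merges by the induction hypothesis) together with the final clauses $B_i$ of the $\sigma_i$ (merges by construction); the only non-lemma clause is the overall conclusion $C = C_0$. Since $\pi$ is a tree, the spine and the subderivations rooted at the various $B_i$ are pairwise disjoint, so the resulting sequence uses exactly the clauses of $\pi$ and the decomposition incurs no blow-up.

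The argument is almost entirely bookkeeping, and I do not expect a genuine obstacle. The one point that needs a line of care is the choice of induction measure together with the remark that a merge is never an axiom, so that each merge side clause really does have a nonempty, strictly smaller subderivation on which the induction hypothesis applies; everything else (the base case, the inheritance of the merge restriction under taking subderivations, and the verification that the spine read bottom-up matches the definition of an input derivation) is routine.
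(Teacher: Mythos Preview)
Your argument is correct. The paper does not give its own proof of this observation; it is stated with attribution to Andrews and treated as essentially definitional (indeed, the paper immediately uses it as an alternative way to \emph{define} merge resolution). Your spine-and-recurse decomposition is exactly the natural argument: peel off an input centipede along the non-(axiom-or-merge) premises, recurse on the merge side clauses, and concatenate. The only point worth a remark is one you already flag, namely that a merge, being by definition a resolvent, is never a leaf, so the recursion is on a strictly smaller subtree; everything else is bookkeeping, and your observation that the tree structure makes the pieces disjoint (hence no blow-up) is a nice bonus that the paper does not bother to state.
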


This observation is key when working with such derivations, as is apparent in Sections~\ref{sec:simulation} and~\ref{sec:extra},
to the point that we use as an alternative way to
define merge resolution.

Andrews' main result is that the merge restriction does not affect tree-like resolution.
\begin{lemma}[{\cite[Lemma 5]{Andrews68Resolution}}]
  \label{lem:tree-to-input}
  If there is a tree-like resolution derivation of $C$ of length $L$
  where at most the root is a merge, then there is an input resolution
  derivation of some $C' \subseteq C$ of length at most $L$.
\end{lemma}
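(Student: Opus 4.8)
The plan is to induct on the length $L$. When $L = 1$ the clause $C$ is an axiom and the one-clause sequence $C$ is already an input derivation, so take $C' = C$.

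For $L > 1$, write the last inference as $C = \res(C^1, C^2)$ with pivot $x$, say $C^1 = D \lor x$ and $C^2 = E \lor \olnot{x}$, so that $C = D \lor E$; let the two subderivations have lengths $L_1$ and $L_2$, so $L_1 + L_2 + 1 = L$. Every clause in these subderivations is strictly below the root, hence not a merge, so the induction hypothesis applies to each and produces input derivations $\pi_1$ of some $C_1' \subseteq D \lor x$ of length at most $L_1$ and $\pi_2$ of some $C_2' \subseteq E \lor \olnot{x}$ of length at most $L_2$. If $x \notin C_1'$ then $C_1' \subseteq D \subseteq C$ and $\pi_1$ is the desired derivation, and symmetrically if $\olnot{x} \notin C_2'$; so assume $C_1' = D' \lor x$ and $C_2' = E' \lor \olnot{x}$ with $D' \subseteq D$ and $E' \subseteq E$. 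It remains to splice $\pi_1$ and $\pi_2$ into an input derivation of some $C' \subseteq D' \lor E' \subseteq C$ of length at most $L_1 + L_2 = L - 1$.

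I would isolate this as a \emph{splicing lemma}: given input derivations of $D' \lor x$ and of $E' \lor \olnot{x}$, of lengths $p$ and $q$, there is an input derivation of some $C' \subseteq D' \lor E'$ of length at most $p + q$. The construction takes $\pi_1$, whose last clause is $D' \lor x$, and replays the axiom-resolution steps of $\pi_2$ on top of it: working through the axioms that $\pi_2$ resolves against in order, at each step resolve the current clause against the next axiom on the same pivot $\pi_2$ used — unless that pivot literal is not present, in which case skip the step — and, at the first moment an axiom of $\pi_2$ contributes the literal $\olnot{x}$, instead resolve on $x$, so that the $x$ from $\pi_1$ and this $\olnot{x}$ annihilate. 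The invariant maintained is that the current clause is contained in $D' \lor B_i$, where $B_i$ is the matching clause of $\pi_2$, with $x$ present before the annihilation step and absent after it (and $\olnot{x}$ never present while $x$ is). Feeding the last clause $B_q = E' \lor \olnot{x}$ into the invariant gives a final clause inside $D' \lor E'$ — here one uses that there are no merges below the root, so $D$ and $E$, and hence $D'$ and $E'$, are disjoint, which prevents a literal of $D'$ from also being a surviving literal of $\pi_2$. The length is at most $p$ clauses from $\pi_1$ plus at most one new clause per step of $\pi_2$, so at most $p + q \le L - 1$; the base case $q = 1$ of the splicing lemma's own induction is the situation where $E' \lor \olnot{x}$ is itself an axiom and a single resolution step is appended.

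The step I expect to be the main obstacle is pushing the splicing construction through rigorously: checking that each replayed resolution is legal (the pivot literal occurs with the correct polarity and the resolvent is not a tautology), that $x$ is cancelled exactly once and is not resurrected even if $\olnot{x}$ enters and leaves the clauses of $\pi_2$ repeatedly, and that the no-merges-below-the-root hypothesis is invoked exactly where needed so that the output clause stays inside $D' \lor E'$. A point worth flagging is that the input derivations delivered by the induction are not guaranteed to be merge-free — merges may well appear inside $\pi_1$ and $\pi_2$ — so the argument must tolerate that; it is the absence of merges in the original \emph{tree}, not any property of the reconstructed input derivations, that is doing the work.
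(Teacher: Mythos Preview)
Your outer induction and the observation that both subtrees are merge-free are fine; the gap is entirely in the splicing. As you yourself flag, the input derivations $\pi_1,\pi_2$ returned by the hypothesis need not be merge-free, and this already kills the splicing lemma as a standalone statement: with $\pi_1\colon a\lor x,\ \olnot{a}\lor x\vdash x$ and $\pi_2\colon b\lor\olnot{x},\ \olnot{b}\lor\olnot{x}\vdash\olnot{x}$ one has $D'\lor E'=\bot$, yet the four axioms admit no input refutation at all. Worse, your replay construction fails even on merge-free inputs. Take $\pi_1$ to be the lone axiom $D'\lor x$ and $\pi_2$ to be the merge-free input chain with top $\olnot{x}\lor a$, resolved with $x\lor b$ on $x$ to get $a\lor b$, then with $\olnot{a}\lor\olnot{x}$ on $a$ to get $b\lor\olnot{x}$. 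Replaying the two side steps on top of $D'\lor x$ --- skipping the first since the pivot literal $\olnot{x}$ is absent, then resolving on $x$ at the second since $\olnot{a}\lor\olnot{x}$ contributes $\olnot{x}$ --- produces $D'\lor\olnot{a}$, not a subset of $D'\lor b$. The root cause is that your replay silently discards the top axiom $\olnot{x}\lor a$ of $\pi_2$; with only the three remaining axioms no subset of $D'\lor b$ is input-derivable.

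The idea you are missing, and which the paper supplies, is to strengthen the induction hypothesis: for \emph{every} leaf $E$ of the merge-free tree there is an input derivation with $E$ at the top of the centipede. One then works with the original subtree $\eta_2$ rather than the black-box $\pi_2$: trace the unique $\olnot{x}$-carrying path in $\eta_2$ down to a leaf $D\lor\olnot{x}$, delete $\olnot{x}$ along that path to get a merge-free tree $\eta_3$ deriving $B$, and invoke the strengthened hypothesis on $\eta_3$ demanding $D$ at the top. The stitching is then forced to succeed: resolve the output $A'\lor x$ of $\pi_1$ with the genuine axiom $D\lor\olnot{x}$ to obtain $A'\lor D$, and feed this in place of $D$ at the top of the second centipede, which only weakens each main-line clause by a subset of $A'$. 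The freedom to choose which leaf sits at the top --- together with excising $\olnot{x}$ from $\eta_2$ \emph{before} applying the hypothesis rather than trying to cancel it afterwards --- is precisely what your replay-in-order approach cannot provide.
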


\begin{theorem}[{\cite[Theorem 1]{Andrews68Resolution}}]
  \label{th:tree-to-merge}
  If there is a tree-like resolution derivation of $C$ of length $L$,
  then there is a tree-like merge resolution derivation of some
  $C' \subseteq C$ of length at most $L$.
\end{theorem}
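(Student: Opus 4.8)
The plan is to prove the statement by induction on the number $m$ of merge inferences occurring in the given tree-like resolution derivation $\pi$ of $C$. If at most the root of $\pi$ is a merge (in particular if $m=0$), then Lemma~\ref{lem:tree-to-input} already yields an input resolution derivation of some $C'\subseteq C$ of length at most $L$; since in an input derivation every inference has an axiom among its two premises, such a derivation is in particular a tree-like merge resolution derivation, and we are done. So we may assume $\pi$ has a merge distinct from its root.

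For the inductive step I would first isolate a \emph{topmost} merge: a clause $M$ produced by a merge inference such that no other clause of the subderivation $\rho$ of $\pi$ rooted at $M$ (that is, $M$ together with all clauses from which it is derived) is a merge. Such an $M$ exists by taking $\rho$ of minimum size, and $\rho$ is then a proper subderivation because $\pi$ has a merge distinct from its root. Now $\rho$ is a tree-like derivation of $M$ in which at most the root is a merge, so Lemma~\ref{lem:tree-to-input} gives an input---hence tree-like merge resolution---derivation $\rho'$ of some $M'\subseteq M$ with $\setsize{\rho'}\le\setsize{\rho}$. Splice $\rho'$ into $\pi$ in place of $\rho$ and propagate the weakening $M'\subseteq M$ down towards the root along the path from $M$: at each inference on this path the premise coming from above has shrunk, so if it no longer contains the pivot we delete the inference together with the now-redundant sibling subderivation and keep that smaller premise, and otherwise we resolve on the same pivot as before. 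No clause is ever added and every clause can only shrink, so the result is a tree-like resolution derivation $\pi^\ast$ of some $C^\ast\subseteq C$ with $\setsize{\pi^\ast}\le L$. Moreover $\pi^\ast$ has strictly fewer merges than $\pi$: the occurrence of $M$ is gone, $\rho'$ contributes none, and every other merge of $\pi$ occurs below $M$ and was either deleted or turned into an ordinary resolvent by the weakening, since shrinking a clause cannot create a shared literal where there was none. Thus the inductive hypothesis applies to $\pi^\ast$ and yields a tree-like merge resolution derivation of some $C'\subseteq C^\ast\subseteq C$ of length at most $L$.

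I expect the weakening propagation and its bookkeeping to be the delicate part: one must check that replacing a subderivation by a derivation of a subset of its conclusion preserves tree-likeness, does not increase length, and---crucially---introduces no new merge inferences, so that the induction measure genuinely decreases. This is also why I would avoid the more direct recursion on the two subderivations feeding the root of $\pi$: there one is forced to close with a resolution step on the root's pivot whose two premises need not be axioms or merges, and repairing that step appears to require reordering inferences in a way that risks blowing up the length, whereas peeling off topmost merges via Lemma~\ref{lem:tree-to-input} sidesteps this entirely.
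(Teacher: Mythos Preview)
Your overall strategy---isolate a subtree $\rho$ whose only merge is its root $M$, apply Lemma~\ref{lem:tree-to-input}, splice, and induct on the number of merges---is the same as the paper's. The gap is your claim that ``$\rho'$ contributes none'' to the merge count of $\pi^\ast$. Lemma~\ref{lem:tree-to-input} only guarantees that $\rho'$ is an \emph{input} derivation, and input derivations can perfectly well contain merge steps (resolving axioms $x\lor y$ and $\olnot{x}\lor y$ already produces one). If $\rho'$ carries one or more merges, replacing $\rho$ by $\rho'$ need not decrease the total merge count, and your induction stalls; indeed when $\rho$ is itself input---say just the two axioms above and their merge resolvent $y$---you get $\rho'=\rho$ and $\pi^\ast=\pi$. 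Your remark that every other merge is ``deleted or turned into an ordinary resolvent'' is also inaccurate, since merges in branches disjoint from $\rho$ and from the $M$-to-root path are untouched, but that alone is harmless for the count.

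The paper repairs exactly this point. After producing the input derivation $\rho'$ it locates the \emph{last} merge $D$ inside $\rho'$ and splits $\rho'$ into the prefix $\omega'$ deriving $D$ and the merge-free suffix $\omega$ from $D$ to $M'$. Only $\omega$ is spliced into $\pi$, with $D$ declared a fresh axiom; now the replacement removes the merge at $M$ and introduces none, so the merge count genuinely drops and induction applies. After the inductive call returns a merge-resolution derivation using $D$ as an axiom, the axiom $D$ is expanded back to $\omega'$, which preserves merge-resolution form because $D$ is a merge and $\omega'$ is input. Insert this splitting step and your argument goes through.
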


If we lift the tree-like restriction from the input-structured view of merge resolution proofs we obtain a proof
system between tree- and DAG-like resolution where clauses can be reused (\ie have
outdegree larger than $1$) if and only if they are merges or, in other
words, lemmas in the input-structured decomposition.
We call this proof system
Resolution with Merge Lemmas
and refer to it with the acronym RML.
\begin{definition}
  \label{def:rml}
A RML derivation is an input-structured sequence of unit resolution derivations where all lemmas are merges.
\end{definition}

CDCL refutations produced by solvers that use the \ONE{}UIP learning
scheme are in RML form, as a consequence of
Lemma~\ref{lem:1uip-is-merge}.
We can also generalize RML to allow reusing clauses that contain a
merge anywhere in their derivation. We call this proof system
Resolution with Merge Ancestors, or RMA for short.
\begin{definition}
  \label{def:rma}
  A RMA derivation is an input-structured sequence of unit resolution derivations where all derivations but the last contain a merge.
\end{definition}

Note that by Lemma~\ref{lem:tree-to-input} it does not matter if we
require the sequence of derivations of an RMA derivation to be input derivations or if we
allow general trees.
In fact, our lower bound results hold for a more general proof system where we only ask
that every
clause with outdegree larger than $1$ has an ancestor that is a merge.
Such proof system does not have a simple input structure, but can rather
be thought of as a sequence of tree-like resolution derivations whose
roots are merges, followed by a standard resolution derivation using
the roots of the previous derivations as axioms.

To make the connection back to CDCL, we can define a proof system called Resolution with Empowering Lemmas that captures CDCL refutations produced by solvers that use any asserting learning scheme or 1-empowering learning scheme.
\begin{definition}
  \label{def:rel}
  Let $C_1,\ldots,C_{L-1}$ be the lemmas of an input-structured sequence of unit derivations. The sequence is a Resolution with Empowering Lemmas (REL) derivation of a formula $F$ if $C_i$ is 1-empowering with respect to $F \union \set{C_j : j<i}$ for all $i\in[1,L-1]$.
\end{definition}

It follows from Lemmas~\ref{lem:asserting-empowering}
and~\ref{lem:empowering-merge}
that such refutations are in RMA form.

\begin{observation}
  \label{def:rel-rma}
  A REL derivation is a RMA derivation.
\end{observation}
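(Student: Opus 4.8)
The plan is to unpack the two definitions and show that the structural requirement of RMA (Definition~\ref{def:rma}) — namely, that every derivation in the input-structured sequence except the last contains a merge — follows directly from the 1-empowerment hypothesis on the lemmas in a REL derivation (Definition~\ref{def:rel}). The key observation is that REL and RMA share the same underlying skeleton: both are input-structured sequences of unit resolution derivations, and in both cases it is the lemmas (the last clause of each derivation but the last) that carry a special property. So the only work is to translate ``1-empowering'' into ``its derivation contains a merge''.

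First I would fix a REL derivation of $F$ with lemmas $C_1,\ldots,C_{L-1}$, so that for each $i\in[1,L-1]$ the clause $C_i$ is 1-empowering with respect to $F_i := F \union \set{C_j : j<i}$. I want to conclude that the $i$-th unit resolution derivation, whose conclusion is $C_i$ and whose axioms lie in $F_i$, contains a merge step. This is essentially the content of Lemma~\ref{lem:empowering-merge}: its ``in particular'' clause says that if a clause is 1-empowering then it contains a merge in its derivation. I would apply that lemma with the formula taken to be $F_i$ and the derivation taken to be the $i$-th unit resolution derivation in the sequence. One subtlety to check carefully is that Lemma~\ref{lem:empowering-merge} as stated speaks of a single resolution step on absorbed premises $A\lor x$, $B\lor\olnot{x}$; the ``in particular'' version promotes this to: along any resolution derivation of a 1-empowering clause, walking backward from the root, one eventually reaches a step whose two premises are both absorbed by $F_i$ (the axioms certainly are, being in $F_i$), and at that step the resolvent is 1-empowering, hence by the base case a merge. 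I would spell out that this argument applies verbatim to the $i$-th derivation because its axioms are exactly the clauses of $F_i$, all of which are absorbed by $F_i$. Since Lemma~\ref{lem:asserting-empowering} is already in hand, I note in passing that this also covers asserting learning schemes, justifying the parenthetical remark preceding the observation, though strictly for the observation itself only 1-empowerment is needed.

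For the last derivation of the sequence, RMA imposes no merge requirement, so there is nothing to prove there; and the input-structured shape is identical in the two definitions, so no translation is needed for the skeleton. Putting these together: every derivation but the last contains a merge, and the sequence is input-structured, which is exactly Definition~\ref{def:rma}. The main obstacle, such as it is, is purely expository: making precise that the ``in particular'' form of Lemma~\ref{lem:empowering-merge} is being used relative to the running formula $F_i$ rather than the original $F$, and that unit resolution derivations are a special case of the derivations to which that lemma applies. There is no genuine combinatorial difficulty — the observation is a bookkeeping consequence of the preceding lemmas — so I would keep the proof to a couple of sentences citing Lemmas~\ref{lem:asserting-empowering} and~\ref{lem:empowering-merge}.
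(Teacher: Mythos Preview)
Your proposal is correct and matches the paper's approach exactly: the paper does not give a formal proof but simply states that the observation follows from Lemmas~\ref{lem:asserting-empowering} and~\ref{lem:empowering-merge}, which is precisely what you do (and you correctly note that only Lemma~\ref{lem:empowering-merge} is strictly needed for the observation itself, Lemma~\ref{lem:asserting-empowering} being relevant only for the preceding remark about asserting schemes). Your extra care in pointing out that the lemma must be applied relative to $F_i$ rather than $F$ is a helpful clarification that the paper leaves implicit.
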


It might seem more natural to work with the REL proof system rather than its merge-based counterparts, since REL is defined exactly through the 1-empowering property. However, while the merge property is easy to check because it is local to the derivation at hand, we can only determine if a clause is 1-empowering by looking at the full history of the derivation, in particular what the previous lemmas are. This makes REL too cumbersome to analyse.
Furthermore, CDCL refutations produced apply a clause minimization scheme on top of an asserting clause might not be in REL form, but they are still in RMA form.

A further property of input derivations produced by a CDCL solver is that once a variable is resolved, it does not appear later in the derivation.

\begin{definition}
  \label{def:srrml-srrma}
  A resolution derivation $\eta$ is strongly regular if for every resolution step $i$, the pivot variable $x_i$ is not part of the support of any clause $C_i\in\eta[i,L]$.
  A sequence of derivations is locally regular if every derivation in the sequence is strongly regular.
  A LRML derivation (resp. LRMA) is a locally regular RML derivation (resp. RMA).
\end{definition}

Finally we can consider derivations that have empowering, merge lemmas and are locally regular. These still include \ONE{}UIP proofs.

\begin{definition}
  \label{def:greedy}
  A LREML derivation is a derivation that is both LRML and REL.
\end{definition}

It follows from the simulation of resolution by CDCL~\cite{PD11OnThePower,AFT11ClauseLearning} that all (DAG-like) proof systems we defined
polynomially simulate standard resolution.
In Section~\ref{sec:simulation} we make this simulation more precise and
prove that the simulation overhead can be made linear, and in Section~\ref{sec:separation} that the simulation is optimal because there exist formulas that have resolution refutations of linear length but require RMA
refutations of quadratic length.
\section{Simulation}
\label{sec:simulation}

As an auxiliary tool to simulate resolution in RML we define the
input-resolution closure of a set $D$, denoted
$\closure(D) = \setdescr{C}{\exists C'\subseteq C,\,D\vdash_i C'}$, as
the set of clauses derivable from $D$ via input resolution plus
weakening. It is well-known that, since input resolution derivations
can be assumed to be strongly regular without loss of generality, we can also
assume them to be at most linear in the number of variables.

\begin{observation}
  \label{obs:input-short}
  If $D$ is a CNF formula over $n$ variables and $C \in \closure(D)$ then
  there is a strongly regular input resolution derivation of some $C' \subseteq C$
  from $D$ of length at most $n$.
\end{observation}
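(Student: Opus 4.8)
The plan is to start from the definition of $\closure(D)$, which gives us some input resolution derivation of a clause $C' \subseteq C$ from $D$, and then massage it into one that is both strongly regular and short. First I would invoke the standard fact that any input resolution derivation can be made strongly regular without increasing its length: in an input derivation the ``spine'' clauses $D_L, D_{L-1}, \dots$ are each obtained by resolving the previous spine clause against an axiom, so if some pivot variable $x$ were resolved twice, the second resolution step against an axiom containing $x$ could be bypassed — the first resolution already removed $x$ from the spine, and once removed it cannot reappear along the spine since every later premise is an axiom of $D$ (not necessarily a fresh one, but the spine clause only loses literals) — allowing us to splice out the redundant step and obtain a shorter derivation of a subclause of $C'$, hence still a subclause of $C$. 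Iterating, we reach a strongly regular input derivation.

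Once the derivation is strongly regular, each of its resolution steps eliminates a distinct pivot variable, and by strong regularity that variable never returns to the support of any later clause; since $D$ has only $n$ variables, there can be at most $n$ such steps, so the derivation has length at most $n$ (counting the clauses, or the inferences — whichever normalisation the paper uses; the bound is $n$ either way up to the usual off-by-one, and $n$ suffices). This gives a strongly regular input resolution derivation of some $C'' \subseteq C' \subseteq C$ from $D$ of length at most $n$, which is exactly the statement.

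The only real content here is the regularisation step, and its only subtlety is making sure that when we splice out a redundant resolution we still end up with an \emph{input} derivation of a \emph{subclause} of the original target (not a strictly unrelated clause) — this is the standard argument that input resolution is closed under making derivations regular, and it is precisely why the closure is defined with the ``$\exists C' \subseteq C$'' slack and the ``plus weakening'' clause. I expect the main obstacle, such as it is, to be bookkeeping: verifying that removing the step does not break the property that every non-root clause of the spine is the resolvent of its predecessor with an axiom of $D$. Everything else (the counting bound from regularity, the fact that $\restrict{\phantom{x}}{}$-style simplifications are not needed) is immediate.
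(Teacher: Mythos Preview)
Your high-level plan---regularise first, then count pivots---is exactly what the paper does; the paper simply asserts that input derivations can be made strongly regular ``without loss of generality'' and does not justify this further. Where you go beyond the paper is in trying to sketch that justification, and that sketch contains a genuine error. You claim that along the spine of an input derivation ``the spine clause only loses literals'', so that a pivot variable, once eliminated, cannot return. This is false: when the spine clause is resolved against an axiom $A$, the result contains every literal of $A$ other than the pivot, so the spine can \emph{gain} literals. For instance, from axioms $x\lor y$, $\olnot{x}\lor z$, $\olnot{z}\lor x$, $\olnot{x}\lor w$ the input spine runs $x\lor y \to y\lor z \to y\lor x \to y\lor w$ and uses $x$ as a pivot twice, the second occurrence having been reintroduced by the axiom $\olnot{z}\lor x$; here the second $x$-step is not redundant and cannot simply be spliced out.

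The clean route to regularity goes through the equivalence of input and unit resolution. If $D\vdash_i C'$ for some $C'\subseteq C$, then $D\cup\{\olnot{\ell}:\ell\in C'\}$ is refutable by unit propagation; unit propagation assigns each variable at most once, so the resulting unit refutation has at most $n$ steps and is strongly regular, and translating it back yields a strongly regular input derivation of some $C''\subseteq C'$. This is the ``well-known'' fact the paper is invoking, and it is what you should appeal to in place of the splicing argument.
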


Combining Theorem~\ref{th:tree-to-merge} with the idea that in order
to simulate a resolution derivation we do not need to generate each
clause, but only do enough work so that in the following steps we can
pretend that we had derived
it~\cite{PD11OnThePower,AFT11ClauseLearning}, we can prove that merge
resolution simulates resolution with at most a multiplicative linear
overhead in the number of variables.

\begin{theorem}
  \label{th:merge-sim-res}
  If $F$ is a CNF formula over $n$ variables that has a resolution
  refutation of length $L$ then it has a RML refutation of length
  $\bigoh{nL}$.
\end{theorem}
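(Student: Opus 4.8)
The plan is to follow the strategy of Pipatsrisawat--Darwiche and Atserias--Fichte--Thurley adapted to the input-structured view of merge resolution, using Andrews' theorems as black boxes. Fix a resolution refutation $\pi = C_1, \ldots, C_L = \bot$ of $F$. The key idea is that we process the clauses of $\pi$ in order, and instead of literally deriving each $C_i$ as a merge lemma, we do just enough work so that $C_i$ becomes \emph{absorbed} by the set of merge lemmas learned so far. Concretely, we maintain a growing clause database $D_0 = F \subseteq D_1 \subseteq \cdots$, where $D_i \setminus D_{i-1}$ consists of the merge lemmas added while handling $C_i$, and we preserve the invariant that $C_i \in \closure(D_i)$ (equivalently, some subclause of $C_i$ is input-derivable from $D_i$). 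Since $C_L = \bot$, achieving the invariant for $i=L$ yields a subclause of $\bot$, i.e.\ $\bot$ itself, derivable by input resolution from the final database, which concatenated with all the learned merge lemmas gives a RML refutation.

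The inductive step is the heart of the argument. Suppose $C_i = \res(C_j, C_k)$ with $j,k < i$, so by the invariant some $C_j' \subseteq C_j$ and $C_k' \subseteq C_k$ are input-derivable from $D_{i-1}$. Resolving (or weakening, if the pivot is absent) $C_j'$ and $C_k'$ gives a tree-like resolution derivation of some $C_i' \subseteq C_i$ from $D_{i-1}$ of small length; more carefully, I will build a tree-like derivation of a subclause of $C_i$ whose length is $\bigoh{n}$, using Observation~\ref{obs:input-short} to bound the two input subderivations by $n$ each. Now apply Theorem~\ref{th:tree-to-merge} to convert this tree-like derivation into a tree-like \emph{merge} resolution derivation of some $C_i'' \subseteq C_i' \subseteq C_i$ of length at most $\bigoh{n}$, and by Observation~\ref{obs:input-structure} decompose it into an input-structured sequence whose lemmas are all merges. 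We append these lemmas to the database, obtaining $D_i$, and the last clause $C_i'' \subseteq C_i$ is now input-derivable from $D_i$ (it is literally the root), so $C_i \in \closure(D_i)$ and the invariant is restored. Summing over the $L$ steps, the total number of clauses added is $\bigoh{nL}$.

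The main obstacle I anticipate is bookkeeping around the fact that what we carry forward is only a \emph{subclause} $C_i'' \subseteq C_i$, not $C_i$ itself, and ensuring this is actually \emph{helpful} rather than harmful: when later we want to resolve $C_i$ with something on variable $x$, the subclause $C_i''$ might already have lost $x$ (in which case a weakening step suffices, which is fine since $\closure$ includes weakening) or might have lost some other literal (also fine, since a smaller clause is logically stronger and $\closure$ is monotone upward). So the subclause phenomenon is benign, but one must state the invariant using $\closure$ (i.e.\ "some subclause is input-derivable") precisely for this to go through, and one must check that Theorem~\ref{th:tree-to-merge}'s output subclause does not break a future step — it cannot, by monotonicity of $\closure$ and of logical implication. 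A secondary point to get right is the length accounting: Theorem~\ref{th:tree-to-merge} preserves length, and the pre-merge tree-like derivation has length $\bigoh{n}$ because it is two input derivations (each $\le n$ by Observation~\ref{obs:input-short}) joined at a single resolution node, so each of the $L$ stages contributes $\bigoh{n}$ lemmas and the final input refutation from $D_L$ also has length $\bigoh{n}$, giving $\bigoh{nL}$ overall.
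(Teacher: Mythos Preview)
Your proposal is correct and follows essentially the same approach as the paper: maintain a growing database $D_t$ with the invariant $\pi[1,t]\subseteq\closure(D_t)$, at each step use Observation~\ref{obs:input-short} to get short input derivations of subclauses of the two premises, join them into a tree of size $\bigoh{n}$, apply Theorem~\ref{th:tree-to-merge} and then Observation~\ref{obs:input-structure} to extract merge lemmas, and add those to the database. The only cosmetic difference is that the paper handles the ``pivot missing from a subclause'' case by first checking whether $C\in\closure(D_t)$ already (in which case resolvability is guaranteed in the remaining case), whereas you phrase it as ``resolve or weaken''; both are fine.
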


\begin{proof}
  Let $\pi=(C_1,\ldots,C_L)$ be a resolution refutation. We construct a sequence of sets $D_0,\ldots,D_L$ with the following properties.

  \begin{enumerate}
  \item $D_t \setminus F$ is the set of lemmas in a RML derivation of length at most $(2n+1)t$.
  \item $\pi[1,t] \subseteq \closure(D_t)$.
  \end{enumerate}

  This is enough to prove the theorem: since $\bot \in D_t$ we can
  obtain $\bot$ from $D_t$ in length $n$, so the total length of the
  refutation is $(2n+1)L+n$.

  We build the sets by induction, starting with $D_0 = F$.
  Assume we have built $D_t$ and let $C = C_{t+1} = \res(A,B)$ with
  $A,B \in \pi[1,t]$. If $C \in \closure(D_t)$ we set $D_{t+1}=D_t$
  and we are done. Otherwise, by induction we have
  $A,B \in \closure(D_t)$, therefore by
  Observation~\ref{obs:input-short} there are input resolution
  derivations of $A' \subseteq A$ and $B' \subseteq B$ of length at
  most $n$. Since neither $A' \vDash C$ nor $B' \vDash C$, $A'$ and
  $B'$ can be resolved and therefore there is a tree-like derivation
  $\eta$ of $C' \subseteq C$ from $D_t$ of length at most $2n+1$. By
  Theorem~\ref{th:tree-to-merge} there is a tree-like merge resolution derivation
  $\eta'$ of $C''\subseteq C$ from $D_t$ of length at most $2n+1$. By
  Observation~\ref{obs:input-structure} the derivation $\eta'$ can be decomposed into a sequence of input derivations of total length at most $2n+1$.
  Let $E$ be the lemmas in that sequence and set
  $D_{t+1} = D_t \union E$. We have that
  $C \in \closure(F \union E) \subseteq \closure(D_{t+1})$, and that we can
  obtain $E$ from $D_t$ in at most $2n+1$ steps. Thus $D_{t+1}$ has
  all the required properties.
\end{proof}

We can be a bit more precise with the description of the simulation if
we look at the structure of $\eta$ before applying
Theorem~\ref{th:tree-to-merge}. Let $A_M$ and $B_M$ be the last merges
in the input derivation of $A'$ and $B'$ respectively, and let
$E=\set{A_M,B_M}$.

Now consider the fragment of the input derivation of $A'$ from $A_M$
to $A'$, analogously with $B'$. We have a tree-like derivation of $C'$
where at most the root is a merge, therefore we can apply
Lemma~\ref{lem:tree-to-input} directly instead of
Theorem~\ref{th:tree-to-merge} and obtain an input resolution
derivation of $C'' \subseteq C$ from $E \union F$.

If we also make sure that the input derivations of $A'$ and $B'$ are strongly regular, we have that LRML can also simulate resolution with the same $\bigoh{n}$ overhead as RML.

An analogous result can be obtained for LREML from the following lemma.

\begin{lemma}[\cite{PD11OnThePower}]
\label{lem:absorbed-unit} If $F$ absorbs $A \lor x$ and $B \lor \olnot{x}$, then $F \vdash_i C' \subseteq A \lor B$.
\end{lemma}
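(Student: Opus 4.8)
The plan is to translate the absorption hypothesis into two unit-propagation facts, glue them into a single unit refutation of $F$ augmented by the negated literals of $A \lor B$, and then convert that unit refutation into an input derivation from $F$ via the standard correspondence between unit and input resolution.

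First I would unpack what absorption buys us. Write $\olnot{D}$ for the set of unit clauses falsifying the literals of a clause $D$. Since $F$ absorbs $A \lor x$, the clause $A \lor x$ is not $1$-empowering with respect to $F$, so applying the defining condition to the literal $x \in A \lor x$ shows that unit propagation from $F \land \olnot{A}$ derives either $x$ or $\bot$. In either case $F \land \olnot{A} \land \olnot{x} \vdash_1 \bot$, i.e. $F \land \olnot{A \lor x} \vdash_1 \bot$. Symmetrically, absorption of $B \lor \olnot{x}$ gives $F \land \olnot{B} \land x \vdash_1 \bot$, i.e. $F \land \olnot{B \lor \olnot{x}} \vdash_1 \bot$.

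Next I would show $F \land \olnot{A \lor B} \vdash_1 \bot$. Propagate $F \land \olnot{A} \land \olnot{B}$ to a fixpoint. Since adding unit clauses only enables further propagations and unit propagation is confluent, the fixpoint contains everything derivable from $F \land \olnot{A}$; hence by the first fact it either reaches $\bot$ (and we are done) or it derives $x$. In the latter case the fixpoint contains $F$, $\olnot{B}$, and $x$, so by the second fact propagation reaches $\bot$. Thus $F \land \olnot{A \lor B}$ has a unit refutation.

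Finally I would pass from this unit refutation to an input resolution derivation from $F$ of some $C' \subseteq A \lor B$; this is the classical half of the argument (see \cite{PD11OnThePower}). Rewrite the unit refutation as an input refutation of $F \land \olnot{A \lor B}$, and observe that every resolution step against one of the extra unit axioms $\olnot{\ell}$ with $\ell \in A \lor B$ merely deletes the literal $\ell$ from the current clause. Suppressing all such steps and carrying the deleted literals along turns the refutation into an input derivation that uses only axioms of $F$ and whose final clause is contained in $A \lor B$. I expect the bookkeeping of this last step to be the main obstacle: one has to check that reinstating a deleted literal never clashes with a later pivot, and to dispatch the degenerate case in which the input refutation is supported entirely on the extra unit axioms — which forces $A \lor B$ to be a tautology — where the statement is checked directly.
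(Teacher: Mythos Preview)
The paper does not prove this lemma itself; it is quoted from \cite{PD11OnThePower} and invoked without proof in Corollary~\ref{cor:emp-sim-res}. So there is no in-paper argument to compare against.

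Your proof is the standard one and is correct. The bookkeeping you flag in the last step does go through: build the input refutation backward from the unit-propagation sequence $u_1,\ldots,u_m$ and the conflict clause, so that at step $i$ the running clause is contained in $\{\olnot{u_1},\ldots,\olnot{u_{i-1}}\}$. A suppressed step (resolving against some $\olnot{\ell}$ with $\ell\in A\lor B$) occurred at some position $j$ with $u_j=\olnot{\ell}$, so the literal you now carry is $\olnot{u_j}$ with $j>i$. Since unit propagation never repeats a literal nor derives complementary literals, this carried $\olnot{u_j}$ can neither coincide with the pivot variable $u_i$ nor clash with any literal $\olnot{u_s}$ ($s<i$) introduced from a side clause in $F$; hence no tautology arises and every remaining resolution step stays valid. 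The degenerate case you flag, where $A\lor B$ is a tautology, is genuine but irrelevant to the paper's only use of the lemma, since there $A\lor B$ is a resolvent in a resolution proof and is non-tautological by convention.
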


\begin{corollary}
  \label{cor:emp-sim-res}
  If $F$ is a CNF formula over $n$ variables that has a resolution
  refutation of length $L$ then it has a LREML refutation of length
  $\bigoh{nL}$.
\end{corollary}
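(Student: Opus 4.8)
The plan is to mimic the proof of Theorem~\ref{th:merge-sim-res} almost verbatim, keeping track of two invariants along the resolution refutation $\pi = (C_1,\ldots,C_L)$: that $D_t\setminus F$ is the set of lemmas of an LREML derivation of length $\bigoh{nt}$, and that $\pi[1,t]\subseteq\closure(D_t)$. The only new issue is that the lemmas we add must be simultaneously (i) merges, (ii) 1-empowering with respect to the clauses accumulated so far, and (iii) produced by a locally regular sequence of unit derivations. The first and third we already know how to arrange from the remark following the proof of Theorem~\ref{th:merge-sim-res}; the second is exactly what Lemma~\ref{lem:absorbed-unit} is designed to supply.

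First I would set $D_0 = F$ and, at step $t+1$, write $C = C_{t+1} = \res(A,B)$ with $A,B\in\pi[1,t]\subseteq\closure(D_t)$. If $C\in\closure(D_t)$ we do nothing. Otherwise, I would distinguish lemmas already absorbed by $D_t$ from those that are not. Walk through the input derivations of $A'\subseteq A$ and $B'\subseteq B$ guaranteed by Observation~\ref{obs:input-short}; whenever an intermediate clause is already absorbed by (the current) $D_t$ we discard the part of the derivation producing it, and whenever we reach a clause $E$ that is \emph{not} absorbed, it is 1-empowering by definition, and by Lemma~\ref{lem:empowering-merge} it is a merge — so we are entitled to add it as an LREML lemma. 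Concretely, I expect to add at most two new lemmas per step (the last non-absorbed merges $A_M, B_M$ on the two branches, exactly as in the refined analysis after Theorem~\ref{th:merge-sim-res}), each derivable from $D_t$ by a strongly regular input (hence unit, after the standard conversion) derivation of length at most $n$; then Lemma~\ref{lem:tree-to-input} applied to the strongly regular fragment from $A_M$ (resp. $B_M$) to $A'$ (resp. $B'$) yields a short input derivation of $C''\subseteq C$ from $E\union F$, so $C\in\closure(D_{t+1})$. The length bookkeeping gives $\bigoh{n}$ per step and $\bigoh{nL}$ total, with the final $n$ steps deriving $\bot$ from $\bot\in D_L$.

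The main obstacle is the interaction between 1-empowerment and the accumulation of lemmas: 1-empowerment is defined relative to $F\union\set{C_j : j<i}$, so whether a candidate lemma is 1-empowering depends on everything added before it, and — as the authors stress — this property is not local and not closed under restriction. The subtle point is that when we choose to \emph{skip} a clause on an input branch because it is already absorbed, we must be sure the skip is legitimate: we invoke Lemma~\ref{lem:absorbed-unit} to know that an absorbed resolvent of two absorbed clauses is still derivable by input resolution from the current $D_t$, so that $\closure$ is not lost even though we are not literally adding that clause as a lemma. The other delicate bit is ensuring local regularity is preserved while splicing: each added lemma comes with its own strongly regular unit derivation, and since LREML only asks each derivation in the input-structured sequence to be strongly regular (not the global proof), concatenating them is harmless. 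Once these two points are nailed down, the argument is a routine adaptation of Theorem~\ref{th:merge-sim-res} with Lemma~\ref{lem:absorbed-unit} plugged in at the step where we decide which clauses to promote to lemmas.
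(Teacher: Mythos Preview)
Your approach is essentially the paper's: both maintain the invariant $\pi[1,t] \subseteq \closure(D_t)$ from Theorem~\ref{th:merge-sim-res} and use Lemma~\ref{lem:absorbed-unit} together with Lemma~\ref{lem:empowering-merge} to locate clauses that are simultaneously 1-empowering and merges. The paper differs in one organisational point: it does not attempt to bound the number of lemmas learned per outer step by two. Instead it runs an explicit inner loop $D_t^0, D_t^1, \ldots$, at each stage learning the \emph{first} 1-empowering clause on either branch, and maintains the invariant that (cost of deriving $D_t^j$ from $D_t$) plus (cost of deriving $A', B'$ from $D_t^j$) is at most $2n$.

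Your ``at most two lemmas'' claim does not actually follow from the walk-through you describe: after you add the first non-absorbed clause $E$ on a branch, the very next clause may again be non-absorbed relative to $D_t \cup \{E\}$, hence again a merge by Lemma~\ref{lem:empowering-merge}, forcing you to add it too, and so on. The claim \emph{can} be made to work if you genuinely learn only the \emph{last} non-absorbed merge $A_M$ on each branch, but then you owe the argument that every clause after $A_M$ becomes absorbed once $A_M$ is added --- an induction using the contrapositive of Lemma~\ref{lem:empowering-merge} on the non-merge steps, plus the fact that any merge after $A_M$ was already absorbed by choice of $A_M$. You have not given this argument. Relatedly, your appeal to Lemma~\ref{lem:tree-to-input} for the final step is shaky, since the fragment from $A_M$ to $A'$ may contain (absorbed) merges; the cleaner route, implicit in the paper's loop, is to note that once $A'$ and $B'$ are both absorbed, Lemma~\ref{lem:absorbed-unit} directly gives $C \in \closure(D_{t+1})$. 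Either way the $\bigoh{n}$-per-step bookkeeping survives, so these are gaps in justification rather than in the approach.
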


\begin{proof}
  The proof follows the general structure of
  Theorem~\ref{th:merge-sim-res}, except that we use a sequence of
  steps $D_{t}^j$ in order to construct $D_{t}$. Our induction
  hypothesis is that $D_{t}^{j}$ can be derived from $D_t$ in $p$
  inference steps in LREML, and that $A'$ and $B'$ can be derived from $D_{t}^{j}$ in $q$ steps,
  with $p+q \leq 2n$.

  The base case $D_{t}^0=D_{t}$ is trivial.

  For the inductive case, assume that the input derivations leading to
  $A'$ and $B'$ are strongly regular without loss of generality.  By
  Lemma~\ref{lem:absorbed-unit} either $A'$ or $B'$ is 1-empowering,
  say $A'$. Let $C$ be the first 1-empowering clause in the derivation
  of $A'$. By Lemma~\ref{lem:empowering-merge} $C$ is a merge,
  therefore we can take $D_{t}^{j+1}=D_{t}^{j}\union\set{C}$.
\end{proof}

\section{Separation}
\label{sec:separation}

We prove the following separation between standard resolution and RMA.

\begin{theorem}
  \label{th:separation}
  There exists a family of formulas $F_n$ over $\bigoh{n\log n}$ variables and
  $\bigoh{n\log n}$ clauses that have resolution refutations of length
  $\bigoh{n\log n}$ but every RMA refutation requires length
  $\bigomega{n^2 \log n}$.
\end{theorem}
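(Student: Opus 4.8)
The plan is to build a formula family $F_n$ for which every efficient resolution refutation is forced to \emph{reuse} one fixed intermediate clause (really a fixed family of $\bigtheta{\log n}$ unit clauses) linearly many times, while arranging that this clause provably does not arise as, nor descend from, any cheap merge. An RMA proof is then stuck: it may give a clause outdegree larger than $1$ only if that clause has a merge among its ancestors, so instead of sharing the work it must re-derive the reused clause from scratch once per use, at cost $\bigtheta{n\log n}$ each, for total length $\bigomega{n^2\log n}$. This is precisely the $\bigtheta{n}$ overhead that the simulation of Theorem~\ref{th:merge-sim-res} shows is unavoidable, so up to constants the construction is as bad as possible.

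\textbf{The construction.} $F_n$ is a pebbling-style formula in which each ``pebble'' is a $\bigtheta{\log n}$-bit register: the underlying graph is a path $v_1\to\dots\to v_m$ of $m=\bigtheta{n}$ nodes whose sink $v_m$ fans out to $\bigtheta{n}$ consumer nodes $t_1,\dots,t_n$, the source register is pinned to a fixed value, each edge enforces a register update through a small gadget, and a single wide axiom $\overline{t_1}\vee\dots\vee\overline{t_n}$ asserts that some consumer is unpebbled --- which is false once all of them are forced, making $F_n$ unsatisfiable. This uses $\bigtheta{n\log n}$ variables and clauses. The register width is what contributes the extra $\log n$ factor and, crucially, the update gadget is chosen so that the entire forward propagation is merge-free, and moreover every resolution derivation from $F_n$ of the sink register clauses $C^\star$ --- and of any clause that can stand in for $C^\star$ in at least two consumers --- stays merge-free even after bounded simplification; i.e.\ no cheap merge lies near $C^\star$.

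\textbf{Upper bound, and the lower-bound setup.} The short refutation is immediate: propagate once through the path to derive $C^\star$ in $\bigoh{n\log n}$ steps, then, reusing $C^\star$, derive $t_j$ and resolve it against the (shrinking) wide clause for $j=1,\dots,n$, using $\bigoh{\log n}$ steps per consumer; total $\bigoh{n\log n}$, with $C^\star$ of outdegree $\bigtheta{n}$. For the lower bound we take RMA in its most general form --- tree-like derivations $\tau_1,\dots,\tau_k$ with merge roots $R_1,\dots,R_k$ followed by a resolution stage over $F_n\cup\{R_j\}$, subject to the global rule that every clause of outdegree $>1$ has a merge ancestor. Since $F_n$ is not closed under restrictions we cannot simplify it by a random restriction; instead we track, for each clause of a hypothetical short RMA refutation, the set of path nodes $v_i$ and consumers $t_j$ it syntactically depends on. The core of the argument is a \emph{merge-locality lemma}: any merge derivable from $F_n$ by an $\littleoh{n\log n}$-length derivation --- and more generally any outdegree-$>1$ clause in a short RMA refutation --- is ``local'', in the sense that its short derivations touch only $\bigoh{\log n}$ path nodes, so it cannot encode the sink register and cannot shortcut the propagation for more than $\bigoh{1}$ consumers. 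Given this, each of the $\bigtheta{n}$ consumers forces an essentially disjoint block of $\bigomega{n\log n}$ inferences that re-do the path propagation and cannot be charged to any other consumer; summing over consumers gives $\bigomega{n^2\log n}$.

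\textbf{Main obstacle.} Everything hinges on the merge-locality lemma, and in particular on ruling out that an RMA proof cheaply re-routes the propagation through one cleverly chosen merge so as to recover the sharing that resolution gets for free. This is exactly where non-closure under restrictions bites: rather than restricting $F_n$ and arguing semantically about the simplified formula, we must reason directly about the syntax of \emph{arbitrary} short derivations from $F_n$, controlling which clauses can occur as merges and bounding how much ``path information'' any merge can carry. Designing the register update gadget and the consumer interface so that merges are simultaneously scarce and too weak to help, while the near-linear resolution refutation still goes through, is the delicate part; the counting that assembles the per-consumer blocks into the final $\bigomega{n^2\log n}$ bound is then routine.
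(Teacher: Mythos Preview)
Your high-level intuition is right: the formula must force resolution to reuse a clause whose cheap derivations are all merge-free, so that RMA must re-derive it.  But the fan-out architecture you sketch is fatally leaky.  With a single sink feeding $n$ consumers and a wide axiom $\olnot{t_1}\lor\dots\lor\olnot{t_n}$, the consumer axioms themselves manufacture merges for free, and those merges short-circuit the whole path.  Concretely, if each consumer axiom is $\olnot{c_1}\lor\dots\lor\olnot{c_{\log n}}\lor t_j$ (where the $c_i$ are the sink bits), then resolving the wide clause with the consumer axioms one by one over $t_1,t_2,\dots$ is an input derivation in which every step from the second on is a merge on all the $\olnot{c_i}$'s; after $n$ steps you hold $\olnot{c_1}\lor\dots\lor\olnot{c_{\log n}}$, and a single traversal of the path then kills the $c_i$'s.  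The entire refutation is one input derivation of length $\bigoh{n\log n}$, so RMA is not hurt at all.  The flaw is structural, not a matter of choosing the update gadget more cleverly: once all consumers hang off the same sink, their axioms share the sink variables and hence merge with each other, collapsing your $n$ ``independent'' uses into one.

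The paper's construction avoids this by \emph{interleaving} rather than fanning out.  There is a long implication chain $x_0\to x_1\to\dots\to x_{m\ell}$, and the $i$-th link needs the clause $w_{\hat\imath,1}=w_{\hat\imath,n}$ where $\hat\imath = i \bmod \ell$ cycles through $\ell=\Theta(\log n)$ disjoint equality gadgets.  The reusable clauses live entirely inside the merge-free $\WW$ part, and the only route to a merge is through the $\XX$ chain; a key lemma shows that deriving any $W$-supported clause via even one $\XX$ axiom forces you to use \emph{all} of them.  This is the structural ingredient your plan is missing.  Also, your claim that random restrictions are unavailable is too strong: the paper does apply a carefully designed random restriction (one that respects the gadget structure) to kill wide clauses, and then argues syntactically about which merges survive; non-closure under restrictions means one must track this by hand, not that restrictions are useless.
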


\subsection{Formula}

Let $\ell,m,n$ be positive integers. We have variables $x_i$ for
$i\in[m\ell-1]$ and $w_{j,k}$ for $j\in[\ell]$ and $k\in[n]$. For convenience we define
$x_0=1$ and $x_{m\ell}=0$, which are not variables.
Let
$X=\setdescr{x_i}{i\in[m\ell-1]}$, $W_j=\setdescr{w_{j,k}}{k\in[n]}$
and $W=\Union_{j\in[\ell]} W_j$. For each $j\in[\ell]$ we build the
following gadget:
\begin{align}
  &w_{j,k}=w_{j,k+1} && \text{for $k\in[n-1]$}
\end{align}
Each equality is expanded into the two clauses
$B_{j,k,1}=w_{j,k} \lor \olnot{w_{j,k+1}}$ and $B_{j,k,0}=\olnot{w_{j,k}} \lor w_{j,k+1}$, and we collectively call them $\WW = \setdescr{B_{j,k,b}}{j\in[\ell],k\in[n-1],b\in\set{0,1}}$. Observe that the $j$-th gadget implies $w_{j,1}=w_{j,n}$.
Additionally we build the following gadget:
\begin{align}
  &(w_{1,1} = w_{1,n}) \limpl x_1\\
  &(w_{\hat\imath,1} = w_{\hat\imath,n}) \limpl ( x_{i-1} \limpl x_{i} ) && \text{for $i\in[2,m\ell-1]$}\\
  &(w_{\ell,1} = w_{\ell,n}) \limpl \olnot{x_{m\ell-1}}
\end{align}
where $\hat\imath\in[\ell]$ denotes the canonical form of $i \pmod {\ell}$. Each
constraint is expanded into the two clauses
$A_{i,1}=w_{\hat\imath,1} \lor w_{\hat\imath,n} \lor \olnot{x_{i-1}} \lor x_i$ and
$A_{i,0}=\olnot{w_{\hat\imath,1}} \lor \olnot{w_{\hat\imath,n}} \lor \olnot{x_{i-1}} \lor
x_i$, and we collectively call them $\XX = \setdescr{A_{i,b}}{i\in[m\ell],b\in\set{0,1}}$. The resulting
formula is called $F_{\ell,m,n}$.

\subsection{Upper Bound}
\label{sec:ub}

It is not hard to see that there is a resolution refutation of
$F_{\ell,m,n}$ of length $\bigoh{\ell\cdot(m+n)}$. Indeed, we first
derive the two clauses representing $w_{j,1} = w_{j,n}$ for each
$j \in [\ell]$, which requires $\bigoh{n\ell}$ steps:
\begin{equation}
  \label{eq:ref-w1n}
  \Axiom{w_{j,1} \lor \olnot{w_{j,2}}}
  \Axiom{w_{j,2} \lor \olnot{w_{j,3}}}
  \BinaryInf{w_{j,1} \lor \olnot{w_{j,3}}}
  \UnaryInf{\vdots}
  \UnaryInf{w_{j,1} \lor \olnot{w_{j,n-1}}}
  \Axiom{w_{j,n-1} \lor \olnot{w_{j,n}}}
  \BinaryInf{w_{j,1} \lor \olnot{w_{j,n}}}
  \DisplayProof
\end{equation}
Then we resolve
each of the $\XX$ axioms with one of these clauses, appropriately
chosen so that we obtain pairs of clauses of the form
$w_{\hat\imath}^b \lor \olnot{x_{i-1}} \lor x_i$ for $i\in[m\ell]$,
and resolve each pair to obtain the chain of implications $x_1, \ldots,
x_i \limpl x_{i+1}, \ldots, \olnot{x_{n\ell-1}}$ in $\bigoh{m\ell}$ steps.
\begin{equation}
  \label{eq:ref-xi}
  \Axiom{w_{\hat\imath,1} \lor \olnot{w_{\hat\imath,n}}}
  \Axiom{w_{\hat\imath,1} \lor w_{\hat\imath,n} \lor \olnot{x_{i-1}} \lor x_i}
  \BinaryInf{w_{\hat\imath,1} \lor \olnot{x_{i-1}} \lor x_i}
  \Axiom{\olnot{w_{\hat\imath,1}} \lor w_{\hat\imath,n}}
  \Axiom{\olnot{w_{\hat\imath,1}} \lor \olnot{w_{\hat\imath,n}} \lor \olnot{x_{i-1}} \lor x_i}
  \BinaryInf{\olnot{w_{\hat\imath,1}} \lor \olnot{x_{i-1}} \lor x_i}
  \BinaryInf{\olnot{x_{i-1}} \lor x_i}
  \DisplayProof
\end{equation}
Since we have derived a chain of implications $x_1$, $x_1\limpl x_2$, \ldots, $x_{m\ell-1}\limpl x_{m\ell-1}$, $\olnot{x_{m\ell-1}}$ we can complete the refutation in $\bigoh{m\ell}$ more steps.
Let us record our discussion.

\begin{lemma}
  \label{lem:ub}
  $F_{\ell,m,n}$ has a resolution refutation of length $\bigoh{\ell\cdot(m+n)}$.
\end{lemma}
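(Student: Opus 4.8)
The plan is to turn the informal description preceding the statement into a concrete three‑phase refutation and bound the length of each phase separately. In the first phase, for each gadget index $j\in[\ell]$ I would chain the equality axioms along the path $w_{j,1},w_{j,2},\ldots,w_{j,n}$ exactly as in~\eqref{eq:ref-w1n}: resolving $B_{j,1,b},\ldots,B_{j,n-1,b}$ in order yields the clause expressing one implication direction of $w_{j,1}=w_{j,n}$, and doing this for both choices of the polarity $b$ produces the two clauses $w_{j,1}\lor\olnot{w_{j,n}}$ and $\olnot{w_{j,1}}\lor w_{j,n}$ in $\bigoh{n}$ steps; over all $j$ this is $\bigoh{n\ell}$ steps.

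In the second phase, for each $i\in[m\ell]$ I would resolve the pair of $\XX$‑axioms $A_{i,1},A_{i,0}$ against the two clauses for the gadget $\hat\imath$ obtained in the first phase, as in~\eqref{eq:ref-xi}, so as to cancel the literals on $w_{\hat\imath,1}$ and $w_{\hat\imath,n}$ and arrive at $\olnot{x_{i-1}}\lor x_i$ in a constant number of steps, giving $\bigoh{m\ell}$ in total. The only subtlety is the endpoint conventions $x_0=1$ and $x_{m\ell}=0$: for $i=1$ the literal $\olnot{x_{i-1}}$ is simply absent, so the derived clause is the unit clause $x_1$, and for $i=m\ell$ the literal $x_i$ is absent, so the derived clause is the unit clause $\olnot{x_{m\ell-1}}$.

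In the third phase I would resolve the unit $x_1$, the clauses $\olnot{x_{i-1}}\lor x_i$ for $i\in[2,m\ell-1]$, and the unit $\olnot{x_{m\ell-1}}$ in a left‑to‑right chain ($x_1$ with $\olnot{x_1}\lor x_2$ to get $x_2$, and so on), finishing at $x_{m\ell-1}$ and then resolving with $\olnot{x_{m\ell-1}}$ to obtain $\bot$ in $\bigoh{m\ell}$ further steps. Adding up the three phases gives length $\bigoh{\ell(m+n)}$, as claimed. I do not expect any genuine obstacle here; the construction is routine and the only things to watch are choosing the correct polarity of the first‑phase clause to pair with $A_{i,1}$ versus $A_{i,0}$ and handling the two boundary constraints, both of which are bookkeeping rather than substance.
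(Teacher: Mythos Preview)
Your proposal is correct and follows exactly the paper's own argument: the three phases you describe are precisely the derivations~\eqref{eq:ref-w1n}, \eqref{eq:ref-xi}, and the final implication chain, with the same step counts of $\bigoh{n\ell}$, $\bigoh{m\ell}$, and $\bigoh{m\ell}$ respectively. The boundary handling for $i=1$ and $i=m\ell$ that you flag is indeed just bookkeeping under the conventions $x_0=1$ and $x_{m\ell}=0$.
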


Before we prove the lower bound let us discuss informally what are the natural
ways to refute this formula in RML, so that we understand which
 behaviours we need to rule out.

If we try to reproduce the previous resolution refutation,
since we cannot reuse the clauses representing $w_{j,1} = w_{j,n}$
because they are not merges, we have to rederive them each time we need
them, which means that it takes $\bigoh{mn\ell}$ steps to derive
the chain of implications $x_1, \ldots,
x_i \limpl x_{i+1}, \ldots, \olnot{x_{n\ell-1}}$.
We call this approach \newrefutation{}.
This refutation has merges (over $w_{\hat\imath,1}$, $x_{i-1}$, and
$x_i$) when we produce
$w_{\hat\imath,1}^b \lor \olnot{x_{i-1}} \lor x_i$, and (over
$x_{i-1}$ and $x_i$) when we produce $\olnot{x_{i-1}} \lor x_i$, but
since we never reuse these clauses the refutation is in fact
tree-like.

An alternative approach, which we call \newrefutation{}, is to start
working with the $\XX$ axioms instead. In this proof we clump together all of
the repeated constraints of the form $w_{j,1} \neq w_{j,n}$ for every
$j\in[\ell]$, and then resolve them out in one go. In other words, we
first derive the sequence of constraints
\begin{align}
  D_i = \biggl(\bigvee_{\hat\imath\in[\min(i,\ell)]} w_{\hat\imath,1}\neq
  w_{\hat\imath,n}\biggr) \lor x_i && \text{for $i\in[m\ell]$} \eqcomma
\end{align}
  where $D_i$ can be obtained from $D_{i-1}$ and the pair of  $\XX$ axioms $A_{i,b}$, then resolve
away the inequalities from $D_{m\ell} = \bigvee_{j\in[\ell]}w_{j,1} \neq w_{j,n}$ using the $\WW$ axioms. However, representing any of the constraints $D_i$ for $i\geq\ell$ requires $2^{\ell}$ clauses, which is significantly larger than $mn\ell$ and even superpolynomial
for large enough $\ell$, so this refutation is not efficient either.
Note that this refutation has merges (over $W$ variables) each time that we
derive $D_i$ with $i \geq \ell$.

A third and somewhat contrived way to build a refutation is to derive
the pair of clauses representing $w_{j,1} = w_{j,n}$ using a
derivation whose last step is a merge, so that they can be
reused. Each of these clauses can be derived individually in $\bigoh{mn\ell}$ steps, for a total of $\bigoh{mn\ell^2}$ steps,
by slightly adapting refutation~\ref{ref:1}, substituting each
derivation of $x_i \limpl x_{i+1}$ by a derivation of
$w_{j,1} \lor \olnot{w_{j,n}} \lor \olnot{x_i} \lor x_{i+1}$ whenever
$i \equiv j \pmod{\ell}$ so that at the end we obtain
$w_{j,1} \lor \olnot{w_{j,n}}$ instead of the empty clause.
Such a
substitution clause can be obtained, e.g., by resolving
$w_{j,1} \lor w_{j,2} \lor \olnot{x_i} \lor x_{i+1}$ with
$\olnot{w_{j,2}} \lor \olnot{w_{j,n}} \lor \olnot{x_i} \lor
x_{i+1}$ as follows
\begin{equation}
  \label{eq:ref-xw}
    \def\ScoreOverhang{2pt}
    \def\defaultHypSeparation{~}
    \small
  \Axiom{w_{j,2} \lor \olnot{w_{j,3}}}
  \Axiom{w_{j,3} \lor \olnot{w_{j,4}}}
  \BinaryInf{w_{j,2} \lor \olnot{w_{j,4}}}
  \UnaryInf{\vdots}
  \UnaryInf{w_{j,2} \lor \olnot{w_{j,n-1}}}
  \Axiom{w_{j,n-1} \lor \olnot{w_{j,n}}}
  \BinaryInf{w_{j,2} \lor \olnot{w_{j,n}}}
  \Axiom{w_{\hat\imath,1} \lor w_{\hat\imath,n} \lor \olnot{x_{i-1}} \lor x_i}
  \BinaryInf{w_{\hat\imath,1} \lor w_{\hat\imath,2} \lor \olnot{x_{i-1}} \lor x_i}
  \Axiom{w_{\hat\imath,1} \lor \olnot{w_{\hat\imath,2}}}
  \Axiom{\olnot{w_{\hat\imath,1}} \lor \olnot{w_{\hat\imath,n}} \lor \olnot{x_{i-1}} \lor x_i}
  \BinaryInf{\olnot{w_{\hat\imath,2}} \lor \olnot{w_{\hat\imath,n}} \lor \olnot{x_{i-1}} \lor x_i}
  \BinaryInf{w_{\hat\imath,1} \lor \olnot{w_{\hat\imath,n}} \lor \olnot{x_{i-1}} \lor x_i}
  \DisplayProof
\end{equation}
After deriving $w_{j,1} = w_{j,n}$ as merges we follow the
next steps of refutation~\ref{ref:1} and complete the refutation in
$\bigoh{m\ell}$ steps. We call this \newrefutation{}.

Observe that the minimum length of deriving the clauses representing
$w_{j,1} = w_{j,n}$ is only $\bigoh{n}$, even in RML, so if we only
used the information that refutation~\ref{ref:3} contains these clauses we would
only be able to bound its length by
$\bigomega{\ell\cdot(m+n)}$. Therefore when we compute the hardness of
deriving a clause we need to take into account not only its semantics
but how it was obtained syntactically.

\subsection{Lower Bound}

Before we begin proving our lower bound in earnest we make two useful observations.

\begin{lemma}
  \label{lem:w-no-merge}
  Let $\eta$ be a resolution derivation that only depends on the $\WW$ axioms. Then $\eta$
  does not contain any merges, and all clauses are supported on $W$.
\end{lemma}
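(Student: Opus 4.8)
The plan is to argue by induction on the structure of the derivation $\eta$, proving simultaneously the two claims: (a) every clause in $\eta$ is supported on $W$, and in fact is a subset of one of the two-literal clauses $B_{j,k,b}$ "stretched along a gadget", i.e. has the form $w_{j,k}^{b} \lor \olnot{w_{j,k'}}^{b}$ for a single $j$ (a clause encoding a one-directional implication within a single chain); and (b) no resolution step is a merge. The key structural observation is that the $\WW$ axioms split into $\ell$ independent sub-instances, one per index $j$, and within each sub-instance the clauses $B_{j,k,0}, B_{j,k,1}$ encode the equality chain $w_{j,1} = w_{j,2} = \dots = w_{j,n}$, which is essentially an implicational/Horn-like structure along a path. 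So first I would set up this normal form: every clause depending only on $\WW$ and appearing in a derivation is (after discarding tautologies, which never help) of the form $\olnot{w_{j,a}} \lor w_{j,b}$ with $a < b$, or $w_{j,a} \lor \olnot{w_{j,b}}$ with $a<b$, or a unit $w_{j,a}$ or $\olnot{w_{j,a}}$ (these last arise only if one resolves, say, $w_{j,1}=w_{j,2}$-style clauses in a way that — actually, they don't arise, since the gadget is satisfiable, so the only derivable clauses are the "monotone path" clauses; I'd note satisfiability rules out the empty clause and more generally keeps clauses non-trivial).

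Then I would check the base case — each axiom $B_{j,k,b}$ is of this form, is supported on $W_j \subseteq W$, and is not a merge (it is an axiom, not a resolvent). For the inductive step, suppose $C = \res(C_1, C_2)$ where $C_1, C_2$ already have the claimed form. The pivot variable is some $w_{j,p}$, so it must occur in both $C_1$ and $C_2$; by the normal form, $C_1$ and $C_2$ are then both supported on $W_j$ (same $j$), say $C_1 = w_{j,a}^{\epsilon} \lor \olnot{w_{j,p}}^{\delta}$-type and $C_2 = w_{j,p}^{\delta} \lor \dots$; one then checks that in every case the non-pivot literals of $C_1$ and of $C_2$ are on distinct variables $w_{j,a}$ and $w_{j,b}$ with $a \neq b$ (because each of $C_1, C_2$ is a two-literal clause on two distinct variables one of which is the pivot, or a unit), and moreover their signs line up to again give a clause of the normal form — hence no literal is shared between the two premises, so the step is not a merge, and $C$ is again supported on $W_j \subseteq W$. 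This is really a short finite case analysis on the four clause-shapes crossed with which side the pivot literal sits.

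The main thing to be careful about — and what I'd flag as the only real obstacle — is ruling out degenerate premises that are not already in the normal form: a priori the induction hypothesis must be strong enough to carry the precise syntactic shape (single index $j$, two distinct variables, monotone orientation), because the bare statements "supported on $W$" and "contains no merge so far" are not obviously strong enough to push through the resolution step (e.g. one needs to know the two premises of a step share the \emph{same} $j$, which follows only because they share the pivot variable and every clause lives in a single gadget). So the real content is choosing the right inductive invariant; once that is in place the verification is mechanical. I would also remark that tautological clauses $w_{j,a} \lor \olnot{w_{j,a}}$ can be assumed absent without loss of generality, so they do not need to be tracked as a case.
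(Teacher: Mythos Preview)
Your proposal is correct and takes essentially the same approach as the paper: an induction establishing the invariant that every clause derivable from $\WW$ has the form $w_{j,k}\lor\olnot{w_{j,k'}}$ for a single $j$ and $k\neq k'$, from which both conclusions (support on $W$, no merges) are immediate since the two non-pivot literals in any step have opposite signs. The paper's proof is a three-line version of exactly this; your digressions about unit clauses and monotone orientation are unnecessary but harmless, and the cleanest justification for ``no merge'' is simply that the surviving literals have opposite polarity, rather than that they sit on distinct variables.
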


\begin{proof}
  We prove by induction that every clause in $\eta$ is of the form $w_{j,k} \lor \olnot{w_{j,k'}}$ with $k \neq k'$. This is true for the axioms. By induction hypothesis, a generic resolution step over $w_{j,k}$ is of the form
  \begin{equation}
    \Axiom{w_{j,k} \lor \olnot{w_{j,k'}}}
    \Axiom{\olnot{w_{j,k}} \lor w_{j,k''}}
    \BinaryInf{w_{j,k''} \lor \olnot{w_{j,k'}}}
    \DisplayProof
  \end{equation}
and in particular is not a merge.
\end{proof}

\begin{lemma}
   \label{lem:need-all-x}
  Let $\eta$ be a resolution derivation of a clause $C$ supported on
  $W$ variables that uses an $\XX$ axiom. Then $\eta$ uses at least one
  $A_{i,b}$ axiom for each $i\in[m\ell]$.
\end{lemma}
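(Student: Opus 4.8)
The plan is to argue by contradiction via a restriction argument, exploiting the cyclic structure of the $\XX$ gadget. Suppose there is some index $i_0 \in [m\ell]$ such that $\eta$ uses neither $A_{i_0,0}$ nor $A_{i_0,1}$. The key observation is that the $\XX$ axioms form a ``chain of implications'' from $x_0 = 1$ to $x_{m\ell} = 0$ (modulo the $W$-guards), and deleting both clauses at position $i_0$ breaks this chain, so that the surviving $\XX$ axioms together with all of $\WW$ become satisfiable. I would make this precise by constructing an assignment $\alpha$ to the $X \cup W$ variables that satisfies $\WW$ and every $\XX$ axiom except possibly $A_{i_0,b}$: for instance, set all $w_{j,k}$ equal within each gadget (say $w_{j,k} = 0$ for all $j,k$, which satisfies all of $\WW$), and set $x_i = 1$ for $i < i_0$ and $x_i = 0$ for $i \geq i_0$. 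With $w_{\hat\imath,1} = w_{\hat\imath,n} = 0$, the axiom $A_{i,1} = w_{\hat\imath,1} \lor w_{\hat\imath,n} \lor \olnot{x_{i-1}} \lor x_i$ requires $\olnot{x_{i-1}} \lor x_i$, which holds for all $i \neq i_0$ under this choice of the $x$'s (the only ``descent'' from $1$ to $0$ happens at $i_0$), and $A_{i,0}$ is satisfied outright by $\olnot{w_{\hat\imath,1}}$.

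Now I would invoke soundness of resolution: since $\eta$ derives $C$ using only axioms from $\WW$ together with $\XX \setminus \{A_{i_0,0}, A_{i_0,1}\}$, and $\alpha$ satisfies all of these axioms, $\alpha$ must satisfy $C$. But $C$ is supported on $W$ only, and under $\alpha$ every $W$-variable is set to $0$; I would need to check that this forces $C$ to be falsified — which requires knowing that $C$ actually contains the relevant literals with the wrong polarity. Here I should be more careful: a clause supported on $W$ could still be satisfied by the all-zero assignment if it contains some negative literal $\olnot{w_{j,k}}$. To handle this I would instead choose the $W$-assignment adaptively, or better, observe that I have freedom in the all-equal assignment within each gadget — I can set each gadget independently to all-$0$ or all-$1$. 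Since $C \neq \bot$ (it uses an $\XX$ axiom and is a genuine clause) actually the cleanest route is: the argument shows $\{$surviving axioms$\} \nvDash C$ unless $C$ is a tautology or trivial, by exhibiting, for any fixed clause $C$ supported on $W$, a gadget-respecting $W$-assignment falsifying $C$ (choose the value of gadget $j$ to falsify whichever $w_{j,\cdot}$-literals appear in $C$ — possible since within a gadget $C$ can only contain literals of one variable's worth of... actually $C$ may contain both $w_{j,k}$ and $\olnot{w_{j,k'}}$).

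Given that last complication, the main obstacle I expect is precisely pinning down which clauses $C$ supported on $W$ are actually derivable, and ruling out the degenerate cases. The cleanest fix is probably to combine this with the fact (from the surrounding development, cf.\ Lemma~\ref{lem:w-no-merge}) that the $\WW$ axioms alone only derive clauses of the form $w_{j,k} \lor \olnot{w_{j,k'}}$, and the $\XX$ axioms are what let the derivation ``escape'' to other $W$-clauses — so I would track how $C$ depends on the $\XX$ axioms, and show that removing an entire level $i_0$ of $\XX$ disconnects the cyclic implication chain, making the reduced axiom set satisfiable by a gadget-respecting assignment $\alpha$ chosen so as to falsify $C$. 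The heart of the argument is the cyclic/telescoping structure: the full $\XX$ set is only unsatisfiable because it closes a cycle $1 = x_0 \Rightarrow x_1 \Rightarrow \cdots \Rightarrow x_{m\ell-1} \Rightarrow \olnot{x_{m\ell-1}}$, hence every single link is essential, and this essentiality must be reflected syntactically in any derivation of a $W$-clause.
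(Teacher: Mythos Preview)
Your semantic approach has a genuine gap that you yourself flag but do not close. The lemma is about the specific derivation $\eta$: it says that if $C$ \emph{syntactically depends} on some $\XX$ axiom in $\eta$, then it depends on at least one $A_{i,b}$ for every $i$. Your argument instead tries to show that the axiom set with index $i_0$ removed does not \emph{semantically imply} $C$, and that is simply false in general. For instance $C = w_{1,1} \lor \olnot{w_{1,n}}$ is supported on $W$, is implied already by $\WW$ alone, and yet there are derivations of it that genuinely depend on $\XX$ axioms (this is precisely what happens in refutation~\ref{ref:3}). No gadget-respecting assignment can falsify such a $C$, so your soundness step never yields a contradiction. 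The hypothesis ``$\eta$ uses an $\XX$ axiom'' is a syntactic property of the proof DAG and is invisible to any argument phrased purely in terms of satisfiability of subsets of axioms.

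The paper's proof is syntactic and very short: if $C$ depends on $A_{i,b}$ but on neither $A_{i+1,0}$ nor $A_{i+1,1}$, note that those two are the only axioms in the formula containing the literal $\olnot{x_i}$, so $x_i$ can never be resolved away along any path from $A_{i,b}$ to $C$; hence $x_i \in C$, contradicting $\vars{C} \subseteq W$. Symmetrically with $\olnot{x_{i-1}}$ when index $i-1$ is unused. If some index is used and some is not, such a used/unused boundary pair must exist. Your closing intuition that ``essentiality must be reflected syntactically'' is exactly right; literal-tracking is how to cash it out.
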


\begin{proof}
  We prove the contrapositive and assume that there is an axiom
  $A_{i,b}$ that is used, and either both $A_{i+1,0}$ and $A_{i+1,1}$ are
  not used, or both $A_{i-1,0}$ and $A_{i-1,1}$ are not. In the first case
  the literal $x_i$ appears in every clause in the path from $A_{i,b}$
  to $C$, contradicting that $C$ is supported on $W$
  variables. Analogously with literal $\olnot{x_{i-1}}$ in the second
  case.
\end{proof}

Our first step towards proving the lower bound is to rule out that refutations like refutation~\ref{ref:2} can be small, and to do
that we show that wide clauses allow for very little progress. This is
a common theme in proof complexity, and the standard tool is to apply
a random restriction to a short refutation in order to obtain a narrow
refutation. However, RMA is not closed under restrictions, as we prove later in Corollary~\ref{cor:restrictions}, and
because of this we need to argue separately about which merges
are preserved.

Let us define the class of restrictions that we use and which need to respect the structure of the formula.
A restriction is an autarky~\cite{MS85Solving} with respect to a set
of clauses $D$ if it satisfies every clause that it touches; in other
words for every clause $C\in D$ either $\restrict{C}{\rho}=1$ or
$\restrict{C}{\rho}=C$.
A restriction is $k$-respecting if it is an autarky with respect to
$\WW$ axioms, we have $\restrict{F_{\ell,m,n}}{\rho} \cong F_{k,m,n}$ up to
variable renaming, and every $X$ variable is mapped to an $X$
variable.
Our definition of a narrow clause is also tailored to the formula at
hand, and counts the number of different $W$-blocks that a clause $C$
mentions. Formally
$\mu(C) = \setsize{\setdescr{j\in[\ell]}{\exists x_{j,k} \in
    \vars{C}}}$.

\begin{lemma}
  \label{lem:random-restriction}
  Let $\pi$ be a resolution refutation of $F_{\ell,m,n}$ of length
  $L = \littleoh{(4/3)^{\ell/8}}$. There exists an $\ell/4$-respecting
  restriction $\rho$ such that every clause in $\restrict{\pi}{\rho}$
  has $\mu(C) \leq \ell/8$.
\end{lemma}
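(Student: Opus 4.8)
The plan is to use a standard probabilistic-restriction argument, tailored to the structure of $F_{\ell,m,n}$, and verify that with positive probability the random restriction both (i) is $\ell/4$-respecting and (ii) kills every wide clause. First I would define the random restriction: pick a uniformly random subset $S \subseteq [\ell]$ of the $W$-blocks of size $\ell/4$ to keep ``alive'', and for each block $j \notin S$ set $w_{j,1} = w_{j,2} = \cdots = w_{j,n}$ to a uniformly random common bit $b_j$. This immediately satisfies every $\WW$ axiom $B_{j,k,b}$ with $j \notin S$ (so $\rho$ is an autarky with respect to the $\WW$ axioms, as required), it leaves the $X$ variables untouched (so every $X$ variable maps to an $X$ variable), and in each surviving block it leaves the equality chain $w_{j,1}=\cdots=w_{j,n}$ intact; hence $\restrict{F_{\ell,m,n}}{\rho}$ is isomorphic to $F_{\ell/4,m,n}$ after renaming. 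So $\rho$ is $\ell/4$-respecting by construction, with probability $1$.

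The heart of the argument is the union bound over the $L$ clauses of $\pi$. Fix a clause $C$ with $\mu(C) > \ell/8$, i.e.\ $C$ mentions $W$-variables from more than $\ell/8$ distinct blocks. For $C$ to survive (not be set to $1$) under $\rho$, we need: first, that none of the $>\ell/8$ mentioned blocks is killed in a way that satisfies a literal of $C$ — but more simply, for each mentioned block $j$, either $j \in S$, or $j \notin S$ and the random bit $b_j$ happens to falsify the $W$-literal(s) of $C$ in block $j$. Within a single block $j$, $C$ restricted to the variables of block $j$ is a sub-clause of something like $w_{j,k}\lor\olnot{w_{j,k'}}$ or a single literal (by the structure inherited from the axioms, though in general $C$ could contain an arbitrary sub-clause on $W_j$); in any case, a dead block $j\notin S$ fails to satisfy $C$ with probability at most $1/2$ over the choice of $b_j$ when $C$ contains at least one $W_j$-literal — and if $C$ contains two opposite-sign $W_j$-literals it would already be a tautology, which we may assume does not occur. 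So, conditioning on which blocks are alive, each dead mentioned block independently fails to satisfy $C$ with probability $\le 1/2$. I would then bound $\Pr[\restrict{C}{\rho}\neq 1]$ by summing over the number $t$ of the $\mu(C)$ mentioned blocks that land in $S$: the probability that a given set of $\mu(C)-t$ dead blocks all fail to satisfy $C$ is $\le 2^{-(\mu(C)-t)}$, and $t$ is hypergeometrically distributed with mean $\mu(C)\cdot(\ell/4)/\ell = \mu(C)/4$. Since $\mu(C) > \ell/8$, typically $\mu(C)-t$ is a constant fraction of $\ell$, giving $\Pr[\restrict{C}{\rho}\neq 1] \le (3/4)^{\Omega(\ell)}$; being a bit careful with the hypergeometric tail one gets a bound like $(4/3)^{-\ell/8}\cdot\text{poly}$, so that $L \cdot \Pr[\text{bad }C] = \littleoh{1}$ under the hypothesis $L = \littleoh{(4/3)^{\ell/8}}$.

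The main obstacle I expect is the precise constant-chasing in the tail bound for the hypergeometric variable $t = |S \cap \{\text{blocks of }C\}|$: one needs $\E[2^{-(\mu(C)-t)}] \le (4/3)^{-\ell/8}$ (up to lower-order factors) whenever $\mu(C) > \ell/8$, and this requires either a direct computation of the generating function $\E[2^t]$ for the hypergeometric distribution (which is clean: $\E[x^t]$ for sampling $\ell/4$ from $\ell$ with $\mu$ ``good'' items) or an appeal to a hypergeometric Chernoff bound. The factor $(4/3)$ and the exponent $\ell/8$ in the statement strongly suggest the former: $\E[2^{t}] = \prod$-type or binomial-coefficient-ratio estimate yielding exactly $(3/4)^{\mu(C)}\cdot(\text{something})$, and plugging $\mu(C)=\ell/8$ as the worst case gives the claimed threshold. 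Once that inequality is in hand, the union bound over $L$ clauses finishes the proof: with probability $1 - \littleoh{1} > 0$ no wide clause survives, so a good $\rho$ exists, and it is $\ell/4$-respecting by the construction in the first paragraph.
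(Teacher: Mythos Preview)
Your union-bound intuition is right, but there is a genuine gap in the construction of $\rho$, and a needless complication in the tail bound.

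\textbf{The gap: your $\rho$ is not $\ell/4$-respecting.} You kill $W$-blocks but leave \emph{all} $X$ variables untouched. Then for each $i$ with $\hat\imath\notin S$, one of the two $\XX$ axioms $A_{i,0},A_{i,1}$ is satisfied and the other collapses to the binary clause $\olnot{x_{i-1}}\lor x_i$. The restricted formula therefore still has all $m\ell-1$ $X$-variables together with these stray implication clauses; it is \emph{not} isomorphic to $F_{\ell/4,m,n}$, which has only $m\ell/4-1$ $X$-variables and no such clauses. The definition of ``$k$-respecting'' requires the isomorphism $\restrict{F_{\ell,m,n}}{\rho}\cong F_{k,m,n}$, so your $\rho$ fails that test. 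The paper repairs this by also acting on the $X$ variables: whenever $J_{\hat\imath}\neq *$ it sets $\rho(x_i)=\rho(x_{i-1})$ (recursively, with $\rho(x_0)=1$), i.e.\ each $x_i$ attached to a dead block is identified with the previous surviving $x$-variable. This collapses the chain correctly and recovers a copy of $F_{|J^{-1}(*)|,m,n}$.

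\textbf{The complication: fixed $|S|$ versus independent blocks.} By insisting on $|S|=\ell/4$ exactly, you force yourself into a hypergeometric computation for $t=|S\cap\{\text{blocks of }C\}|$ and then have to estimate $\E[2^{-(\mu(C)-t)}]$; you yourself flag this constant-chasing as the main obstacle. The paper sidesteps it entirely: each block is independently alive with probability $1/2$ and, if dead, set to a uniform bit. Then for each of the $\mu(C)$ blocks mentioned by $C$, the probability that it fails to satisfy $C$ is at most $\tfrac12+\tfrac12\cdot\tfrac12=\tfrac34$, independently across blocks, giving $\Pr[\restrict{C}{\rho}\neq 1]\le(3/4)^{\mu(C)}\le(3/4)^{\ell/8}$ immediately. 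The price is that the number of alive blocks is now random, but a one-line Chernoff bound gives $\Pr[|J^{-1}(*)|<\ell/4]\le e^{-\ell/16}$, and a final union bound handles both bad events. (A small aside: your remark that ``two opposite-sign $W_j$-literals would make $C$ a tautology'' is off---$w_{j,1}\lor\olnot{w_{j,2}}$ is not tautologous---but this does not hurt the bound, since such a pair is in fact always satisfied by a dead block.)
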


\begin{proof}
We use the probabilistic method.
Consider the following distribution $\mathcal J$ over
$\set{0,1,*}^\ell$: each coordinate is chosen independently with
$\Pr[J_i=0]=\Pr[J_i=1]=1/4$, $\Pr[J_i=*]=1/2$. Given a random variable $J\sim\mathcal J$ sampled according to this distribution, we derive a random restriction $\rho$ as follows:
$\rho(w_{j,i})=J_j$, $\rho(x_i)=*$ if $J_{\hat\imath}=*$, and
$\rho(x_i)=\rho(x_{i-1})$ otherwise (where $\rho(x_0)=1$).

Observe that
$\restrict{F_{\ell,m,n}}{\rho} \cong F_{\setsize{J^{-1}(*)},m,n}$ up to
variable renaming, and by a Chernoff bound we have
$\Pr[\setsize{J^{-1}(*)}<\ell/4]\leq e^{-\ell/16}$.

We also have, for every clause $C\in\pi$ with $\mu(C)>\ell/8$, that
\begin{equation}
  \Pr[\restrict{C}{\rho} \neq 1] \leq (3/4)^{\mu(C)} \leq (3/4)^{\ell/8} \eqperiod
\end{equation}
Therefore by a union bound the probability that
$\setsize{J^{-1}(*)}<\ell/4$ or that any clause has
$\mu(\restrict{C}{\rho})>\ell/8$ is bounded away from $1$ and we conclude
that there exists a restriction $\rho$ that satisfies the conclusion
of the lemma.
\end{proof}

Note that $s(\restrict{\pi}{\rho})$ is a resolution refutation of
$\restrict{F_{n,\ell}}{\rho}$, but not necessarily a RMA
refutation, therefore we lose control over which clauses may be
reused\footnote{Recall that $s(\pi)$ is the syntactic equivalent of $\pi$.}. Nevertheless, we can identify a fragment of
$s(\restrict{\pi}{\rho})$ where we still have enough information.

\begin{lemma}
  \label{lem:psi-no-reuse}
  There exists an integer $t$ such that $\psi=s(\restrict{\pi[1,t]}{\rho})$ is a
  resolution derivation of a clause supported on $W$ variables that
  depends on an $\XX$ axiom and where no clause supported on $W$
  variables is reused.
\end{lemma}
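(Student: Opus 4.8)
The plan is to take the prefix of $\pi$ up to the first moment at which restricting and passing to the syntactic equivalent produces a clause supported on $W$ that genuinely rests on an $\XX$ axiom, and then to read off the remaining properties from minimality together with Lemma~\ref{lem:w-no-merge}. For each index $i$ write $\psi_i = s(\restrict{\pi[1,i]}{\rho})$; since restriction and $s(\cdot)$ process a proof in order, the clause sitting at position $i$ of $\psi_t$ (for any $t \geq i$) is the final clause of $\psi_i$, and its sub-derivation inside $\psi_t$ coincides with its sub-derivation inside $\psi_i$. Let $t$ be the least index such that the final clause of $\psi_t$ is supported on $W$ variables and syntactically depends on an $\XX$ axiom inside $\psi_t$, and set $\psi = \psi_t$.

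First, such a $t$ exists: for $i = L$ the final clause of $\psi_L$ is $\bot$, which is supported on $W$; and since $\psi_L = s(\restrict{\pi}{\rho})$ is a resolution refutation of $\restrict{F_{\ell,m,n}}{\rho} \cong F_{\ell/4,m,n}$, while by Lemma~\ref{lem:w-no-merge} any resolution derivation using only $\WW$ axioms consists entirely of clauses $w_{j,k} \lor \olnot{w_{j,k'}}$ and hence never reaches $\bot$, the clause $\bot$ must depend on an $\XX$ axiom in $\psi_L$. By construction $\psi$'s final clause is supported on $W$ and depends on an $\XX$ axiom, which is the first half of the statement.

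For the no-reuse part, the final clause of $\psi$ has out-degree $0$, so let $C'$ be a clause of $\psi$ supported on $W$ sitting at some position $t' < t$. By the remark above $C'$ is the final clause of $\psi_{t'}$, so minimality of $t$ forces $C'$ not to depend on an $\XX$ axiom in $\psi$; hence its sub-derivation in $\psi$ uses only $\WW$ axioms, and Lemma~\ref{lem:w-no-merge} gives that this sub-derivation is merge-free and that $C' = w_{j,k} \lor \olnot{w_{j,k'}}$ for some $j$ and $k \neq k'$. Suppose for contradiction that $C'$ is reused in $\psi$. Passing from $\pi[1,t]$ to its restriction and then to its syntactic equivalent only ever deletes edges of the proof DAG — a premise made true by $\rho$ loses its outgoing edge, and so does a premise declared redundant by $s$ — so out-degrees never grow, and the $\pi$-clause $C_{t'}$ at that position already has out-degree at least $2$ in $\pi$; that is, $C_{t'}$ is a lemma. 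Since $\pi$ is an RMA refutation (indeed, a refutation in which every clause of out-degree larger than $1$ has a merge ancestor), $C_{t'}$ has a merge ancestor $M$ in $\pi$, and by Lemma~\ref{lem:w-no-merge} the sub-derivation of the merge $M$, and therefore that of $C_{t'}$, uses an $\XX$ axiom.

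The crux, which I expect to be the main obstacle, is to turn the last sentence into a contradiction, since RMA is not closed under restrictions (Corollary~\ref{cor:restrictions}) so neither $M$ nor the $\XX$ dependence need survive to $\psi$ unchanged. The intended resolution exploits two features of the restrictions built in Lemma~\ref{lem:random-restriction}: $\rho$ is an autarky for the $\WW$ axioms — so a surviving $\WW$ axiom is left untouched and a $\WW$-only derivation in $\psi$ lifts back to a $\WW$-only derivation in $\pi$ — and $\rho$ maps no $X$-variable to the constant $0$ — so an $x$-literal present in a $\pi$-clause cannot be silently erased, it can only be renamed to another $x$-literal or turn the clause into $1$. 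Combining these with a Lemma~\ref{lem:need-all-x}-style analysis of which literals must persist along the derivation of $C_{t'}$, one argues that a $\pi$-derivation producing, after restriction and $s(\cdot)$, a genuine clause supported on $W$ cannot have used an $\XX$ axiom, contradicting the previous paragraph and so contradicting that $C'$ is reused. Making the lifting precise requires tracking exactly which clauses and edges $s$ discards — the syntactic bookkeeping the paper warns it must resort to — and this, rather than any single inequality, is where the work lies.
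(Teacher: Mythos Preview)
Your proof has a genuine gap that you yourself identify, and the fix is not the lifting you sketch but a different choice of $t$. The paper takes $t$ to be the first index such that $C_t$ depends on an $\XX$ axiom \emph{in $\pi$} and $D_t = s(\restrict{C_t}{\rho})$ is supported on $W$. With that choice no-reuse is immediate: for any $k<t$ with $D_k$ supported on $W$, minimality gives that $C_k$ does not depend on $\XX$ axioms in $\pi$, so by Lemma~\ref{lem:w-no-merge} the entire $\pi$-derivation of $C_k$ is merge-free, hence $C_k$ has no merge ancestor and out-degree at most $1$ in $\pi$, hence also in $\psi$. The remaining work is then to show that $D_t$ still depends on an $\XX$ axiom in $\psi$, which the paper dispatches in one paragraph by inspecting the two predecessors of $C_t$. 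Your choice --- first $t$ with $D_t$ depending on an $\XX$ axiom \emph{in $\psi$} --- flips the burden: $\XX$-dependence comes for free, but for no-reuse you end up needing that whenever $C_{t'}$ depends on an $\XX$ axiom in $\pi$ and $D_{t'}$ is supported on $W$, also $D_{t'}$ depends on an $\XX$ axiom in $\psi$. That is precisely the fact the paper establishes for its own $t$, so your route requires the paper's argument anyway; the paper's definition of $t$ is simply the cleaner packaging.

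Your sketch of the lifting also contains an error: $\rho$ can set $x_i$ to $1$, in which case the \emph{negative} literal $\olnot{x_i}$ is silently erased from a clause without satisfying it. So $x$-literals can disappear under $\rho$, and the Lemma~\ref{lem:need-all-x}-style persistence tracking you propose does not go through as stated.
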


\begin{proof}
  Let $C_t\in\pi$ be the
  first clause that depends on an $\XX$ axiom and such that
  $D_t=s(\restrict{C_t}{\rho})$ is supported on $W$, which exists
  because $\bot$ is one such clause.

  By definition of $t$, we have that every ancestor $D_k\in\psi$ of
  $D_t$ that is supported on $W$
  variables corresponds to a clause $C_k$ in $\pi$ that only
  depends on $\WW$ axioms, hence by Lemma~\ref{lem:w-no-merge} $C_k$ is
  not a merge. By definition of RMA $C_k$ is not reused, and by
  construction of $s(\cdot)$ neither is $D_k$.

  It remains to prove that $D_t$ depends on an $\XX$ axiom. Since $C_t$
  depends on an $\XX$ axiom, at least one of its predecessors $C_p$ and
  $C_q$ also does, say $C_p$. By definition of $t$,
  $D_p=s(\restrict{C_p}{\rho})$ is not supported on $W$, and hence by
  Lemma~\ref{lem:w-no-merge} either $D_p$ depends on an $\XX$ axiom or
  $D_p=1$. Analogously, if $C_q$ also depends on an $\XX$ axiom then so
  does $D_q=s(\restrict{C_j}{\rho})$ (or it is $1$) and we are
  done.
  Otherwise $C_q$ is of the form $w_{j,k} \lor \olnot{w_{j,k'}}$
  and is either satisfied by $\rho$ or left untouched. In both cases
  we have that $D_q \not\vDash \restrict{C_t}{\rho}$ (trivially in the
  first case and because $D_q$ contains the pivot while $C_t$ does not
  in the second), hence $D_t$ depends on $D_p$.
\end{proof}

Note that $C_t$ may be semantically implied by the $\WW$ axioms, and
have a short derivation as in refutation~\ref{ref:3}, therefore we are forced to use syntactic arguments to argue that deriving $C_t$
\emph{using an $\XX$ axiom} takes many resolution steps.

The next step is to break $\psi$ into $m$ (possibly intersecting)
parts, each corresponding roughly to the part of $\psi$ that uses $\XX$
axioms with variables in an interval of length $\ell$ (by Lemma~\ref{lem:need-all-x} we can assume that $\psi$ contains axioms from every interval).
To do this we use the following family of restrictions defined for
$i\in[n]$:
\begin{align}
  \sigma_i(x_{i'}) &=
  \begin{cases*}
    1 & if $i' \leq i\ell$\\
    * & if $i\ell < i' \leq (i+1)\ell$\\
    0 & if $(i+1)\ell < i'$
  \end{cases*} &
  \sigma_i(w_{i',j}) &= *
\end{align}
Let $X_i = X \intersection \sigma_i^{-1}(*)$ and note that
$\restrict{F_{\ell,m,n}}{\sigma_i} \cong F_{\ell,1,n}$.

Clauses in $\psi$ with many $X$ variables could be tricky to classify, but
intuitively it should be enough to look at the smallest positive
literal and the largest negative literal, since these are the hardest
to eliminate.
Therefore we define $r(C)$ to be the following operation on a clause:
literals over $W$ variables are left untouched, all positive $X$
literals but the smallest are removed, and all negative $X$ literals
but the largest are removed. Formally,
\begin{equation}
 r\biggl(\Lor_{i\in A} x_i \lor \Lor_{i\in B} \olnot{x_i} \lor \Lor_{(i,j)\in C} w_{i,j}^{b_{i,j}}\biggr) = x_{\min A} \lor \olnot{x_{\max B}} \lor \Lor_{(i,j)\in C} w_{i,j}^{b_{i,j}}
\end{equation}
where $x_{\min A}$ (resp. $\olnot{x_{\max B}}$) is omitted if $A$ (resp. $B$) is empty.

We need the following property of $r(C)$.
\begin{lemma}
  \label{lem:r}
  If $\restrict{C}{\sigma_i}\neq 1$ and $\vars{r(C)} \intersection X_i = \emptyset$ then $\restrict{C}{\sigma_i}$ is supported over $W$ variables.
\end{lemma}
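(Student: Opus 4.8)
The plan is to unpack the two hypotheses and show they force every $X$-literal of $C$ to be set to $1$ by $\sigma_i$. Recall that $\sigma_i$ sets $x_{i'} = 1$ for $i' \le i\ell$, leaves $x_{i'}$ free for $i\ell < i' \le (i+1)\ell$, and sets $x_{i'} = 0$ for $i' > (i+1)\ell$; and $X_i$ is exactly the set of free $X$-variables, namely $\setdescr{x_{i'}}{i\ell < i' \le (i+1)\ell}$. Write $C = \Lor_{i'\in A} x_{i'} \lor \Lor_{i'\in B} \olnot{x_{i'}} \lor \Lor_{(i',j)\in C'} w_{i',j}^{b_{i',j}}$, so that $r(C) = x_{\min A} \lor \olnot{x_{\max B}} \lor \Lor_{(i',j)\in C'} w_{i',j}^{b_{i',j}}$ (with the $X$-parts omitted when $A$ or $B$ is empty). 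The goal is: under the hypotheses, $\restrict{C}{\sigma_i}$ contains no surviving $X$-literal, hence is supported on $W$.

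First I would handle the positive literals. Suppose $A\neq\emptyset$ and let $i^+ = \min A$. The hypothesis $\vars{r(C)} \cap X_i = \emptyset$ tells us $x_{i^+} \notin X_i$, i.e. $i^+ \notin (i\ell, (i+1)\ell]$. If $i^+ > (i+1)\ell$ then every index in $A$ exceeds $(i+1)\ell$, so $\sigma_i$ sets every positive literal $x_{i'}$, $i'\in A$, to $0$; but then, for $\restrict{C}{\sigma_i}\neq 1$ to be possible, that's fine so far — the positive $X$-literals simply vanish from the restricted clause. If instead $i^+ \le i\ell$, then $\sigma_i(x_{i^+}) = 1$, so $\restrict{C}{\sigma_i} = 1$, contradicting the hypothesis. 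Hence in the non-contradictory case every positive $X$-literal of $C$ is killed (set to $0$) by $\sigma_i$. The argument for negative literals is symmetric: let $i^- = \max B$; since $x_{i^-}\notin X_i$ we have $i^- \le i\ell$ or $i^- > (i+1)\ell$; if $i^- > (i+1)\ell$ then $\sigma_i(x_{i^-}) = 0$, so $\olnot{x_{i^-}}$ evaluates to $1$ and $\restrict{C}{\sigma_i} = 1$, contradiction; hence $i^- \le i\ell$, so every index in $B$ is $\le i\ell$ and every negative literal $\olnot{x_{i'}}$, $i'\in B$, is set to $0$ by $\sigma_i$. Therefore all $X$-literals of $C$ are falsified, the restricted clause consists only of the (untouched) $W$-literals, and $\restrict{C}{\sigma_i}$ is supported on $W$, as claimed.

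The only subtlety — and the step I'd be most careful about — is making sure the two hypotheses are used in the right order: $\vars{r(C)}\cap X_i=\emptyset$ pins down where the extreme literals $x_{\min A}$, $x_{\max B}$ sit relative to the interval, and then $\restrict{C}{\sigma_i}\neq 1$ rules out the "set to $1$" side, leaving only the "set to $0$" side, which in turn forces the remaining (non-extreme) $X$-literals to the same side because $\min A$ (resp. $\max B$) is extremal. There is no real obstacle here; it is a short case analysis, but one must state explicitly why controlling just the min positive and max negative index suffices to control all the others — precisely the "these are the hardest to eliminate" intuition recorded before the lemma.
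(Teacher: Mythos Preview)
Your proposal is correct and follows essentially the same approach as the paper's proof: a case analysis on where $\min A$ and $\max B$ sit relative to the interval $(i\ell,(i+1)\ell]$, using $\restrict{C}{\sigma_i}\neq 1$ to rule out the side where the extremal literal would be satisfied, and then extremality to conclude all remaining $X$-literals are falsified. The paper's write-up is terser but the logic is identical.
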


\begin{proof}
  The hypothesis that $\vars{r(C)} \intersection X_i = \emptyset$
  implies that the smallest positive $X$ literal in $C$ is either not
  larger than $i\ell$ or larger than $(i+1)\ell$, but the hypothesis
  that $\restrict{C}{\sigma_i} \neq 1$ rules out the first
  case. Therefore all positive $X$ literals are falsified by
  $\sigma_i$.
  Analogously the largest negative $X$ literal is not larger than
  $i\ell$ and all negative $X$ literals are also falsified.
\end{proof}

We define each part $\psi_i$ to consist of all clauses $C\in\sigma$ such that $C$ is
\begin{enumerate}
\item an $\XX$ axiom not satisfied by $\sigma_i$; or
\item the conclusion of an inference with pivot in $X_i$; or
\item the conclusion of an inference with pivot in $W$ that depends on an $\XX$ axiom if $r(C)$ contains a variable in $X_i$; or
\item the conclusion of an inference with pivot in $W$ that does not depend on $\XX$ axioms if the \emph{only} immediate successor of $C$ is in $\psi_i$.
\end{enumerate}
This is the point in the proof where we use crucially that the
original derivation is in RMA form: because clauses that do not depend
on $\XX$ axioms are not merges, they have only one successor and the
definition is well-formed.

Ideally we would like to argue that parts $\psi_i$ are pairwise
disjoint. This is not quite true, but nevertheless they do not overlap
too much.

\begin{lemma}
  \label{lem:overcounting}
  Let $\psi$ and $\setdescr{\psi_i}{i\in[\ell]}$ be as discussed above. Then
  $2\setsize{\psi}\geq \sum_i\setsize{\psi_i}$.
\end{lemma}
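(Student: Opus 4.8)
The plan is to show that every clause $C \in \psi$ lies in at most two of the parts $\psi_i$; summing over all clauses then gives $\sum_i \setsize{\psi_i} = \sum_{C\in\psi}\setsize{\setdescr{i}{C\in\psi_i}} \le 2\setsize{\psi}$. I would handle the clauses according to which of the four defining cases places them into a given $\psi_i$, and for case (4) I will need to chase successors back to a clause governed by one of the first three cases.

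First I would deal with the "semantically structured" cases (1)–(3), where membership of $C$ in $\psi_i$ is determined intrinsically by $C$ (and its inference), not by its successors. For case (1), an $\XX$ axiom $A_{i',b}$: it is satisfied by $\sigma_i$ unless $\sigma_i$ leaves the relevant $x$-literals unfixed or falsified in the right way; since $\sigma_i$ sets $x_{i'}=1$ for $i'\le i\ell$, $=0$ for $i' > (i+1)\ell$, and leaves a single length-$\ell$ window free, an axiom $A_{i',b}$ mentions $x_{i'-1}$ and $x_{i'}$ and is unsatisfied by $\sigma_i$ only for the one or two consecutive values of $i$ whose window touches the index $i'$ — so at most two parts. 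For case (2), an inference with pivot $x_p \in X_i$: since the windows $X_i$ are disjoint as $i$ varies, a fixed pivot variable $x_p$ lies in at most one $X_i$, so $C$ is placed by case (2) into at most one part. For case (3), an inference with pivot in $W$ depending on an $\XX$ axiom: here $C$ is placed into $\psi_i$ only if $r(C)$ contains a variable in $X_i$; but $r(C)$ retains at most one positive $X$-literal $x_{\min A}$ and one negative $X$-literal $\olnot{x_{\max B}}$, each lying in at most one window $X_i$, so case (3) contributes at most two parts.

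The main obstacle is case (4): a clause $C$ whose inference has a $W$-pivot and which does not depend on any $\XX$ axiom. By Lemma \ref{lem:w-no-merge} such a $C$ is not a merge, so in the RMA derivation $\psi$ it has a unique immediate successor (this is exactly the well-formedness point flagged before the lemma), and $C \in \psi_i$ iff that unique successor is in $\psi_i$. Iterating, I would follow the chain of successors $C \to C' \to C'' \to \cdots$: as long as we stay among clauses that are $W$-pivot inferences not depending on $\XX$ axioms, each has a single successor, so the chain is uniquely determined and linear. The chain cannot continue forever inside $\psi$, so it must eventually reach a clause $C^{*}$ that is either an $\XX$-dependent clause or a clause whose inference pivot lies in $X_i$ (or an $\XX$ axiom) — i.e.\ a clause governed by cases (1)–(3) — and the membership of the original $C$ in $\psi_i$ agrees with that of $C^{*}$. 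I need to check that this terminal clause is the same for all $i$ for which $C \in \psi_i$: indeed the chain of unique successors is independent of $i$, so $C^{*}$ is well defined, and $C \in \psi_i \iff C^{*} \in \psi_i$. Since $C^{*}$ is covered by cases (1)–(3), it lies in at most two parts, hence so does $C$. Combining all four cases, every clause of $\psi$ lies in at most two of the $\psi_i$, which gives $2\setsize{\psi} \ge \sum_i \setsize{\psi_i}$ as claimed. The one technical point to be careful about is that in case (4) the successor might itself be a clause of type (4), so the termination argument must note that $\psi$ is finite and the successor relation restricted to type-(4) clauses is a function with no cycles (it is contained in the acyclic proof DAG), so the chain must exit the type-(4) region after finitely many steps.
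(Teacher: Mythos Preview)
Your proposal is correct and follows essentially the same approach as the paper: you show that every clause lies in at most two of the $\psi_i$ by a case analysis over the four defining clauses, bounding cases (1)--(3) directly and reducing case (4) to a clause of type (1)--(3) via the unique-successor chain. The paper's proof is a terse paragraph that states the same bounds for each case; your version is just more explicit, in particular spelling out the successor-chasing and termination argument that the paper compresses into the single phrase ``clauses that do not depend on an $\XX$ axiom appear in the same $\psi_i$ as one clause of the previous types.''
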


\begin{proof}
  Axioms may appear in at most two different $\psi_i$,
  and clauses
  obtained after resolving with an $X$ pivot in only one.  The only
  other clauses that depend on an $\XX$ axiom and may appear in
  different $\psi_i$ are obtained after resolving with a $W$ pivot,
  but since $r(C)$ only contains two $X$ variables, such clause only
  may appear in two different $\psi_i$. Finally, clauses that do not
  depend on an $\XX$ axiom appear in the same $\psi_i$ as one clause
  of the previous types, and therefore at most two different parts.
\end{proof}

To conclude the proof we need to argue that each $\psi_i$ is
large. The intuitive reason is that $\psi_i$ must use one $\XX$ axiom
for each $j\in [(i\ell,(i+1)\ell]$, which introduces a pair of $W$
variables from each $W_j$ block, but since no clause contains more than
$\ell/8$ such variables, we need to use enough $\WW$ axioms to remove
the aforementioned $W$ variables. Formally the claim follows from
these two lemmas.

\begin{lemma}
  \label{lem:psi-valid}
  For each $i\in[\ell]$ there exists an integer $t_i$ such that
  $s(\restrict{\psi_i[1,t_i]}{\sigma_i})$ is a resolution derivation
  of a clause supported on $W$ variables that depends on an $\XX$
  axiom.
\end{lemma}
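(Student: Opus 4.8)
The plan is to mirror the construction used in Lemma~\ref{lem:psi-no-reuse}, but carried out \emph{inside each part} $\psi_i$ after applying the restriction $\sigma_i$. Recall that by Lemma~\ref{lem:need-all-x} (together with the fact that $\psi$ itself depends on an $\XX$ axiom and derives a clause supported on $W$) the derivation $\psi$ uses at least one $A_{i',b}$ axiom for every $i' \in [m\ell]$; in particular it uses an $\XX$ axiom with pivot-relevant index in the window $(i\ell, (i+1)\ell]$ which, by clause~1 in the definition of $\psi_i$, survives in $\psi_i$ after applying $\sigma_i$ (such an axiom is not satisfied by $\sigma_i$ because the relevant $x$-variable is set to $*$). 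So $\psi_i$ is nonempty and contains at least one clause that depends on an $\XX$ axiom. The restriction $\restrict{F_{\ell,m,n}}{\sigma_i} \cong F_{\ell,1,n}$ keeps all the $W$-variables free, so the notion of ``supported on $W$'' is meaningful after restricting.

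First I would verify that $s(\restrict{\psi_i}{\sigma_i})$ is in fact a legitimate resolution derivation from $\restrict{F_{\ell,m,n}}{\sigma_i}$: this follows from the standard fact (quoted in the preliminaries) that restricting a resolution derivation yields a semantic resolution derivation, together with the well-definedness of $\psi_i$ noted in the paper — clauses not depending on $\XX$ axioms are not merges, hence have a unique successor, so clause~4 unambiguously assigns them to a part, and every premise of a clause in $\psi_i$ is itself in $\psi_i$ (by inspection of the four cases: an inference with pivot in $X_i$ or a $W$-pivot with $r(C)$ hitting $X_i$ pulls in its premises via cases 2--4; one needs to check the premises land in one of the four buckets, using Lemma~\ref{lem:r} to see that when $r$ of the conclusion loses an $X_i$ variable the relevant premise is $W$-supported after $\sigma_i$). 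Then I would run the same argument as in Lemma~\ref{lem:psi-no-reuse}: let $t_i$ be the index of the \emph{first} clause $C$ in $\psi_i$ such that $C$ depends on an $\XX$ axiom and $s(\restrict{C}{\sigma_i})$ is supported on $W$ variables. Such a clause exists because the final clause of $\psi$ (which lies in the portion of $\psi$ feeding into $\bot$, hence in every $\psi_i$ that is defined all the way down, or at worst the clause guaranteed by Lemma~\ref{lem:need-all-x}) is $W$-supported and $\XX$-dependent — more carefully, I would argue that the last clause of $\psi_i$ in the ordering inherits $W$-supportedness from the structure, or I would cut off $\psi_i$ at the last clause that is still $\XX$-dependent and $W$-supported, exactly parallel to the choice of $t$ in Lemma~\ref{lem:psi-no-reuse}.

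The crux is then to show that $s(\restrict{C_{t_i}}{\sigma_i})$ genuinely \emph{depends} on an $\XX$ axiom (not just that the pre-restriction clause does), because $s(\cdot)$ may discard $\XX$-dependent parents. This is the same subtlety handled at the end of Lemma~\ref{lem:psi-no-reuse}: pick a parent $C_p$ of $C_{t_i}$ that still depends on an $\XX$ axiom; by minimality of $t_i$, $s(\restrict{C_p}{\sigma_i})$ is \emph{not} $W$-supported, so by Lemma~\ref{lem:w-no-merge} it either depends on an $\XX$ axiom or equals $1$; then show the other parent's restriction cannot semantically imply $\restrict{C_{t_i}}{\sigma_i}$ — either that other parent is also $\XX$-dependent (done), or it is of the pure form $w_{j,k}\lor\olnot{w_{j,k'}}$ and is either satisfied by $\sigma_i$ (here we need $\sigma_i$ to leave $W$-literals free, which it does, so this case is vacuous) or left untouched, in which case it still contains the resolved pivot and hence does not imply the pivot-free conclusion $\restrict{C_{t_i}}{\sigma_i}$. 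Hence $s(C_{t_i})$ keeps the $\XX$-dependent parent and $s(\restrict{\psi_i[1,t_i]}{\sigma_i})$ is the desired derivation.

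The main obstacle I anticipate is \textbf{not} the $s(\cdot)$-dependence argument (which is essentially a transcription of Lemma~\ref{lem:psi-no-reuse}) but rather checking closure of $\psi_i$ under premises and the existence of a suitable terminal clause — i.e., making sure that the four-case definition of $\psi_i$ really does collect a full sub-derivation with a $W$-supported, $\XX$-dependent ``output''. The delicate point is case~3 versus case~4: a $W$-pivot inference that \emph{does} depend on $\XX$ axioms is included iff $r(C)$ meets $X_i$, so its premises — which are $\XX$-dependent and may have $r$ meeting $X_i$ as well, by the observation that $r$ can lose at most the two extreme $X$-literals per resolution step — must be shown to also satisfy one of the membership conditions; and one must invoke Lemma~\ref{lem:r} to argue that when $r(C_{t_i})$ fails to meet $X_i$ the restricted clause becomes $W$-supported, which is exactly what pins down the ``boundary'' clause $C_{t_i}$. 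I expect the bulk of the write-up to be this careful case analysis, after which Lemmas~\ref{lem:w-no-merge} and~\ref{lem:r} do the rest.
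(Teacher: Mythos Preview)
Your proposal has a genuine gap: the claim that ``every premise of a clause in $\psi_i$ is itself in $\psi_i$'' is false, and this is precisely where the paper's proof diverges from a simple transcription of Lemma~\ref{lem:psi-no-reuse}.

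Consider a clause $C_k \in \psi_i$ by case~2 (pivot $x_{j'} \in X_i$). One of its premises $C_p$ contains $x_{j'}$, but $C_p$ may itself be the conclusion of an inference with $X$-pivot $x_{j''} \notin X_i$. Such a $C_p$ matches none of the four membership cases: it is not an axiom, its pivot is not in $X_i$, and its pivot is not in $W$. Nor is $C_p$ necessarily satisfied by $\sigma_i$ --- for instance $C_p = \olnot{x_a} \lor w_{j,k}$ with $a \in X_i$ is untouched by $\sigma_i$ yet could easily arise from resolving $\olnot{x_a} \lor x_b$ against $\olnot{x_b} \lor w_{j,k}$ with $b > (i+1)\ell$. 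So $\psi_i$ is not a sub-derivation, and $\restrict{\psi_i}{\sigma_i}$ is not a semantic derivation with the inherited parent pointers.

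The paper circumvents this by inducting over clauses of $\psi$, not $\psi_i$: for every $C_k \in \psi$ that is $\XX$-dependent and not satisfied by $\sigma_i$, it exhibits a \emph{representative} $C_{k'} \in \psi_i$ with $k' \le k$ satisfying $\restrict{C_{k'}}{\sigma_i} \vDash \restrict{C_k}{\sigma_i}$. When $C_k$ itself lands in $\psi_i$ one takes $C_{k'} = C_k$, and the semantic premises for $\restrict{C_k}{\sigma_i}$ in the sequence $\restrict{\psi_i}{\sigma_i}$ are the representatives $\restrict{C_{p'}}{\sigma_i}$, $\restrict{C_{q'}}{\sigma_i}$ of its parents --- \emph{not} the parents themselves. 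When $C_k$ does not land in $\psi_i$ (exactly the $X$-pivot-outside-$X_i$ case above: the pivot is assigned, one parent is satisfied, the other implies $C_k$ after restriction) one inherits $C_{k'} = C_{p'}$ from the surviving parent. This is what validates $\restrict{\psi_i[1,t_i]}{\sigma_i}$ as a semantic derivation, and is also what guarantees a suitable terminal clause exists (apply the hypothesis to $C_t$ itself).

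A secondary issue: your treatment of $\XX$-dependence in $s(\cdot)$ argues only at the level of the parents of $C_{t_i}$, as in Lemma~\ref{lem:psi-no-reuse}. The paper instead runs a full second induction showing that for \emph{every} $D_k \in s(\restrict{\psi_i[1,t_i]}{\sigma_i})$, if $C_k$ is $\XX$-dependent then so is $D_k$; the key observation (which replaces the autarky property of $\rho$ used in Lemma~\ref{lem:psi-no-reuse}) is that $\sigma_i$ touches only $X$-variables, so non-$\XX$-dependent clauses are literally unchanged and hence still contain their pivot.
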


\begin{proof}
  Let $C_{t_i}$ be the first clause in $\psi_i$ that depends on an $\XX$
  axiom and such that  $\restrict{C_{t_i}}{\sigma_i}$ is supported on $W$ variables. We
  prove that $t_i$ is well-defined, that
  $\restrict{\psi_i[1,t_i]}{\sigma_i}$ is a valid semantic resolution
  derivation, and that $D_{t_i} = s(\restrict{\psi_i}{\sigma_i})$
  depends on an $\XX$ axiom.

  Our induction hypothesis is that for $k\leq t_i$ (or any $k$ if
  $t_i$ does not exist), if the clause $C_k \in \psi$ depends on an
  $\XX$ axiom and is not satisfied by $\sigma_i$, then there exists a
  clause $C_{k'} \in \psi_i$ with $k'\leq k$ that implies $C_k$ modulo
  $\sigma_i$, that is
  $\restrict{C_{k'}}{\sigma_i} \vDash \restrict{C_k}{\sigma_i}$, and
  depends on an $\XX$ axiom (over $\psi$).

  If the induction hypothesis holds then $t_i$ is well-defined: since
  $C_t$ is not satisfied by $\sigma_i$ and depends on an $\XX$ axiom
  there exists a clause $C_{t'}\in \psi_i$ that depends on an $\XX$
  axiom and such that
  $\restrict{C_{t'}}{\sigma_i} \vDash \restrict{C_t}{\sigma_i} = C_t$,
  which is supported on $W$ variables.

  The base case is when $C_k$ is a non-satisfied $\XX$ axiom, where we can take
  $C_{k'}=C_k$.
  For the inductive case let $C_p$ and $C_q$ be the premises of $C_k$
  in $\psi$. If exactly one of the premises, say $C_p$, is
  non-satisfied and, furthermore, depends on an $\XX$ axiom, then by
  the induction hypothesis we can take $C_{k'}=C_{p'}$.  Otherwise we
  need to consider a few subcases.
  If the pivot is an $X$ variable then both premises depend on an
  $\XX$ axiom (by Lemma~\ref{lem:w-no-merge}), hence neither premise
  is satisfied. It follows that the pivot is unassigned by $\sigma_i$,
  and therefore we can take $C_{k'}=C_k$.

  If the pivot is a $W$ variable then, because $\sigma_i$ only assigns
  $X$ variables, neither premise is satisfied. We have two subcases:
  if exactly one premise depends on an
  $\XX$ axiom, say $C_p$, then $C_{p'}$ is present in $\psi_i$, and by
  construction of $\psi_i$ the other premise $C_q$ is present in $\psi_i$
  if and only if the conclusion $C_k$ is. If both premises depend on an $\XX$ axiom
  then both $C_{p'}$ and $C_{q'}$ are present in $\psi_i$.

  Therefore in the two latter subcases it is enough to prove that
  $C_k \in \psi_i$, since then we can take $C_{k'}=C_{k}$ and we have
  that $\restrict{C_k}{\sigma_i}$ follows from a valid semantic
  resolution step.
  Indeed by Lemma~\ref{lem:r}
  $\restrict{C_k}{\sigma_i}$ is a clause supported on $W$ variables,
  which by definition of $C_{t_i}$ implies that $k=t_i$.
  However, since the pivot is a $W$ variable,
  $\restrict{C_{p'}}{\sigma_i}$ is also supported on $W$
  variables and, together with the fact that $C_{p'}$ depends on an
  $\XX$ axiom, this contradicts that $C_{t_i}$ is the first such clause.

  This finishes the first induction argument and proves that
  $\restrict{\psi[1,t_i]}{\sigma_i}$ is a valid semantic derivation;
  it remains to prove that $D_{t_i}$ depends on an $X$ axiom over
  $s(\restrict{\psi_i}{\sigma_i})$.
  We prove by a second induction argument that for every clause $D_k \in s(\restrict{\psi_i[1,t_i]}{\sigma_i})$, if $C_k$ depends
  on an $\XX$ axiom then so does $D_k$.
  The base case, when $D_k$ is an axiom, holds.

  For the inductive case
  fix $C_k$, $E_k=\restrict{C_k}{\sigma_i}$, and $D_k=s(E_k)$, and let
  $E_p=\restrict{C_p}{\sigma_i}$ and
  $E_q=\restrict{C_q}{\sigma_i}$ be the premises of $E_k$ in
  $\restrict{\psi_i}{\sigma}$.
  When both $C_p$ and $C_q$ depend on an $X$ axiom, then by
  hypothesis so do $D_p$ and $D_q$ and we are done. We only need to
  argue the case when one premise $C_p$ depends on an $X$ axiom and
  the other premise $C_q$ does not. In that case, because
  $\sigma_i$ only affects $X$ variables, all the axioms used in the
  derivation of $C_q$ are left untouched by $\sigma_i$, therefore
  we have that $s(\sigma_i(C_q))=C_q$, which contains the pivot used
  to derive $C_k$ and therefore does not imply $s(\sigma_i(C_k))$. By
  construction of $s(\cdot)$, $s(\sigma_i(C_k))$ depends on
  $s(\sigma_i(C_p))$.
\end{proof}

\begin{lemma}
  \label{lem:psi-large}
  Let $\eta$ be a resolution derivation from $F_{\ell,1,n}$ of a
  clause $C$ supported on $W$ variables that depends on an $\XX$
  axiom. Then $\setsize{\eta} \geq (n-2)(\ell-\mu(C))/2$.
\end{lemma}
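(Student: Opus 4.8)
The proof would be purely syntactic — no restrictions, as foreshadowed in the introduction. The plan is to reduce the bound to a per‑block count of pivots, and then bound the per‑block count by projecting $\eta$ onto the variables of a single block and exploiting that the resulting system is minimally unsatisfiable.

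First the reduction. Since $C$ is supported on $W$ and $\eta$ uses an $\XX$ axiom, Lemma~\ref{lem:need-all-x} (applied with $m=1$, so that $\hat\imath=i$) shows that $\eta$ uses some axiom $A_{j,b(j)}$ for every $j\in[\ell]$; this holds in particular for each of the $\ell-\mu(C)$ blocks $j$ that are not mentioned by $C$. I would show that for each such $j$, $\eta$ has at least $(n-2)/2$ resolution steps whose pivot lies in $W_j=\set{w_{j,1},\ldots,w_{j,n}}$. As the sets $W_j$ are pairwise disjoint, these collections of steps are disjoint, and summing over the $\ell-\mu(C)$ blocks yields $\setsize{\eta}\ge(n-2)(\ell-\mu(C))/2$.

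Now fix a block $j$ not mentioned by $C$ and assume, as we may, that $\eta$ has been trimmed to the ancestors of $C$. Then every variable occurring in a clause of $\eta$ but not in $C$ is the pivot of some step of $\eta$, because a non‑pivot literal is always inherited by the resolvent. So it is enough to show that all of $w_{j,1},\ldots,w_{j,n}$ occur in $\eta$ (this is in fact stronger than needed). In $F_{\ell,1,n}$ these variables occur only in the axioms $A_{j,0},A_{j,1}$ (each carrying $w_{j,1}$ and $w_{j,n}$) and $B_{j,k,0},B_{j,k,1}$ for $k\in[n-1]$ (each carrying $w_{j,k}$ and $w_{j,k+1}$), so it suffices to show that $\eta$ uses, for every $k\in[n-1]$, at least one of $B_{j,k,0},B_{j,k,1}$. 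To this end, project $\eta$ onto $W_j$: replace every clause $D$ by the subclause $\restrict{D}{j}$ of its $W_j$‑literals. A step on a $W_j$‑pivot projects to a genuine resolution step, while a step on any other pivot projects to a weakening step (then $\restrict{D}{j}$ contains both parents' projections), so the projection is a sound resolution‑with‑weakening derivation of $\restrict{C}{j}=\bot$. Its leaves are the projections of $\eta$'s axioms: $B_{j,k,c}\mapsto B_{j,k,c}$, $A_{j,c}\mapsto A_{j,c}^-:=w_{j,1}^{c}\lor w_{j,n}^{c}$, and every axiom from another block maps to the empty clause. Empty leaves can feed only into weakening steps — a $W_j$‑pivot step has its pivot variable in both premises, hence both projected premises are nonempty — so deleting them and contracting the resulting no‑op steps leaves a sound derivation of $\bot$ from the set $\mathcal A_j$ of block‑$j$ axioms actually used by $\eta$; thus $\mathcal A_j$ is unsatisfiable. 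But $\WW_j\union\set{A_{j,0}^-,A_{j,1}^-}$ is minimally unsatisfiable: it is unsatisfiable because $\WW_j$ forces $w_{j,1}=\cdots=w_{j,n}$ while $A_{j,0}^-\wedge A_{j,1}^-$ forces $w_{j,1}\ne w_{j,n}$; and it becomes satisfiable if one removes either $A_{j,c}^-$ (use a constant assignment on $W_j$) or both of $B_{j,k,0},B_{j,k,1}$ at any single position $k$ (set $w_{j,1}=\cdots=w_{j,k}=1$ and $w_{j,k+1}=\cdots=w_{j,n}=0$). Consequently $\mathcal A_j$ must contain one of $B_{j,k,0},B_{j,k,1}$ for every $k\in[n-1]$, so all of $w_{j,1},\ldots,w_{j,n}$ occur in $\eta$ and hence are pivots, giving at least $n\ge(n-2)/2$ steps with pivot in $W_j$.

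The step I expect to be the crux is the contraction of no‑op steps above: one must ensure that the empty leaves do not trivialise the projected derivation — that is, that the final $\bot$ is genuinely obtained from $\mathcal A_j$ and not ``for free'' by projecting some non‑block‑$j$ axiom that merely happens to be an ancestor of $C$ (recall $C$ may well be semantically implied by the $\WW$ axioms, so $A_{j,b(j)}$ is used only syntactically). This is precisely where the hypothesis that $\eta$ \emph{depends on an $\XX$ axiom} must be used, together with Lemma~\ref{lem:w-no-merge} controlling derivations that touch only $\WW$ axioms; the cleanest route is probably to first pass to the sub‑DAG of $\eta$ lying between $A_{j,b(j)}$ and $C$, so that every surviving clause genuinely descends from $A_{j,b(j)}$, before projecting. (The constant $(n-2)/2$ in the statement is deliberately generous — the argument above actually costs $\Theta(n)$ per block — which leaves room for a cruder ``span'' version of the per‑block bound should the clean projection argument need to be weakened.)
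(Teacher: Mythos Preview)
Your approach is correct and takes a genuinely different route from the paper's. The paper argues by contradiction per block: if fewer than $(n-2)/2$ of the $\WW_j$ axioms were used then some intermediate variable $w_{j,k}$ with $k\in[2,n-1]$ would be absent from $\eta$, and a variable renaming ($w_{j,k'}\mapsto y_{k'}$ for $k'<k$, $w_{j,k'}\mapsto\olnot{y_{k'-n}}$ for $k'>k$) shows that every descendant of $A_{j,b}$ then carries a $W_j$-literal of the form $y_{k'}\lor\olnot{y_{k''}}$, contradicting $j\notin J$. So the paper counts \emph{axioms} via a purely syntactic invariant in the spirit of Lemma~\ref{lem:w-no-merge}, whereas you count \emph{pivots} via projection plus minimal unsatisfiability --- a more semantic argument that also yields the sharper per-block constant $n$ rather than $(n-2)/2$.

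One caveat on your third paragraph. You correctly flag the delicate point, but the suggested patch --- passing to the sub-DAG ``between $A_{j,b(j)}$ and $C$'' --- is imprecise, since that set of clauses is not closed under premises and hence not a derivation; nor is Lemma~\ref{lem:w-no-merge} actually needed here. A clean fix is already implicit in your own observation that union (non-$W_j$-pivot) steps preserve nonemptiness of the projection: along any path from $A_{j,b(j)}$ (nonempty projection) to $C$ (empty projection) the first empty node must therefore arise from a $W_j$-resolution step with both premises nonempty. A straightforward induction shows that any assignment $\beta$ satisfying $\mathcal A_j$ satisfies every nonempty projected clause, in particular both premises of that step --- hence also their resolvent $\bot$, a contradiction. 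So $\mathcal A_j$ is unsatisfiable and your minimal-unsatisfiability argument goes through; in particular it forces \emph{both} $A_{j,0}$ and $A_{j,1}$ to be used, a fact the paper's proof neither needs nor establishes.
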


\begin{proof}
  By Lemma~\ref{lem:need-all-x} we can assume that $\eta$ uses at
  least one $A_{j,b}$ axiom for each $j\in[\ell]$.

  Let $J=\setdescr{j\in[\ell]}{\exists w_{j,k}\in\vars{C}}$ be the set of $W$
  blocks mentioned by $C$. We show that for each $j\in\olnot{J}=[\ell]\setminus J$ at
  least $(n-2)/2$ axioms over variables in $W_j$ appear in $\eta$, which
  makes for at least $(n-2)\setsize{\olnot{J}}/2 = (n-2)(\ell-\mu(C))/2$ axioms.

  Fix $j\in\olnot{J}$ and assume for the sake of contradiction that
  less than $(n-2)/2$ axioms over variables in $W_j$ appear in
  $\eta$. Then there exists $k\in[2,n-1]$ such that variable $w_{j,k}$
  does not appear in $\eta$. Rename variables as follows:
  $w_{j,k'} \mapsto y_{k'}$ for $k'<k$, and
  $w_{j,k'} \mapsto \olnot{y_{k'-n}}$ for $k'>k$. Then we can prove by
  induction, analogously to the proof of Lemma~\ref{lem:w-no-merge},
  that every clause derived from axiom $A_{j,b}$ is of the form
  $y_{k'} \lor \olnot{y_{k''}} \lor D$ where $D$ are literals
  supported outside $W_j$. Since that includes $C$, it contradicts our
  assumption that $j\notin J$.
\end{proof}

To conclude the proof of Theorem~\ref{th:separation} we simply need to
put the pieces together.

\begin{proof}[Proof of Theorem~\ref{th:separation}]
  We take as the formula family $F_{\ell=48\log n,n,n}$, for which a
  resolution refutation of length $\bigoh{n\log n}$ exists by
  Lemma~\ref{lem:ub}.

  To prove a lower bound we and assume that a RMA refutation $\pi$ of
  length $L \leq n^3 = 2^{16\ell} = \littleoh{(4/3)^{8\ell}}$ exists;
  otherwise the lower bound trivially holds. We apply the
  restriction given by Lemma~\ref{lem:random-restriction} to $\pi$ and
  we use Lemma~\ref{lem:psi-no-reuse} to obtain a resolution
  derivation $\psi$ of a clause supported on $W$ variables that uses
  an $\XX$ axiom. We then break $\psi$ into $m$ parts $\psi_i$, each
  of size at least $n\ell/16$ as follows from
  Lemmas~\ref{lem:psi-valid} and~\ref{lem:psi-large}. Finally by
  Lemma~\ref{lem:overcounting} we have
  $\setsize{\pi} \geq \setsize{\psi} \geq mn\ell/32 =
  \bigomega{n^2\log n}$.
\end{proof}

\subsection{Structural Consequences}

Theorem~\ref{th:separation} immediately gives us two
structural properties of RML and RMA. One is that proof length may
decrease when introducing a weakening rule.
\begin{corollary}
  \label{cor:weakening}
  There exists a family of formulas over $\bigoh{n\log n}$
  variables and $\bigoh{n\log n}$ clauses that have RML with
  weakening refutations of length $\bigoh{n\log n}$ but every RMA refutation requires
  length $\bigomega{n^2\log n}$.
\end{corollary}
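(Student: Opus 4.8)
The plan is to keep the same formula family as in Theorem~\ref{th:separation}, namely $F_{\ell,m,n}$ with $\ell=48\log n$ and $m=n$; it has $\bigoh{n\log n}$ variables and clauses. The lower bound is then nothing new: Theorem~\ref{th:separation} already asserts that every RMA refutation of $F_{48\log n,n,n}$ has length $\bigomega{n^2\log n}$. So the whole content of the corollary is to exhibit an RML-with-weakening refutation of length $\bigoh{n\log n}$, i.e.\ \emph{shorter} than what RMA (and a fortiori RML) permits.

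For the upper bound I would start from the resolution refutation of Lemma~\ref{lem:ub}. Its only feature that prevents it from being an RML refutation is that each of the clauses $w_{j,1}\lor\olnot{w_{j,n}}$ and $\olnot{w_{j,1}}\lor w_{j,n}$ encoding $w_{j,1}=w_{j,n}$ is used $m$ times (once for every $i$ with $\hat\imath=j$), yet it is not a merge; re-deriving it each time is exactly refutation~\ref{ref:1} and costs $\bigomega{n^2\log n}$. The key trick is that weakening lets us produce these clauses \emph{as merges}. For each $j$, first derive $w_{j,1}\lor\olnot{w_{j,n}}$ in $\bigoh{n}$ steps as in~\eqref{eq:ref-w1n}; then weaken it to $w_{j,1}\lor\olnot{w_{j,n}}\lor w_{j,2}$ and to $w_{j,1}\lor\olnot{w_{j,n}}\lor\olnot{w_{j,2}}$, and resolve these two over the pivot $w_{j,2}$. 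The resolvent is again $w_{j,1}\lor\olnot{w_{j,n}}$, but it is now the conclusion of a merge step, since $w_{j,1}$ and $\olnot{w_{j,n}}$ are common to both premises; hence it is a legal reusable lemma. Proceed symmetrically for $\olnot{w_{j,1}}\lor w_{j,n}$. This phase uses $\bigoh{n\ell}=\bigoh{n\log n}$ steps.

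With the $2\ell$ equality clauses available as merge lemmas, I would finish by replaying the rest of the refutation of Lemma~\ref{lem:ub}: for every $i$, two steps in the style of~\eqref{eq:ref-xi} produce $w_{\hat\imath,1}^b\lor\olnot{x_{i-1}}\lor x_i$ and a third produces the merge $\olnot{x_{i-1}}\lor x_i$, and then $\bigoh{m\ell}$ further steps resolve the chain of implications down to $\bot$. Every clause created in this phase is used exactly once, so it is irrelevant whether it is a merge. The total length is $\bigoh{n\ell+m\ell}=\bigoh{n\log n}$, and by Observation~\ref{obs:input-structure} the tree-like blocks decompose into the input-structured shape required by Definition~\ref{def:rml}, so this is a bona fide RML-with-weakening refutation.

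The part needing the most care is making sure the construction stays inside ``RML with weakening'' rather than sliding into full resolution: one must check that every clause of out-degree greater than one is either an axiom or the output of a genuine merge step — which holds for the laundered $w_{j,1}=w_{j,n}$ clauses by the trick above and is vacuous for the rest since those lemmas are used only once — and that adding the weakening rule to the unit sub-derivations does not otherwise disturb the input-structured bookkeeping of Definition~\ref{def:rml}. Everything else is a routine replay of Lemma~\ref{lem:ub}.
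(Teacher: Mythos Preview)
Your route differs from the paper's. You keep $F_n$ unchanged and use weakening \emph{locally} to turn each clause $w_{j,1}\lor\olnot{w_{j,n}}$ into a merge lemma; the paper instead passes to $F_n\land\olnot z$ for a fresh variable~$z$, weakens every axiom $C\in F_n$ to $C\lor z$, and observes that then \emph{every} resolution step of the refutation in Lemma~\ref{lem:ub} becomes a merge over $z$, while the lower bound carries over because in plain RMA the unit clause $\olnot z$ cannot interact with $F_n$. The paper's construction is slicker and sidesteps all the bookkeeping in your last paragraph; yours has the mild advantage of not touching the formula.

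There is, however, a gap in your laundering step. You derive the non-merge clause $w_{j,1}\lor\olnot{w_{j,n}}$ once and then weaken it \emph{twice}, but in RML a derived clause that is not a merge may be used only once, so the second weakening is illegal; and neither weakened clause is itself a merge, so neither can be stored as a lemma while you build the other. Observation~\ref{obs:input-structure} does not rescue this, since it presupposes a merge-resolution tree without weakening. The fix is to weaken \emph{before} the chain rather than after: weaken the axiom $w_{j,1}\lor\olnot{w_{j,2}}$ to $w_{j,1}\lor\olnot{w_{j,2}}\lor\olnot{w_{j,n}}$ and then run~\eqref{eq:ref-w1n} as written; the last step now resolves $w_{j,1}\lor\olnot{w_{j,n-1}}\lor\olnot{w_{j,n}}$ with the axiom $w_{j,n-1}\lor\olnot{w_{j,n}}$, a genuine merge over $\olnot{w_{j,n}}$, delivering $w_{j,1}\lor\olnot{w_{j,n}}$ as a lemma in a single input derivation with one weakening. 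With this repair your argument goes through.
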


\begin{proof}
  Consider the formula $F_n \land \olnot{z}$, where $F_n$ is the
  formula given by Theorem~\ref{th:separation} and $z$ is a new
  variable. If we weaken every clause $C\in F_n$ to $C \lor z$ then we
  can derive $F \lor z \vdash z$ in $\bigoh{n\log n}$ RML steps
  because each inference is a merge. However, if we cannot do
  weakening, then $\olnot{z}$ cannot be resolved with any clause in
  $F_n$ and the lower bound of Theorem~\ref{th:separation} applies.
\end{proof}

The second property is that RML and RMA are not \emph{natural} proof
systems in the sense of~\cite{BKS04TowardsUnderstanding} because proof
length may increase after a restriction.
\begin{corollary}
  \label{cor:restrictions}
  There exists a restriction $\rho$ and a family of formulas
  over $\bigoh{n\log n}$ variables and $\bigoh{n\log n}$ clauses that
  have RML refutations of length $\bigoh{n\log n}$ but every RMA refutation of
  $\restrict{F_n}{\rho}$ requires length $\bigomega{n^2\log n}$.
\end{corollary}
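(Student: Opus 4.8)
The plan is to carry over the idea of Corollary~\ref{cor:weakening}, but to push the auxiliary literal out of the derivation and into the formula, so that a single restriction puts us back to $F_n$. Let $F_n$ be the formula of Theorem~\ref{th:separation}, let $z$ be a fresh variable, and set
\[
  G_n = \setdescr{C \lor z}{C \in F_n} \union \set{\olnot{z}} \eqperiod
\]
This is still a family over $\bigoh{n\log n}$ variables and $\bigoh{n\log n}$ clauses. Take $\rho$ to be the restriction with $\rho(z) = 0$: it satisfies $\olnot{z}$ and deletes the literal $z$ from every other clause, so $\restrict{G_n}{\rho} = F_n$ (up to a trivially satisfied clause, which we ignore).

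First I would check that $G_n$ has an RML refutation of length $\bigoh{n\log n}$. Let $\pi = (C_1,\dots,C_L)$ be the resolution refutation of $F_n$ of length $L=\bigoh{n\log n}$ from Lemma~\ref{lem:ub}, and consider the sequence $(C_1 \lor z,\dots,C_L \lor z)$. Whenever $C_i = \res(C_j,C_k)$ with pivot $x$, also $C_i \lor z = \res(C_j\lor z, C_k \lor z)$ with the same pivot $x \neq z$; since $z$ appears in both premises, this inference is a merge. Hence every \emph{derived} clause of the sequence is a merge, so the reuse condition of Definition~\ref{def:rml} is satisfied and the sequence is a legal RML derivation. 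It ends in $C_L \lor z = z$, and one more inference $\res(z,\olnot{z}) = \bot$ --- whose conclusion has outdegree $0$ and so need not be a merge --- completes an RML refutation of $G_n$ of length $L+1 = \bigoh{n\log n}$.

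On the other hand $\restrict{G_n}{\rho} = F_n$, so Theorem~\ref{th:separation} gives that every RMA refutation of $\restrict{G_n}{\rho}$ has length $\bigomega{n^2\log n}$, which is the claimed gap. There is no real obstacle here beyond the bookkeeping already done for Corollary~\ref{cor:weakening}: the one thing to verify carefully is that the $z$-lifted copy of $\pi$ is genuinely an RML derivation, i.e., that every clause it reuses contains $z$ and is obtained by resolving on a pivot other than $z$ and is therefore a merge. Note that, as with Corollary~\ref{cor:weakening}, no closure-under-restrictions property of RML is used anywhere --- on the contrary, the corollary is precisely the statement that RML and RMA lack such closure.
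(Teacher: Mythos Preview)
Your proposal is correct and is essentially the same construction as the paper's: define $G_n=(F_n\lor z)\land\olnot z$, observe that lifting the short resolution refutation of $F_n$ by $\lor\,z$ makes every inference a merge (hence an RML proof of $z$, completed by resolving with $\olnot z$), and then note that setting $z=0$ recovers $F_n$, to which Theorem~\ref{th:separation} applies. The paper simply cites the argument of Corollary~\ref{cor:weakening} for the upper bound, which is exactly the verification you spell out.
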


\begin{proof}
  Consider the formula $G_n = (F_n \lor z) \land \olnot{z}$, where
  $F_n$ is the formula given by Theorem~\ref{th:separation},
  $F \lor z = \setdescr{C \lor z}{C\in F}$, and $z$ is a new
  variable. As in the proof of Corollary~\ref{cor:weakening} there is
  a RML derivation of $z$ of length $\bigoh{n\log n}$ steps, while
  $\restrict{G_n}{\rho}=F_n$.
\end{proof}

\section{Further Separations}
\label{sec:separationgeneral-x}

We can separate the different flavours of merge resolution that we introduced using a few variations of
$F_{\ell,m,n}$ where we add a constant number of redundant clauses for
each $i\in[\ell]$. We consider these different clauses part of $\WW$.

Upper bounds all follow the same pattern. We first show on a case-by-case basis how to obtain $w_1\olnot{w_n}$ and $\olnot{w_1}w_n$ as lemmas, and then proceed as in Section~\ref{sec:ub}.

Towards proving lower bounds we are going to generalize the lower
bound part of the proof of Theorem~\ref{th:separation} to apply to
these variations as well. Fortunately we only require a few local modifications.

First, we need to
prove an equivalent of Lemma~\ref{lem:w-no-merge}, which we do on a
case-by-case basis.

Second, we need to show that $k$-respecting restrictions can be
extended to the new variables. For each block $J_i$, since the new
clauses are semantically subsumed by $w_{i,1}=w_{i,2}$, there exists
a way to map the new variables into $w_{i,1}$ and $w_{i,2}$ so that the result
of the restriction is the same as if we had started with clauses
$\olnot{w_{i,1}} \lor w_{i,2}$ and $w_{i,1} \lor \olnot{w_{i,2}}$, which are already part of $\WW_i$. That is, the formula that we work with after Lemma~\ref{lem:psi-no-reuse} is a copy of an unaltered $F_{\ell',m',n'}$ formula.

The only part of the lower bound that depends on the specific
subsystem of Resolution is Lemma~\ref{lem:psi-no-reuse}; afterwards
all the information we use is that no clause supported on $W$
variables is reused. Furthermore, the only property of the subsystem
that we use in the proof of Lemma~\ref{lem:psi-no-reuse} is that
Lemma~\ref{lem:w-no-merge} applies. Therefore, the modifications we
just outlined are sufficient for the lower bound to go through.

\subsection{Separation between RMA and LRMA}

\begin{proposition}
  \label{prop:rma-srrma-x}
  There exists a family of formulas over $\bigoh{n\log n}$ variables and
  $\bigoh{n\log n}$ clauses that have RMA refutations of length
  $\bigoh{n\log n}$ but every LRMA refutation requires length
  $\bigomega{n^2 \log n}$.
\end{proposition}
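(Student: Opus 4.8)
The plan is to instantiate the template laid out at the start of this section. Take $F_{\ell,m,n}$ with $m=n$ and $\ell = \Theta(\log n)$ chosen large enough that $n^3 = \littleoh{(4/3)^{\ell/8}}$, and for each block $j\in[\ell]$ add a constant number of \emph{redundant} clauses --- each a weakening, by literals over a fresh auxiliary variable $v_j$, of the $\WW$-clauses of block $j$ --- selected so that each new clause is semantically subsumed by $w_{j,1}=w_{j,2}$. Call the resulting formula $G_n$; it has $\bigoh{n\log n}$ variables and clauses. The clauses are picked to guarantee two properties: \textbf{(a)} from the augmented $j$-th gadget there is an $\bigoh{n}$-step resolution derivation of each bridge clause $w_{j,1}\lor\olnot{w_{j,n}}$ and $\olnot{w_{j,1}}\lor w_{j,n}$ that contains a merge but is \emph{not} strongly regular (it resolves on a variable which later reappears and is resolved again); and \textbf{(b)} no strongly regular resolution derivation from the augmented $\WW$ axioms contains any merge --- this is the analogue of Lemma~\ref{lem:w-no-merge}, now valid only under local regularity.

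For the upper bound I would argue in RMA. For each $j$, use the derivations of property (a) to obtain $w_{j,1}\lor\olnot{w_{j,n}}$ and $\olnot{w_{j,1}}\lor w_{j,n}$ as lemmas: these are legal RMA component derivations precisely because they contain a merge, and RMA --- unlike LRMA --- does not insist that they be strongly regular. This costs $\bigoh{n\ell}=\bigoh{n\log n}$. With the $2\ell$ bridge clauses now available as reusable axioms, I would reuse them and proceed exactly as in Section~\ref{sec:ub} / the refutation of Lemma~\ref{lem:ub}: resolve each $\XX$ axiom with the appropriate bridge clause to get $\olnot{x_{i-1}}\lor x_i$ for every $i\in[m\ell]$ (a small merge-containing tree per $i$), and then refute the implication chain; this is $\bigoh{m\ell}=\bigoh{n\log n}$ further steps, and the only reused clauses are the bridge clauses, which is allowed since they contain merges. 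Total length $\bigoh{n\log n}$.

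For the lower bound I would run the proof of Theorem~\ref{th:separation} on $G_n$ essentially verbatim, making exactly the two modifications described above. First, extend every $\ell/4$-respecting restriction to the auxiliary variables by renaming $v_j$ to $w_{j,1}$ or $w_{j,2}$; since each new clause is subsumed by $w_{j,1}=w_{j,2}$ this collapses it to a $\WW$ axiom, so $\restrict{G_n}{\rho}$ is a copy of an unaugmented $F_{\ell/4,m,n}$ and Lemma~\ref{lem:random-restriction} still yields a $\rho$ under which every clause $C$ has $\mu(C)\le\ell/8$. Second, use property (b) in place of Lemma~\ref{lem:w-no-merge}: because every component derivation of an LRMA refutation $\pi$ of $G_n$ is strongly regular, a component that uses only augmented-$\WW$ axioms has no merge, so any clause supported on $W$ that depends only on $\WW$ has no merge ancestor and therefore cannot be reused in $\pi$ --- which is precisely the only fact about the proof system used in the proof of Lemma~\ref{lem:psi-no-reuse}. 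Hence that lemma goes through and produces a fragment $\psi$ of $s(\restrict{\pi}{\rho})$ that is a resolution derivation of a $W$-supported clause depending on an $\XX$ axiom in which no $W$-supported clause is reused. From there Lemmas~\ref{lem:psi-valid}, \ref{lem:psi-large} and~\ref{lem:overcounting}, which speak only about ordinary resolution, give $\setsize{\pi}\ge\setsize{\psi}\ge mn\ell/32 = \bigomega{n^2\log n}$ (assuming $L\le n^3$, else the bound is immediate).

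The main obstacle is the construction of the constant-size per-block gadget together with the proof of property (b). Property (a) --- exhibiting a short non-strongly-regular merge derivation of the bridge clauses --- is the straightforward half once the clauses are fixed. The delicate point is that the gadget must be rich enough to enable this non-regular shortcut yet poor enough that \emph{no} strongly regular derivation over the augmented $\WW$ ever produces a merge; this is exactly the syntactic gap between RMA and LRMA that the proposition exploits. Proving (b) calls for a careful case analysis of the resolution steps available over the augmented gadget, using crucially that strong regularity forbids a pivot from reappearing once resolved, which is what closes off the only route to a merge. One must also check that the very clauses used to force (b) remain subsumed by $w_{j,1}=w_{j,2}$, so that the restriction-extension step is unaffected --- routine, but necessary for the template to close.
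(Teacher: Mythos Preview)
Your overall plan matches the paper's template exactly: augment each $\WW$-block with a constant-size gadget so that RMA can derive the bridge clauses $w_{j,1}\lor\olnot{w_{j,n}}$ and $\olnot{w_{j,1}}\lor w_{j,n}$ as lemmas via a non-strongly-regular derivation containing a merge, while LRMA cannot; then rerun the machinery of Theorem~\ref{th:separation} with the analogue of Lemma~\ref{lem:w-no-merge} holding only under local regularity. The upper- and lower-bound skeletons, the role of the restriction extension, and the observation that only Lemma~\ref{lem:psi-no-reuse} depends on the proof system are all correct.

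The genuine gap is your proposed shape for the gadget. Requiring every added clause to be a \emph{weakening} of a $\WW$-clause by a literal over a single auxiliary $v_j$ cannot deliver both (a) and (b). If both $B_{j,1,b}\lor v_j$ and $B_{j,1,b}\lor\olnot{v_j}$ are present, resolving them on $v_j$ is a strongly regular merge yielding $B_{j,1,b}$, so LRMA gets the bridge clauses and (b) fails; if you avoid such a complementary pair, $v_j$ can never be eliminated (or only at the cost of a tautology), so (a) fails. The paper's gadget is different in kind: it adds, per block, two auxiliary variables $z_i,y_i$ and clauses such as $\olnot{w_{i,1}}\lor w_{i,2}\lor\olnot{z_i}$ together with $\olnot{w_{i,2}}\lor z_i$, where the latter is \emph{not} a weakening of any $\WW$-clause. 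The key structural feature is that this pair clashes on \emph{two} literals ($w_{i,2}$ and $z_i$); the paper isolates the observation (Lemma~\ref{lem:sr-no-clashing}) that two clauses with two pairs of opposite literals cannot both occur in a strongly regular input derivation. That is what enforces (b): any route to a merge over $\WW$ must touch both halves of such a double-clashing pair, which strong regularity forbids. Meanwhile RMA exploits non-regularity by resolving on $w_{i,2}$ and then on $z_i$ --- reintroducing $w_{i,2}$ --- to obtain $\olnot{w_{i,1}}\lor w_{i,2}$ as a merge. Your diagnosis that property~(b) is the crux is right; the mechanism you are missing is the double-clash trick, which forces you outside the class of weakenings (and, incidentally, means ``subsumed by $w_{j,1}=w_{j,2}$'' should be read as ``collapses to a $\WW$-axiom under a suitable substitution of the auxiliary variables'', not literal semantic implication).
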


The separating formula is $F_{m,n,\ell}^{(1)}$, where we add to $F_{m,n,\ell}$ clauses
  \begin{align}
  &\olnot{w_{i,1}} \lor w_{i,2} \lor \olnot {z_i}, \tag{C1}\\
  &\olnot{w_{i,2}} \lor z_i, \tag{C2}\\
  &w_{i,1} \lor \olnot{w_{i,2}} \lor \olnot {y_i}, \tag{C3}\\
  &w_{i,2} \lor y_i, \tag{C4}
\end{align}
  for each $i\in[\ell]$.
The new variables can be assigned as $z_{i} = w_{i,1}$ and $y_{i} = \olnot{w_{i,1}}$ to obtain the original formula back.

The upper bound follows from the following lemma.

\begin{lemma}
  Clauses $w_{i,1} \lor \olnot{w_{i,n}}$ and $\olnot{w_{i,1}} \lor w_{i,n}$ can be derived as lemmas from $F_{m,n,\ell}^{(1)}$ in length $\bigoh{n}$ in RMA.
\end{lemma}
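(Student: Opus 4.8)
The plan is to exhibit, for each of the two target clauses, an explicit input resolution derivation from $F_{m,n,\ell}^{(1)}$ of length $\bigoh{n}$ whose \emph{last} inference is a merge. Any such derivation contains a merge, so it is admissible as a non-final derivation in an input-structured RMA sequence, and hence the two clauses can be introduced as lemmas. The role of the auxiliary clauses (C1)--(C4) is exactly this: in $F_{\ell,m,n}$ itself, by (the $F^{(1)}$-analogue of) Lemma~\ref{lem:w-no-merge}, every derivation of $w_{i,1} \lor \olnot{w_{i,n}}$ from the $\WW$ axioms is merge-free, so these clauses could not otherwise be lemmas; the extra clauses let us manufacture a merge in the last step while still paying only $\bigoh{n}$.

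To derive $w_{i,1} \lor \olnot{w_{i,n}}$ I would first chain the $\WW$ axioms $B_{i,k,1}=w_{i,k} \lor \olnot{w_{i,k+1}}$ for $k = 2,\ldots,n-1$, obtaining $w_{i,2} \lor \olnot{w_{i,n}}$ in $n-3$ resolution steps. Next, resolve this clause with $w_{i,1} \lor \olnot{w_{i,2}} \lor \olnot{y_i}$ (clause~(C3)) over $w_{i,2}$, giving $w_{i,1} \lor \olnot{y_i} \lor \olnot{w_{i,n}}$; resolve that with $w_{i,2} \lor y_i$ (clause~(C4)) over $y_i$, giving $w_{i,1} \lor w_{i,2} \lor \olnot{w_{i,n}}$; and finally resolve with $B_{i,1,1}=w_{i,1} \lor \olnot{w_{i,2}}$ over $w_{i,2}$. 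The conclusion is $w_{i,1} \lor \olnot{w_{i,n}}$, and since the literal $w_{i,1}$ occurs in \emph{both} premises of this last inference, the step is a merge. The derivation of $\olnot{w_{i,1}} \lor w_{i,n}$ is the mirror image, using the dual halves of the gadgets: chain $B_{i,k,0}=\olnot{w_{i,k}} \lor w_{i,k+1}$ for $k=2,\ldots,n-1$ to reach $\olnot{w_{i,2}} \lor w_{i,n}$, resolve with $\olnot{w_{i,1}} \lor w_{i,2} \lor \olnot{z_i}$ (clause~(C1)) over $w_{i,2}$ and then with $\olnot{w_{i,2}} \lor z_i$ (clause~(C2)) over $z_i$, reaching $\olnot{w_{i,1}} \lor \olnot{w_{i,2}} \lor w_{i,n}$, and close with $B_{i,1,0}=\olnot{w_{i,1}} \lor w_{i,2}$ over $w_{i,2}$, whose conclusion $\olnot{w_{i,1}} \lor w_{i,n}$ is a merge on the literal $\olnot{w_{i,1}}$. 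Each derivation has length $n$ and is an input derivation, so the two lemmas together cost $\bigoh{n}$.

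The only point that needs any care is the bookkeeping of the final merge. One must cut the $\WW$-chain off at $w_{i,2}$ rather than carry it all the way down to $w_{i,1}$, route the ``connection'' between $w_{i,1}$ and $w_{i,n}$ temporarily through $y_i$ (resp.\ $z_i$) while re-introducing $w_{i,2}$, and then close with $B_{i,1,1}$ (resp.\ $B_{i,1,0}$), which simultaneously eliminates $w_{i,2}$ and re-introduces $w_{i,1}$ (resp.\ $\olnot{w_{i,1}}$)---a literal already present in the other premise, so the resolvent is a merge. Verifying that the intermediate resolvents are legal (no tautologies) and that no step other than the last one need be a merge is routine, so I do not expect a genuine obstacle; the substance of the lemma is precisely the observation that the four extra clauses of $F_{m,n,\ell}^{(1)}$ are engineered so that this last-step merge is available at cost $\bigoh{n}$.
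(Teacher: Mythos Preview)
Your proof is correct. The approach differs from the paper's only in where the merge is placed within the input derivation. The paper first re-derives $\olnot{w_{i,1}} \lor w_{i,2}$ as a merge---resolve the axiom $\olnot{w_{i,1}} \lor w_{i,2}$ with (C2) over $w_{i,2}$ to get $\olnot{w_{i,1}} \lor z_i$, then with (C1) over $z_i$ to get $\olnot{w_{i,1}} \lor w_{i,2}$ back, now as a merge on $\olnot{w_{i,1}}$---and only \emph{then} runs the $\WW$-chain out to $\olnot{w_{i,1}} \lor w_{i,n}$, invoking the RMA ``merge ancestor'' condition to justify that the final clause can be a lemma. You instead run the chain first (starting at index $2$) and route through (C3)/(C4) and $B_{i,1,1}$ so that the \emph{last} inference is the merge. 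Both are $\bigoh{n}$ input derivations using the same four extra clauses; your version incidentally yields the stronger RML conclusion (the lemma itself is a merge), while the paper's version exercises exactly the RMA relaxation. Neither derivation is strongly regular (both resolve on $w_{i,2}$ twice), which is consistent with the fact that this formula is meant to separate RMA from LRMA.
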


\begin{proof}
  We resolve clause $\olnot{w_{i,1}} \lor w_{i,2}$ first with (C2) and then (C1) in order to obtain $\olnot{w_{i,1}} \lor w_{i,2}$ as a merge, then derive
  $\olnot{w_{i,1}} \lor w_{i,n}$, having a merge as its ancestor, so it can be
  remembered. Analogously starting from $w_{i,1} \lor \olnot{w_{i,2}}$, (C3), and (C4) we can obtain $w_{i,1} \lor \olnot{w_{i,n}}$ as a lemma.
\end{proof}

The following observation is useful for the lower bound.

\begin{lemma}
  \label{lem:sr-no-clashing}
  Let $C$ and $D$ be clauses with two pairs of opposite literals. Then $C$ and $D$ cannot appear in the same locally regular input derivation.
\end{lemma}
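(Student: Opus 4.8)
The plan is to argue by contradiction using the definition of strong regularity. Suppose $C$ and $D$ both appear in the same locally regular input derivation $\eta$, and suppose they share two pairs of opposite literals; say $x, \olnot{x}$ occur one in each of $C, D$ and $y, \olnot{y}$ likewise, with $x \neq y$. Without loss of generality both clauses occur in the same strongly regular input (sub-)derivation, since local regularity means each input segment is strongly regular. Because the derivation is input (a centipede), there is a single ``spine'' and the two clauses $C$ and $D$ sit at comparable positions: one of them, say $C$, is an ancestor of $D$ along the spine, or they are joined at a resolution step further down. In any case there is a resolution step in $\eta$ at which the first of the two pivots, say $x$, is eliminated — this step lies at some position $i$, and after it no later clause may contain $x$ in its support by strong regularity.

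The key observation is this: at least one of $x, y$ is resolved away strictly before the other. Say $x$ is the pivot of the earlier of the two relevant steps, at position $i$, with the step eliminating $y$ at a later position $i' > i$ (or never, but then $C$ and $D$ together would force $y$ into the final clause). Now here is the tension: for $x$ to be a pivot at step $i$, both premises of step $i$ must together contain $x$ and $\olnot{x}$; but $C$ and $D$ are the clauses carrying those opposite occurrences, so (tracking the input structure) one of $C$, $D$ must be a premise at step $i$ or an ancestor feeding into position $i$ while still carrying the $y$ or $\olnot{y}$ literal. That literal over $y$ must survive from $C$ (or $D$) down to position $i'$. But then consider the step where the \emph{other} pivot is handled: strong regularity forbids the pivot variable of step $i$, namely $x$, from appearing in any clause at positions $\geq i$, so $C$ and $D$ cannot both sit at or below $i$ while still carrying opposite $x$-literals — one of them must be strictly above $i$, i.e., an ancestor of the step. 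Symmetrically for $y$ and the step at $i'$. Pinning both clauses down this way yields a contradiction: $C$ (carrying, say, $x$) must be above position $i$, and $D$ (carrying $\olnot{x}$) must also feed into position $i$, but then whichever of them also carries a $y$-literal is forced to transmit that literal past the step that resolves $y$, contradicting strong regularity at \emph{that} step.

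To keep the argument clean I would phrase it via the path structure of the input derivation: in an input derivation every clause lies on a unique path to the final clause, and for each variable $v$ there is at most one resolution step with pivot $v$ on that path (strong regularity forces it). Given the two opposite-literal pairs, $C$ and $D$ cannot lie on a common path both carrying their literals without one of the two pivot variables reappearing after being resolved. The main obstacle I expect is bookkeeping the case analysis of \emph{where} $C$ and $D$ sit relative to the two pivot steps — there are a few configurations (both below both steps, one below one step, the steps in either order) — but each collapses to the same strong-regularity violation, so the work is routine once the path/pivot-uniqueness formulation is in place.
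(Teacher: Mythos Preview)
Your plan circles the right ingredients—the centipede structure of an input derivation and strong regularity—but it is both more complicated than needed and has a genuine gap.

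The paper's proof is three lines. Write $C = x \lor y \lor C'$, $D = \olnot{x} \lor \olnot{y} \lor D'$, and say $C$ appears first. Case~1: if $x$ or $y$ is used as a pivot anywhere before $D$ appears, strong regularity directly forbids $D$ (whose support contains both variables) from appearing afterward. Case~2: otherwise the spine clause carries $x \lor y$ from the moment $C$ is introduced up to the step where $D$ would be resolved in, and then any resolvent of that spine clause with $D$ contains a complementary pair, so $D$ cannot be used. That is the whole argument.

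Your plan instead tries to locate \emph{two} pivot steps, one for $x$ at position $i$ and one for $y$ at position $i'$, and to extract a contradiction from their relative order. The first problem is that nothing guarantees both variables are ever pivots: the lemma is about an arbitrary derivation, not a refutation, so your parenthetical ``or never, but then $C$ and $D$ together would force $y$ into the final clause'' is not a contradiction at all—a leftover literal in the last clause of a derivation is perfectly fine. The second problem is that even when both pivot steps exist, the contradiction you gesture at (``whichever of them also carries a $y$-literal is forced to transmit that literal past the step that resolves $y$'') does not follow from what you have written. At step $i'$ the pivot \emph{is} $y$, so a single $y$-literal on the spine is removed there, not transmitted past. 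The only way a $y$-literal survives step $i'$ is if the spine already contains \emph{both} $y$ and $\olnot{y}$—i.e., the spine is tautological—and you never establish or even mention this. That tautological-resolvent observation is exactly the content of Case~2 above, and without it your case analysis cannot close. The claim that ``each configuration collapses to the same strong-regularity violation'' is where the argument breaks: the critical configuration collapses to a complementary pair in the resolvent, not to a regularity violation.
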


\begin{proof}
  Let $C=x\lor y \lor C'$ and $D=\olnot{x} \lor \olnot{y} \lor D'$. Assume wlog that $C$ is the first clause out of $C$ and $D$ to appear in the derivation. If $x$ or $y$ are used as pivots before $D$, then the locally regular condition prevents using $D$ as an axiom. Otherwise $x \lor y$ appears in the derivation since the time $C$ is used, which also prevents using $D$.
\end{proof}

The equivalent of Lemma~\ref{lem:w-no-merge} is the following.

\begin{lemma}
  \label{lem:srrma-no-reuse}
  Let $\eta$ be a LRMA derivation that only depends on $\WW$ axioms. Then no clause in $\eta$ can be reused.
\end{lemma}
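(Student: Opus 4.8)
The plan is to mirror the proof of Lemma~\ref{lem:w-no-merge} but to track, in addition to the shape of the clauses, which pair of opposite literals could force a reuse. Since $\eta$ is a LRMA derivation, it decomposes into an input-structured sequence of locally regular (i.e.\ strongly regular) input derivations whose non-final lemmas contain a merge. A clause can only be reused if it is such a lemma, hence if it contains a merge somewhere in its own input derivation. So it suffices to show that no clause appearing in a derivation depending only on the $\WW$ axioms of $F_{m,n,\ell}^{(1)}$ can have a merge in its (sub)derivation, and moreover that within a single input derivation the strong-regularity constraint together with Lemma~\ref{lem:sr-no-clashing} rules out the kind of clause that would be produced by a merge over the new $y_i,z_i$ variables.

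First I would prove by induction on the structure of a $\WW$-only derivation that every clause is of one of a bounded list of syntactic forms. For a fixed block index $i$, the relevant $\WW$ axioms are $w_{i,1}\lor\olnot{w_{i,k+1}}$-type chain clauses together with the four added clauses (C1)--(C4). A routine case analysis of a resolution step on pivot $w_{i,1}$, $w_{i,2}$, $z_i$, or $y_i$ shows that every derivable clause is (up to the block it lives in) one of: a chain clause $w_{i,k}\lor\olnot{w_{i,k'}}$ with $k\neq k'$; a clause of the form $\olnot{w_{i,1}}\lor w_{i,k}$ or $w_{i,1}\lor\olnot{w_{i,k}}$ possibly carrying one leftover literal $\olnot{z_i}$ or $\olnot{y_i}$; or the units $z_i$, $y_i$ and their consequences. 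The key point is that no resolution step in this list is a merge: whenever we resolve over a $w$-pivot the two surviving $w$-literals sit at distinct indices, and whenever we resolve over $z_i$ or $y_i$ the pivot literal is the only shared variable. Hence no clause in a $\WW$-only derivation has a merge as an immediate inference.

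Since reuse in LRMA requires a merge \emph{somewhere} in the clause's own input derivation, not just at its final step, I would then argue that the whole input sub-derivation producing a would-be-reused lemma is itself $\WW$-only (its parents in the global DAG are either $\WW$ axioms or earlier $\WW$-only lemmas), so by the previous paragraph it contains no merge at all, and the clause therefore cannot be a lemma. The one subtlety is the interaction between different blocks and the local regularity: a clause mixing, say, $\olnot{z_i}$ and a later $z_i$ (which would be the telltale sign of a pending merge on $z_i$) would need $z_i$ and $\olnot{z_i}$ to coexist; Lemma~\ref{lem:sr-no-clashing} and strong regularity forbid the two clauses that would resolve to such a clause from living in the same input derivation, so this case does not arise. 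I would spell this out as the place where the added clauses (C1)--(C4) genuinely cannot create a reusable merge.

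The main obstacle I anticipate is the bookkeeping in the inductive case analysis: unlike in Lemma~\ref{lem:w-no-merge}, the clause forms are no longer the single family $w_{j,k}\lor\olnot{w_{j,k'}}$ but a short menu that includes the leftover $\olnot{z_i}$/$\olnot{y_i}$ literals and the units, so I must verify exhaustively that resolving any two clauses from the menu yields another clause on the menu and is never a merge, and separately that the only clauses that could be merges (those with a repeated literal across premises) are exactly the ones excluded by local regularity via Lemma~\ref{lem:sr-no-clashing}. Once that case table is in place, the conclusion—no clause in $\eta$ can be reused—follows immediately, and this lemma then plugs into the lower-bound machinery exactly where Lemma~\ref{lem:w-no-merge} was used, as explained in the paragraph preceding Proposition~\ref{prop:rma-srrma-x}.
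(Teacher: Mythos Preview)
Your central claim---that in a $\WW$-only derivation ``whenever we resolve over $z_i$ or $y_i$ the pivot literal is the only shared variable''---is false, and this is exactly where the argument breaks. From the chain axiom $\olnot{w_{i,1}}\lor w_{i,2}$ and (C2)$=\olnot{w_{i,2}}\lor z_i$ you derive $\olnot{w_{i,1}}\lor z_i$; resolving this with (C1)$=\olnot{w_{i,1}}\lor w_{i,2}\lor\olnot{z_i}$ over $z_i$ is a genuine merge on $\olnot{w_{i,1}}$. So merges \emph{do} exist among $\WW$-only derivations in general, and your inductive clause menu cannot rule them out on its own. (This is precisely the merge the RMA upper bound exploits.)

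What actually kills this merge is not your late, vague appeal to Lemma~\ref{lem:sr-no-clashing} applied to ``clauses that could be merges'', but an application of that lemma to \emph{axioms}: (C1) and (C2) share two opposite pairs ($w_{i,2}/\olnot{w_{i,2}}$ and $\olnot{z_i}/z_i$), and likewise (C1) and (C3), so once (C1) appears in a strongly regular input derivation neither (C2) nor (C3) can. The paper's proof uses exactly this: first restrict the axiom set via Lemma~\ref{lem:sr-no-clashing}, and only then carry out the inductive case analysis on the much smaller pool \{chain clauses, (C1), (C4)\}. With (C2) excluded you can never introduce a positive $z_i$, so the $z_i$-pivot merge above simply cannot be set up. Your proposal inverts the order---case analysis first, regularity as an afterthought---and the case analysis step is where the error lives. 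Reorganise the argument to exclude axioms first and the induction becomes both correct and shorter.
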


\begin{proof}
  We can only obtain a merge using one of (C1) or (C3), assume wlog
  (C1) is the first of these to be used in the derivation. By
  Lemma~\ref{lem:sr-no-clashing} neither (C2) nor (C3) appear in the
  derivation. We can show by induction that we can only obtain clauses
  of the form $\olnot{w_{i,j}} \lor \olnot{w_{i,j'}} \lor \olnot{z_i}$ or
  $y_i \lor \olnot{w_{i,j}} \lor \olnot{z_i}$, never as a merge.
\end{proof}

\subsection{Separation between RML/LRMA and LRML}

\begin{proposition}
  \label{prop:rml-srrma-srrml-x}
  There exists a family of formulas over $\bigoh{n\log n}$ variables and
  $\bigoh{n\log n}$ clauses that have RML and LRMA and refutations of length
  $\bigoh{n\log n}$ but every LRML refutation requires length
  $\bigomega{n^2 \log n}$.
\end{proposition}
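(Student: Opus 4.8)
The plan is to produce a single formula family that simultaneously admits short RML and short LRMA refutations, yet forces any LRML refutation to be long. Following the blueprint at the start of Section~\ref{sec:separationgeneral-x}, I would take a variant $F^{(2)}_{\ell,m,n}$ of $F_{\ell,m,n}$ obtained by adding, for each block $i\in[\ell]$, a constant number of auxiliary clauses involving a fresh variable (or two) together with $w_{i,1}$ and $w_{i,2}$. The key design requirement is a kind of asymmetry: the gadget should allow deriving the two clauses representing $w_{i,1}=w_{i,n}$ either (a) with a non-regular derivation whose last step is a \emph{merge} (giving RML), or (b) with a \emph{locally regular} derivation that has a merge somewhere among its ancestors but \emph{not} at the final step (giving LRMA), while \emph{no} derivation is simultaneously locally regular and has the reused clause obtained via a merge step or with a merge ancestor in a way compatible with the regularity bookkeeping of LRML. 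Concretely, one natural choice is a gadget where getting the merge requires resolving on a pivot that must then reappear later if one also wants to reach $w_{i,1}\lor\olnot{w_{i,n}}$, so the merge-producing step and local regularity are in conflict.

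**Upper bounds.** For the RML upper bound I would exhibit, block by block, a derivation of $w_{i,1}\lor\olnot{w_{i,n}}$ and $\olnot{w_{i,1}}\lor w_{i,n}$ whose final inference is a merge (using the added clauses C1--C4-style axioms), so these clauses are legal RML lemmas; then proceed exactly as in Section~\ref{sec:ub} to finish in $\bigoh{\ell(m+n)}$ total steps, which is $\bigoh{n\log n}$ for the parameters $\ell=\bigtheta{\log n}$, $m=n$. For the LRMA upper bound I would similarly give, block by block, a \emph{strongly regular} input derivation of the same two clauses in which a merge appears among the ancestors (but the clause we remember need not itself be a merge, which is exactly the slack LRMA has over LRML); again we then run the Section~\ref{sec:ub} argument. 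The point is that the auxiliary clauses are chosen so both derivations exist, but any attempt to combine ``merge-in-derivation'' with ``locally regular'' \emph{and} have the clause be reusable forces a pivot clash ruled out by Lemma~\ref{lem:sr-no-clashing}.

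**Lower bound.** Here I would reuse the machinery wholesale, as the paper instructs: the only subsystem-specific ingredient is the analogue of Lemma~\ref{lem:w-no-merge}/Lemma~\ref{lem:srrma-no-reuse}, namely a lemma stating that in any LRML derivation depending only on $\WW$ axioms (now including the new C-clauses), no clause can be reused. I would prove this on a case-by-case basis as in Lemma~\ref{lem:srrma-no-reuse}: identify which added axioms can yield a merge at all, invoke Lemma~\ref{lem:sr-no-clashing} to show that once one such axiom is used the conflicting ones are barred from the same locally regular input derivation, and then an induction on the derivation shows the only clauses obtainable are of a restricted syntactic shape and are never merges — hence, being merge-free in an RML-style system, they cannot have outdegree $>1$. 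The extension of $k$-respecting restrictions to the new variables goes through verbatim (each new variable maps into $w_{i,1}$ or $w_{i,2}$, recovering an unaltered $F_{\ell',m',n'}$), so Lemmas~\ref{lem:psi-no-reuse} through~\ref{lem:psi-large} and Lemma~\ref{lem:overcounting} apply unchanged, yielding the $\bigomega{n^2\log n}$ bound from the same counting as in the proof of Theorem~\ref{th:separation}.

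**Main obstacle.** The delicate part is engineering the gadget so that \emph{both} RML and LRMA get the short proof but LRML genuinely does not — the two ``good'' systems pull in opposite directions (RML wants the merge at the root, local regularity wants pivots never to recur), and LRML wants both at once. I expect to spend most of the effort verifying that the chosen constant-size gadget (i) admits a merge-terminating derivation of the $w_{i,1}=w_{i,n}$ clauses, (ii) admits a strongly regular derivation of them with a merge ancestor, and (iii) provably admits no locally regular derivation in which such a clause is both reusable and has the required merge structure — with (iii) reduced cleanly to Lemma~\ref{lem:sr-no-clashing} plus a short structural induction, exactly mirroring Lemma~\ref{lem:srrma-no-reuse}.
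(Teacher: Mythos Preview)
Your plan matches the paper's approach: a variant $F^{(2)}_{\ell,m,n}$ with fresh variables $z_i,y_i$ and four extra clauses per block, an RML upper bound whose last step is a merge (obtained by running the $w$-chain twice, hence non-regular), an LRMA upper bound that first resolves the two new clauses to get $\olnot{w_{i,1}}\lor w_{i,2}$ as a merge and then chains regularly to $\olnot{w_{i,1}}\lor w_{i,n}$, and a lower bound via the general machinery once a ``no $\WW$-only clause is reusable in LRML'' lemma is established. One small divergence: the paper's no-reuse lemma (Lemma~\ref{lem:srrml-no-reuse}) does \emph{not} go through Lemma~\ref{lem:sr-no-clashing}; for the gadget actually used, (C1) and (C2) share two literals and clash on only one, so that lemma does not bite. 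Instead the argument is precisely the ``pivot must reappear'' obstruction you describe under your natural-choice comment: resolving (C1) with (C2) directly reproduces an existing axiom, and any other first use of a new clause resolves away $w_{i,2}$, after which strong regularity forbids bringing $w_{i,2}$ back via the remaining new clauses, so no merge is ever reached.
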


The separating formula is $F_{m,n,\ell}^{(2)}$, where we add to $F_{m,n,\ell}$ clauses
\begin{align}
  &z_i \lor \olnot{w_{i,1}} \lor w_{i,2}, \tag{C1}\\
  &\olnot{z_i} \lor \olnot{w_{i,1}} \lor w_{i,2}, \tag{C2}\\
  &y_i \lor w_{i,1} \lor \olnot{w_{i,2}}, \tag{C3}\\
  &\olnot{y_i} \lor w_{i,1} \lor \olnot{w_{i,2}}, \tag{C4}
\end{align}
for each $i\in[\ell]$.
The new variables can be assigned as $z_{i} = 1$ and $y_{i} = 1$ to obtain the original formula back.

The upper bounds follow respectively from the following lemmas.

\begin{lemma}
  Clauses $w_{i,1} \lor \olnot{w_{i,n}}$ and $\olnot{w_{i,1}} \lor w_{i,n}$ can be derived as lemmas from $F_{m,n,\ell}^{(2)}$ in length $\bigoh{n}$ in RML.
\end{lemma}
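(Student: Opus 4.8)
The plan is to mimic the RML upper-bound strategy from Section~\ref{sec:ub}, with the only new ingredient being a short RML derivation of the pair of clauses representing $w_{i,1}=w_{i,n}$ \emph{as reusable lemmas}, i.e.\ via a derivation whose last step is a merge. Once these two clauses for every $i\in[\ell]$ are available as lemmas, the rest of the refutation is identical to refutation~\ref{ref:1} restricted to its tail: resolve each $\XX$ axiom with one of these lemmas to get the pairs $w_{\hat\imath}^b\lor\olnot{x_{i-1}}\lor x_i$, collapse each pair to $\olnot{x_{i-1}}\lor x_i$ (these steps are merges, and since we never reuse them the argument is unaffected), and then contract the chain of implications $x_1,\,x_1\limpl x_2,\,\ldots,\,\olnot{x_{m\ell-1}}$ to $\bot$ in $\bigoh{m\ell}$ further steps. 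All of this costs $\bigoh{\ell(m+n)}=\bigoh{n\log n}$ for the choice $\ell=\bigtheta{\log n}$, $m=n$, matching Lemma~\ref{lem:ub}.

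So the real content is the lemma below it: producing $w_{i,1}\lor\olnot{w_{i,n}}$ and $\olnot{w_{i,1}}\lor w_{i,n}$ in $\bigoh{n}$ RML steps. First I would handle $\olnot{w_{i,1}}\lor w_{i,n}$. Note that both (C1) $z_i\lor\olnot{w_{i,1}}\lor w_{i,2}$ and (C2) $\olnot{z_i}\lor\olnot{w_{i,1}}\lor w_{i,2}$ have $\olnot{w_{i,1}}\lor w_{i,2}$ as a common sub-clause, so resolving them on the pivot $z_i$ yields $\olnot{w_{i,1}}\lor w_{i,2}$ as a \emph{merge}; this clause is therefore a legal RML lemma. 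Starting from this lemma we then chain with the $\WW$ axioms $B_{i,k,0}=\olnot{w_{i,k}}\lor w_{i,k+1}$ for $k=2,\ldots,n-1$ to derive $\olnot{w_{i,1}}\lor w_{i,n}$ in $\bigoh{n}$ steps; since this derivation has a merge ($\olnot{w_{i,1}}\lor w_{i,2}$) as an ancestor, its conclusion is itself a legal RML lemma, so it can be remembered. The clause $w_{i,1}\lor\olnot{w_{i,n}}$ is obtained symmetrically: resolve (C3) $y_i\lor w_{i,1}\lor\olnot{w_{i,2}}$ with (C4) $\olnot{y_i}\lor w_{i,1}\lor\olnot{w_{i,2}}$ on $y_i$ to get the merge $w_{i,1}\lor\olnot{w_{i,2}}$, then chain with $B_{i,k,1}=w_{i,k}\lor\olnot{w_{i,k+1}}$ for $k=2,\ldots,n-1$ to reach $w_{i,1}\lor\olnot{w_{i,n}}$ as a lemma.

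The main subtlety to check — and what I expect to be the only real obstacle — is bookkeeping the input-structured decomposition so that every clause of outdegree $>1$ really is a merge or a lemma. The two clauses $\olnot{w_{i,1}}\lor w_{i,2}$ and $w_{i,1}\lor\olnot{w_{i,2}}$ are merges, hence legitimately reusable; their descendants $\olnot{w_{i,1}}\lor w_{i,n}$ and $w_{i,1}\lor\olnot{w_{i,n}}$ are declared lemmas of the input-structured sequence, which is exactly what Definition~\ref{def:rml} permits (RML allows lemmas, not only merges, to be reused — the requirement is only that \emph{lemmas be merges}, so one must be a little careful: here the lemma $\olnot{w_{i,1}}\lor w_{i,n}$ is not itself a merge). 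Re-reading Definition~\ref{def:rml}, RML lemmas must be merges; so the correct framing is that $\olnot{w_{i,1}}\lor w_{i,2}$ (a merge) is the lemma that gets reused across the $\bigoh{m\ell}$ later input derivations, and in each such later derivation one re-chains the $\bigoh{n}$ $\WW$-steps to rebuild $\olnot{w_{i,1}}\lor w_{i,n}$ on the fly — but that costs $\bigoh{n}$ per use and brings us back to $\bigoh{mn\ell}$. The fix, and the point of the construction, is that the \emph{single} derivation of $\olnot{w_{i,1}}\lor w_{i,n}$ has a merge as an internal node, so that clause qualifies as a merge-ancestor lemma in the sense used for RMA; for RML one instead observes that the final resolution step producing $\olnot{w_{i,1}}\lor w_{i,n}$ can itself be arranged to be a merge (by keeping a spare copy of $\olnot{w_{i,1}}$ in a parallel branch, as in display~\eqref{eq:ref-xw}), making $\olnot{w_{i,1}}\lor w_{i,n}$ a genuine merge and hence a valid RML lemma. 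I would therefore present the derivation so that each of $w_{i,1}\lor\olnot{w_{i,n}}$ and $\olnot{w_{i,1}}\lor w_{i,n}$ appears as the conclusion of a merge step, and then conclude exactly as in Section~\ref{sec:ub}.
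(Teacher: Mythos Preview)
Your proposal correctly identifies the crux --- that in RML the lemma itself must be the conclusion of a merge step --- but you never actually produce such a derivation. Your first construction (resolve (C1) with (C2) on $z_i$ to obtain the merge $\olnot{w_{i,1}}\lor w_{i,2}$, then chain up to $\olnot{w_{i,1}}\lor w_{i,n}$) is precisely the paper's \emph{LRMA} upper bound for $F^{(2)}$, and as you yourself note it fails for RML because the final step of the chain is not a merge. Your proposed fix of ``keeping a spare copy of $\olnot{w_{i,1}}$ in a parallel branch, as in display~\eqref{eq:ref-xw}'' is not a construction: display~\eqref{eq:ref-xw} is a tree-like derivation whose merge is over the $x$-literals supplied by $\XX$ axioms, and you give no indication of how to reproduce that effect using only $\WW$ and (C1)--(C4) inside a single input derivation.

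The paper's construction is concrete and different from both of your attempts. Rather than resolving (C1) and (C2) immediately, it first chains (C1) against the $\WW$ axioms to obtain $z_i\lor\olnot{w_{i,1}}\lor w_{i,n}$, \emph{then} resolves with (C2) on $z_i$ to obtain $\olnot{w_{i,1}}\lor w_{i,2}\lor w_{i,n}$ (a merge on $\olnot{w_{i,1}}$, but still not the target), and \emph{then} chains the $w_{i,2}$ up through the $\WW$ axioms a second time. The final step resolves $\olnot{w_{i,1}}\lor w_{i,n-1}\lor w_{i,n}$ with $\olnot{w_{i,n-1}}\lor w_{i,n}$, which is a merge on $w_{i,n}$, so $\olnot{w_{i,1}}\lor w_{i,n}$ is a legitimate RML lemma. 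The whole thing is a single input derivation of length $\bigoh{n}$. The idea you were missing is that delaying the $z_i$-resolution until after one pass of the chain leaves both $w_{i,2}$ and $w_{i,n}$ in the clause simultaneously, so that the second pass creates the desired merge on $w_{i,n}$.
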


\begin{proof}
  We first resolve clauses $\olnot{w_{n-1}} \lor w_{i,n}$, $\olnot{w_{i,n-2}} \lor w_{i,n-1}$, \dots, $\olnot{w_{i,2}} \lor w_{i,3}$, (C1) to obtain $z_i \lor \olnot{w_{i,1}} \lor w_{i,n}$. We continue the input derivation resolving with (C2) to obtain $\olnot{w_{i,1}} \lor w_{i,2} \lor w_{i,n}$. We then resolve with $\olnot{w_{i,2}} \lor w_{i,3}$, $\olnot{w_{i,3}} \lor w_{i,4}$, \dots, $\olnot{w_{i,n-1}} \lor w_{i,n}$ to obtain $\olnot{w_{i,1}} \lor w_{i,n}$ as a merge over $w_{i,n}$. Analogously we can obtain $w_{i,1} \lor \olnot{w_{i,n}}$.
\end{proof}

\begin{lemma}
  Clauses $w_{i,1} \lor \olnot{w_{i,n}}$ and $\olnot{w_{i,1}} \lor w_{i,n}$ can be derived as lemmas from $F_{m,n,\ell}^{(2)}$ in length $\bigoh{n}$ in LRMA.
\end{lemma}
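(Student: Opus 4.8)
The plan is to produce, for each of the two target clauses, a single strongly regular input derivation whose final clause is that clause and which contains a merge; these two derivations then become the first two members of an input-structured sequence, after which one proceeds exactly as in Section~\ref{sec:ub}. The auxiliary clauses (C1)--(C4) are tailored so that a merge is available at the very first inference of each derivation, without having to spend a pivot variable twice (which is what the RML derivation of the previous lemma does, and is precisely what local regularity forbids).

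Concretely, to derive $\olnot{w_{i,1}} \lor w_{i,n}$ I would first resolve (C1) with (C2) on the pivot $z_i$, obtaining $\olnot{w_{i,1}} \lor w_{i,2}$; both surviving literals $\olnot{w_{i,1}}$ and $w_{i,2}$ occur in both premises, so this inference is a merge. I would then extend this into an input derivation by resolving in turn with the $\WW$ axioms $\olnot{w_{i,2}} \lor w_{i,3},\ \olnot{w_{i,3}} \lor w_{i,4},\ \dots,\ \olnot{w_{i,n-1}} \lor w_{i,n}$ on the pivots $w_{i,2}, w_{i,3}, \dots, w_{i,n-1}$, ending in $\olnot{w_{i,1}} \lor w_{i,n}$. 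Symmetrically, to derive $w_{i,1} \lor \olnot{w_{i,n}}$ I would resolve (C3) with (C4) on the pivot $y_i$ to obtain the merge $w_{i,1} \lor \olnot{w_{i,2}}$, and then resolve successively with $w_{i,2} \lor \olnot{w_{i,3}},\ \dots,\ w_{i,n-1} \lor \olnot{w_{i,n}}$ to reach $w_{i,1} \lor \olnot{w_{i,n}}$. Each derivation uses $n-1 = \bigoh{n}$ inferences and contains a merge, so it is a legitimate RMA lemma derivation, and the two together have length $\bigoh{n}$.

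The only point that actually needs verification — and the main obstacle, in the sense that it is where the LRMA requirement bites and the earlier RML argument fails — is local regularity: one must check that within each of these input derivations no pivot variable reappears in a later clause. This I expect to be straightforward: the pivots used in the first derivation are $z_i$ followed by the distinct variables $w_{i,2}, \dots, w_{i,n-1}$, after the $z_i$ step the variable $z_i$ no longer occurs anywhere, and after the step that resolves out $w_{i,k}$ the running clause is $\olnot{w_{i,1}} \lor w_{i,k+1}$ while every remaining axiom is $\olnot{w_{i,k'}} \lor w_{i,k'+1}$ with $k' > k$, none of which mentions $w_{i,k}$; the symmetric computation handles the second derivation. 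Hence both derivations are strongly regular, the sequence is locally regular, and the construction is a valid LRMA derivation producing $w_{i,1} \lor \olnot{w_{i,n}}$ and $\olnot{w_{i,1}} \lor w_{i,n}$ as lemmas in length $\bigoh{n}$.
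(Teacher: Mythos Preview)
Your proposal is correct and follows exactly the same approach as the paper: resolve (C1) with (C2) over $z_i$ to obtain $\olnot{w_{i,1}}\lor w_{i,2}$ as a merge, then chain through the $\WW$ axioms to reach $\olnot{w_{i,1}}\lor w_{i,n}$ as an RMA lemma, and symmetrically with (C3), (C4). The paper's proof is terser and does not spell out the strong regularity check, which you do correctly.
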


\begin{proof}
  We resolve clauses (C1) and (C2) to obtain
  $\olnot{w_{i,1}} \lor w_{i,2}$, which is a merge, then derive
  $\olnot{w_{i,1}} \lor w_{i,n}$, having a merge as its ancestor, so it can be
  used as a lemma. Analogously starting from (C3) and (C4) we can obtain $w_{i,1} \lor \olnot{w_{i,n}}$ as a lemma.
\end{proof}

The equivalent of Lemma~\ref{lem:w-no-merge} is the following.

\begin{lemma}
  \label{lem:srrml-no-reuse}
  Let $\eta$ be a LRML derivation that only depends on $\WW$ axioms. Then no clause in $\eta$ can be reused.
\end{lemma}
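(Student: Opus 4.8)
The plan is to mimic the structure of the proof of Lemma~\ref{lem:srrma-no-reuse}, but now exploiting the different shape of the (C1)--(C4) clauses of $F_{m,n,\ell}^{(2)}$ together with the local regularity constraint, which is what replaces the ancestor condition. The key observation is that in $F_{m,n,\ell}^{(2)}$ the only $\WW$-axioms that carry an auxiliary variable $z_i$ (resp.\ $y_i$) are (C1) and (C2) (resp.\ (C3) and (C4)), and in every one of these clauses the variable $w_{i,1}$ occurs with a fixed sign and $w_{i,2}$ with the opposite fixed sign. So the first step is to run the same case analysis as before: a reused clause in an input-structured decomposition must be a merge (a lemma), and the only way to create a merge out of $\WW$-axioms is to resolve over an auxiliary variable, i.e.\ to resolve (C1) against (C2) over $z_i$, or (C3) against (C4) over $y_i$.

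Next I would argue that once we commit to one such merge step we are locked out of producing any further reusable clause. Suppose without loss of generality that the first auxiliary resolution in $\eta$ is (C1) against (C2) over $z_i$, producing $\olnot{w_{i,1}} \lor w_{i,2}$. Because $\eta$ is locally regular, once $z_i$ has been used as a pivot it may never reappear in the derivation, so neither (C1) nor (C2) can be used again; moreover, the other two auxiliary clauses (C3) and (C4) contain the pair $w_{i,1}, \olnot{w_{i,2}}$, which clashes with the pair $\olnot{w_{i,1}}, w_{i,2}$ present in (C1) and in the just-derived merge, so by Lemma~\ref{lem:sr-no-clashing} neither (C3) nor (C4) can appear in the same locally regular input derivation as (C1). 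Hence after this point the only axioms still available are the plain equality clauses $B_{i,k,b}$ together with the single merge $\olnot{w_{i,1}} \lor w_{i,2}$. Now an inductive argument, exactly parallel to Lemma~\ref{lem:w-no-merge}, shows that every clause subsequently derived is of the form $w_{i,k} \lor \olnot{w_{i,k'}}$ with $k\neq k'$ (when it does not descend from the merge) or of the form $\olnot{w_{i,1}} \lor w_{i,k'}$ (when it does), and a generic resolution step of either type is never a merge. Therefore no second reusable clause is ever produced, and since the single merge itself has outdegree at most one in the remainder of $\eta$ by the same argument applied to its successors, no clause of $\eta$ is actually reused.

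The step I expect to be the main obstacle is making precise the claim that the lone merge $\olnot{w_{i,1}} \lor w_{i,2}$ cannot itself be reused: one has to check that every inference drawing on it still falls into the tightly controlled clause shapes above, so that its own successors are not merges either, and that local regularity forbids ever re-deriving $z_i$ to manufacture a second copy of it in a different part of the sequence. This is where the syntactic (as opposed to semantic) nature of the argument matters, since semantically $\olnot{w_{i,1}}\lor w_{i,2}$ is of course implied many times over; the care needed is to track which auxiliary variables have been spent and to invoke Lemma~\ref{lem:sr-no-clashing} at the right moment. Once that bookkeeping is in place, the conclusion that no clause in $\eta$ can be reused follows, and the lower bound for LRML then goes through verbatim by the reduction described at the start of Section~\ref{sec:separationgeneral-x}.
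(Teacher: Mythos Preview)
Your proposal has two real gaps, and between them the argument does not go through.

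First, the sentence ``the only way to create a merge out of $\WW$-axioms is to resolve over an auxiliary variable, i.e.\ to resolve (C1) against (C2) over $z_i$, or (C3) against (C4) over $y_i$'' is precisely what has to be proved. You assert it and then spend the rest of the proposal analysing what happens \emph{after} that merge, but the case analysis that rules out every other potential merge step is the entire content of the lemma. A priori one could try to resolve (C1) on $w_{i,2}$ with some $B$-clause first and only later bring in (C2); showing that this never yields a merge is exactly where strong regularity is needed, and you never carry out that analysis.

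Second, you misapply local regularity. The constraint is per input sub-derivation, not global over $\eta$: once the input derivation (C1)$\,\resop_{z_i}\,$(C2) terminates in its lemma, a fresh input derivation may freely use $z_i$, (C1), (C2), (C3), (C4) again. So ``once $z_i$ has been used as a pivot it may never reappear in the derivation'' and ``after this point the only axioms still available are the plain equality clauses'' are both false for $\eta$ as a whole. Your invocation of Lemma~\ref{lem:sr-no-clashing} is correctly scoped to one input derivation, but the conclusion you draw from it is not.

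You also miss what actually disposes of the merge (C1)$\,\resop_{z_i}\,$(C2)$=\olnot{w_{i,1}}\lor w_{i,2}$: this clause is literally the axiom $B_{i,1,0}$, so any later use of it as a lemma can be replaced by a use of the axiom, and the sub-derivation producing it can be deleted. The paper's proof handles this by a one-line preprocessing step. After that, it looks at the first step in a strongly regular input derivation where a (C)-clause is used as a premise: the pivot is forced to be $w_{i,2}$ (resolving on $w_{i,1}$ or on $z_i$ with anything other than the matching (C)-clause yields a tautology or is the preprocessed case), the running clause becomes $z_i\lor\olnot{w_{i,1}}\lor w_{i,j}$, and now strong regularity forbids ever touching $z_i$, $y_i$, or $w_{i,1}$ again because every axiom containing the needed complementary literal also contains $w_{i,2}$. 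Hence the running clause keeps this shape and no step is a merge. That within-derivation argument is what replaces the induction you sketch; your global ``locking out'' of axioms is not available.
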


The proof idea is that the only merge we can obtain
involves the $z_i$ or the $y_i$ variable. If we just resolve the two clauses over such a variable we
obtain a clause we already had, so this is useless. Otherwise we are
resolving one of $w_2$ away, which would be
reintroduced at the time of resolving $z_i$ away, and that is not
allowed by the SR condition.

\begin{proof}
  We can only obtain a merge by using one of the new clauses
  (C1)--(C4). If we resolve either pair of clauses over $y_i$ or over
  $z_i$ then we obtain a clause that was already present in the formula,
  and therefore we may preprocess such derivation away.

  Otherwise consider the first step in the derivation where one of the
  new clauses is used as a premise, assume wlog it is (C1). That step is with a clause of the
  form $\olnot{w_{i,2}} \lor w_{i,j}$, and we obtain a clause of the form $z_i \lor \olnot{w_{i,1}} \lor w_{i,j}$,
  which is not a merge. That clause can be possibly resolved over
  $w_{i,j}$ ($j>2$) to obtain other clauses of the same form, neither of
  which is a merge, but it cannot be resolved over $y_i$, $z_i$, or $w_{i,1}$
  because that step would reintroduce variable $w_{i,2}$.
\end{proof}

\subsection{Separation between LRML and REL}

\begin{proposition}
  \label{prop:srrml-rel-x}
  There exists a family of formulas over $\bigoh{n\log n}$ variables and
  $\bigoh{n\log n}$ clauses that have LRML refutations of length
  $\bigoh{n\log n}$ but every REL refutation requires length
  $\bigomega{n^2 \log n}$.
\end{proposition}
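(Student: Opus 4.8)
The plan is to instantiate the template of Section~\ref{sec:separationgeneral-x}. As the separating family take $F^{(3)}_{\ell,m,n}$, obtained from $F_{\ell,m,n}$ by adding, for each $i\in[\ell]$, the four clauses $z_i\lor\olnot{w_{i,1}}$, $\olnot{z_i}\lor\olnot{w_{i,1}}\lor w_{i,2}$, $\olnot{y_i}\lor w_{i,1}$, and $y_i\lor w_{i,1}\lor\olnot{w_{i,2}}$, with $z_i,y_i$ fresh, all counted as part of $\WW$. Substituting $z_i=y_i=w_{i,1}$ turns the first and third clauses into tautologies and the second and fourth into the already-present axioms $\olnot{w_{i,1}}\lor w_{i,2}$ and $w_{i,1}\lor\olnot{w_{i,2}}$; hence, exactly as in Section~\ref{sec:separationgeneral-x}, every $k$-respecting restriction extends to $z_i,y_i$ so that the restricted formula is again a copy of $F_{k,m,n}$, and Lemmas~\ref{lem:need-all-x} and~\ref{lem:psi-large} remain available unchanged afterwards.

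For the upper bound I would first derive $\olnot{w_{i,1}}\lor w_{i,n}$ and $w_{i,1}\lor\olnot{w_{i,n}}$ as lemmas in $\bigoh{n}$ LRML steps. Starting from $\olnot{z_i}\lor\olnot{w_{i,1}}\lor w_{i,2}$, resolve in turn with the equality-chain axioms $\olnot{w_{i,2}}\lor w_{i,3},\dots,\olnot{w_{i,n-1}}\lor w_{i,n}$ to reach $\olnot{z_i}\lor\olnot{w_{i,1}}\lor w_{i,n}$, and then resolve with $z_i\lor\olnot{w_{i,1}}$: the pivot is $z_i$ and the literal $\olnot{w_{i,1}}$ survives in both premises, so $\olnot{w_{i,1}}\lor w_{i,n}$ is a merge and becomes a lemma; the pivots used, $w_{i,2},\dots,w_{i,n-1},z_i$, do not reappear, so this input derivation is strongly regular. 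Symmetrically one obtains $w_{i,1}\lor\olnot{w_{i,n}}$ from $y_i\lor w_{i,1}\lor\olnot{w_{i,2}}$, the axioms $w_{i,2}\lor\olnot{w_{i,3}},\dots,w_{i,n-1}\lor\olnot{w_{i,n}}$, and $\olnot{y_i}\lor w_{i,1}$. With these $2\ell$ merge lemmas available, complete the refutation exactly as in Section~\ref{sec:ub}: combine each pair $A_{i,0},A_{i,1}$ with the two lemmas for block $\hat\imath$ (as in derivation~\eqref{eq:ref-xi}, every step of which is a merge) to produce $\olnot{x_{i-1}}\lor x_i$ as a further merge lemma, and finally resolve the implication chain down to $\bot$. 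The total length is $\bigoh{\ell n+\ell m}=\bigoh{\ell(m+n)}$, which is $\bigoh{n\log n}$ for $\ell=\bigtheta{\log n}$ and $m=n$.

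For the lower bound the only place where the ambient proof system enters is the analogue of Lemma~\ref{lem:w-no-merge}, which here I would state as: \emph{every clause derivable from the $\WW$ axioms of $F^{(3)}_{\ell,m,n}$ is absorbed by $F^{(3)}_{\ell,m,n}$.} Granting this, any clause in a REL refutation that depends only on $\WW$ axioms is absorbed, hence not $1$-empowering with respect to $F^{(3)}_{\ell,m,n}$, and — since adding clauses only strengthens unit propagation — not $1$-empowering with respect to $F^{(3)}_{\ell,m,n}$ together with any set of earlier lemmas either; so it is never a REL lemma and is never reused, which is precisely the input needed in the proof of Lemma~\ref{lem:psi-no-reuse}. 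From there the argument of Theorem~\ref{th:separation} — the splitting into parts $\psi_i$ (Lemma~\ref{lem:overcounting}) and Lemmas~\ref{lem:psi-valid} and~\ref{lem:psi-large} — applies verbatim to the restricted formula, yielding the $\bigomega{n^2\log n}$ bound.

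It remains to prove the absorption claim, which I expect to be the main obstacle. I would argue by induction on the derivation, in the spirit of the proof of Lemma~\ref{lem:w-no-merge}: axioms of $F^{(3)}_{\ell,m,n}$ are trivially absorbed; for a step $\res(A\lor x,B\lor\olnot x)=A\lor B$ with both premises absorbed, if $A\lor B$ is not a merge then by Lemma~\ref{lem:empowering-merge} it is not $1$-empowering and we are done, and otherwise one falls back on an explicit enumeration — a derivation from these $\WW$ axioms only ever produces clauses confined to a single gadget, of one of the forms $w_{i,k}\lor\olnot{w_{i,k'}}$, $\olnot{z_i}\lor\olnot{w_{i,1}}\lor w_{i,k}$, $\olnot{z_i}\lor w_{i,k}\lor\olnot{w_{i,k'}}$, $\olnot{w_{i,1}}\lor w_{i,k}\lor\olnot{w_{i,k'}}$, $z_i\lor\olnot{w_{i,k}}$, $z_i\lor\olnot{y_i}$ (and the $y_i$-symmetric ones), and each of these is checked to be absorbed by exhibiting the unit-propagation chains along the $i$-th equality gadget together with $z_i\lor\olnot{w_{i,1}}$ and $\olnot{y_i}\lor w_{i,1}$. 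This bookkeeping is routine but must be done with some care: unlike in the locally regular systems of the previous subsections there is no strong-regularity condition restricting which clauses coexist, and since absorption is not preserved under weakening the three-literal $W$-clauses have to be verified individually rather than inherited from their two-literal subclauses.
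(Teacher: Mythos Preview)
Your approach is correct and would yield a valid proof, but it differs from the paper's in the choice of separating formula, and this difference makes your absorption argument substantially harder than it needs to be.

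The paper adds, for each $i\in[\ell]$, only the two clauses $\olnot{w_{i,1}}\lor\olnot{w_{i,2}}\lor w_{i,3}$ and $w_{i,1}\lor w_{i,2}\lor\olnot{w_{i,3}}$ --- no new variables at all. The LRML upper bound is essentially the same as yours: resolve the first clause along the chain to $\olnot{w_{i,1}}\lor\olnot{w_{i,2}}\lor w_{i,n}$ and then with $\olnot{w_{i,1}}\lor w_{i,2}$, producing $\olnot{w_{i,1}}\lor w_{i,n}$ as a merge over $\olnot{w_{i,1}}$ in a strongly regular input derivation. The payoff comes in the analogue of Lemma~\ref{lem:w-no-merge}: since every variable in a $\WW$-block is a $w$-variable and the equality chain unit-propagates the entire block from any single assignment, every clause derivable from $\WW$ has width $\geq 2$ and is immediately seen to be absorbed --- for any literal $\ell\in C$, falsifying $C\setminus\ell$ assigns some $w_{i,k}$ and hence propagates everything. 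That is the whole proof.

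Your formula with fresh $z_i,y_i$ also works, but because setting $z_i=1$ or $y_i=0$ propagates nothing, you lose the one-line argument. The enumeration you sketch is incomplete (for instance $\olnot{z_i}\lor\olnot{y_i}\lor w_{i,k}$ and $y_i\lor z_i\lor\olnot{w_{i,k}}$ are derivable but do not match any listed form), and invoking Lemma~\ref{lem:empowering-merge} to restrict attention to merges is redundant once you are enumerating all derivable shapes anyway. A cleaner route, closer in spirit to the paper's, is: if $C\setminus\ell$ contains any literal outside $\{\olnot{z_i},y_i\}$, falsifying it propagates the whole $w$-chain, leaving at most one free variable; if $\ell$ is not then propagated, the propagated partial assignment extends to a satisfying assignment falsifying $C$, contradicting derivability. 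And if $C\setminus\ell\subseteq\{\olnot{z_i},y_i\}$, a direct check of the satisfying assignments of the gadget shows no non-tautological implied clause has this form. This closes the gap without an exhaustive case list.
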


The separating formula is $F_{m,n,\ell}^{(3)}$, where we add to $F_{m,n,\ell}$ clauses
\begin{align}
  &\olnot{w_{i,1}} \lor \olnot{w_{i,2}} \lor w_{i,3}, \tag{C1}\\
  &w_{i,1} \lor w_{i,2} \lor \olnot{w_{i,3}} \tag{C2}
\end{align}
for each $i\in[\ell]$.
If we assign $w_{i,2}=w_{i,1}$ we obtain a copy of $F_{m,n-1,\ell}$ which, even if technically it is not the same formula we started with, is enough for our purposes.

The upper bound follows from the following lemma.

\begin{lemma}
  Clauses $w_{i,1} \lor \olnot{w_{i,n}}$ and $\olnot{w_{i,1}} \lor w_{i,n}$ can be derived as lemmas from $F_{m,n,\ell}^{(3)}$ in length $\bigoh{n}$ in LRML.
\end{lemma}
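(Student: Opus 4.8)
The plan is to write down, for a fixed $i\in[\ell]$, explicit strongly regular input resolution derivations of the two target clauses whose last inference is a merge; since a merge is an admissible lemma of an RML sequence and strong regularity of each component derivation is precisely what LRML additionally demands, this suffices.

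For $\olnot{w_{i,1}} \lor w_{i,n}$ I would take as the spine of an input derivation the new clause (C1), that is $\olnot{w_{i,1}} \lor \olnot{w_{i,2}} \lor w_{i,3}$, and resolve it in turn against the axioms $B_{i,k,0}=\olnot{w_{i,k}} \lor w_{i,k+1}$ for $k=3,4,\dots,n-1$, each step trading the literal $w_{i,k}$ for $w_{i,k+1}$ and landing on $\olnot{w_{i,1}} \lor \olnot{w_{i,2}} \lor w_{i,n}$. The final step resolves this clause with $B_{i,1,0}=\olnot{w_{i,1}} \lor w_{i,2}$ over the pivot $w_{i,2}$; because $\olnot{w_{i,1}}$ occurs in both premises, the resolvent $\olnot{w_{i,1}} \lor w_{i,n}$ is a merge, hence a legitimate lemma. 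The clause $w_{i,1} \lor \olnot{w_{i,n}}$ is obtained by the mirror-image derivation: start from (C2), that is $w_{i,1} \lor w_{i,2} \lor \olnot{w_{i,3}}$, resolve against $B_{i,k,1}=w_{i,k} \lor \olnot{w_{i,k+1}}$ for $k=3,\dots,n-1$ to reach $w_{i,1} \lor w_{i,2} \lor \olnot{w_{i,n}}$, and conclude by resolving with $B_{i,1,1}=w_{i,1} \lor \olnot{w_{i,2}}$ over $w_{i,2}$, which is a merge over $w_{i,1}$.

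What remains is bookkeeping. Each derivation is a centipede with spine of length $\bigoh{n}$, so both have length $\bigoh{n}$. For strong regularity I would observe that the pivots are $w_{i,3},w_{i,4},\dots,w_{i,n-1}$ along the chain and then $w_{i,2}$: each chain variable $w_{i,k}$ appears only in the two clauses flanking its own resolution step, and $w_{i,2}$ appears only up to the penultimate spine clause, so no pivot variable is ever present at or after its own resolvent; moreover $w_{i,1}$ and $w_{i,n}$ are never pivots. Hence both derivations are strongly regular, and together they form the prefix of a locally regular sequence exhibiting the two clauses as lemmas, after which one proceeds exactly as in Section~\ref{sec:ub}.

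I do not expect a genuine obstacle here; the one point to be careful about — and the reason the clauses (C1) and (C2) were added to $F_{m,n,\ell}^{(3)}$ — is that we need the final inference to be a merge \emph{and} the derivation to be strongly regular at the same time. The obvious chain $B_{i,1,0},B_{i,2,0},\dots,B_{i,n-1,0}$ produces $\olnot{w_{i,1}} \lor w_{i,n}$ strongly regularly but contains no merge, so it would not yield a lemma; beginning instead from (C1) drags the literal $\olnot{w_{i,1}}$ all the way down the spine so that it collides with the copy of $\olnot{w_{i,1}}$ supplied by $B_{i,1,0}$ exactly at the last step, creating the required merge without ever reusing a pivot variable.
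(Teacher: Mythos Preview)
Your proof is correct and follows exactly the paper's approach: start from (C1), chain along $B_{i,k,0}$ for $k=3,\dots,n-1$ to reach $\olnot{w_{i,1}}\lor\olnot{w_{i,2}}\lor w_{i,n}$, and finish with $B_{i,1,0}$ to obtain $\olnot{w_{i,1}}\lor w_{i,n}$ as a merge (symmetrically with (C2) for the other clause). Your additional explicit verification of strong regularity and the closing remark on why (C1)/(C2) are needed are accurate elaborations that the paper leaves implicit.
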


\begin{proof}
  We resolve (C1) with $\olnot{w_{i,3}} \lor w_{i,4}$, \dots, $\olnot{w_{i,n-1}} \lor w_{i,n}$ to obtain $\olnot{w_{i,1}} \lor \olnot{w_{i,2}} \lor w_{i,n}$, then with $\olnot{w_{i,1}} \lor w_{i,2}$ to obtain $\olnot{w_{i,1}} \lor w_{i,n}$ as a merge. Analogously starting from (C2) we can obtain $w_{i,1} \lor \olnot{w_{i,n}}$ as a lemma.
\end{proof}
The equivalent of Lemma~\ref{lem:w-no-merge} is the following.

\begin{lemma}
  \label{lem:greedy-no-reuse}
  Let $\eta$ be a REL derivation that only depends on $\WW$ axioms. Then no clause in $\eta$ can be reused.
\end{lemma}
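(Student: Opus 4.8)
The plan is to play the role that Lemma~\ref{lem:w-no-merge} plays for the unaltered formula, but the ``no merges'' conclusion is false here: clauses (C1) and (C2) already allow deriving $w$-merges such as $\olnot{w_{i,2}}\lor w_{i,3}$ (resolve (C1) with $w_{i,1}\lor\olnot{w_{i,2}}$ over $w_{i,1}$). Instead I would prove the REL-specific statement directly: every clause $C$ that is derivable from the $\WW$ axioms of $F_{m,n,\ell}^{(3)}$ alone is \emph{absorbed} by $F_{m,n,\ell}^{(3)}$, hence not $1$-empowering with respect to $F_{m,n,\ell}^{(3)}\cup\set{C_k : k<i}$ for any $i$, hence it can never be a REL lemma, i.e.\ it is never reused.

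First I would record two structural facts. (i)~Every $\WW$ axiom of $F_{m,n,\ell}^{(3)}$ --- the two chain clauses $w_{i,k}\lor\olnot{w_{i,k+1}}$ and $\olnot{w_{i,k}}\lor w_{i,k+1}$, together with (C1) and (C2) --- is supported on a single block $W_i$, and since every resolution step combines two premises that share the pivot variable, an easy induction gives that every clause of $\eta$ is supported on a single block $W_i$. (ii)~The chain clauses $\WW_i\subseteq F_{m,n,\ell}^{(3)}$ are semantically equivalent to $w_{i,1}=\cdots=w_{i,n}$, so their only two satisfying assignments over $W_i$ are the all-true and all-false ones; since (C1) and (C2) are themselves implied by $\WW_i$, the clause $C$ --- being implied by the $\WW$ axioms --- is already implied by $\WW_i$. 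Consequently $C$ contains at least one positive and at least one negative $W_i$-literal (to be satisfied by the all-true, respectively all-false, assignment), so in particular $\setsize{C}\geq 2$ and $C\setminus\ell\neq\emptyset$ for every $\ell\in C$.

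The heart of the argument is a short case analysis showing $C$ is absorbed by $\WW_i$. Fix $\ell\in C$ and consider the unit clauses $\olnot{C\setminus\ell}$. If $C\setminus\ell$ contains literals of both polarities, say $w_{i,a}$ and $\olnot{w_{i,b}}$, then $\olnot{C\setminus\ell}$ contains $\olnot{w_{i,a}}$ and $w_{i,b}$, and propagating $w_{i,b}$ along the chain forces $w_{i,a}$ true, conflicting with $\olnot{w_{i,a}}$, so $\WW_i\wedge\olnot{C\setminus\ell}\vdash_1\bot$. Otherwise $C\setminus\ell$ has a single polarity, say all positive; then since $C$ has a negative literal and $C\setminus\ell$ does not, $\ell$ is that negative literal, and propagating any unit of $\olnot{C\setminus\ell}$ along the chain forces all of $W_i$ false, which satisfies $\ell$, so $\WW_i\wedge\olnot{C\setminus\ell}\vdash_1\ell$ (the all-negative case is symmetric). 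In every case $\WW_i\wedge\olnot{C\setminus\ell}$ unit-propagates $\ell$ (degenerately, via $\bot$), so $C$ is absorbed by $\WW_i$; since absorption only persists under adding clauses, $C$ is absorbed by $F_{m,n,\ell}^{(3)}$ and by $F_{m,n,\ell}^{(3)}\cup\set{C_k:k<i}$, so it is not $1$-empowering there and cannot be a lemma.

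The main point requiring care is exactly the place where this proof departs from Lemma~\ref{lem:w-no-merge}: because the $\WW$-only fragment genuinely contains merges, one cannot invoke the merge condition and must verify absorption by hand, and within that verification the slightly delicate case is the homogeneous one, where $\olnot{C\setminus\ell}$ does \emph{not} unit-propagate to a conflict and one instead has to argue that $\ell$ itself is forced true. The remaining ingredients --- that $\WW$-only clauses live in one block, that $\WW_i$ has exactly two models, and the monotonicity of absorption under adding clauses --- are routine, and with this lemma in hand the lower bound for $F_{m,n,\ell}^{(3)}$ against REL goes through exactly as explained in the general discussion at the start of Section~\ref{sec:separationgeneral-x}.
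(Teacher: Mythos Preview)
Your proof is correct and follows essentially the same route as the paper's: show that every clause derivable from the $\WW$ axioms alone is absorbed, hence never $1$-empowering and never a REL lemma. The paper compresses the whole argument into two sentences---width at least $2$, so $\olnot{C\setminus\ell}$ is nonempty, and assigning any $w_{i,j}$ unit-propagates the entire block---while you spell out explicitly the single-block support, the two-model observation forcing both polarities in $C$, the case split on the polarity profile of $C\setminus\ell$, and the monotonicity of absorption; these are exactly the steps the paper leaves implicit.
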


\begin{proof}
  Observe that every derivable clause has width at least $2$. Let $C$ be any derivable clause and $\ell$ any literal in $C$. We have that $\alpha=\olnot{C\setminus\ell}$ is not empty. However, assigning any variable $w_{i,j}$ immediately propagates all variables, hence $\ell$ is not empowering.
\end{proof}

\section{Concluding Remarks}
\label{sec:remarks}
In this paper, we address the question of the tightness of simulation of resolution proofs by CDCL solvers. Specifically, we show that RMA, among other flavours of DAG-like merge resolution, simulates standard resolution with at most a linear multiplicative overhead. However, contrary to what we see in the tree-like case, this overhead is necessary. While the proof systems we introduce help us explain one source of overhead in the simulation of resolution by CDCL, it is not clear if they capture it exactly. In other words, an interesting future direction would be to explore whether it is possible for CDCL to simulate some flavour of merge resolution with less overhead than what is required to simulate standard resolution.

 \section*{Acknowledgements}
The authors are grateful to Yuval Filmus and a long list of participants in
the program \emph{Satisfiability: Theory, Practice, and Beyond} at the
Simons Institute for the Theory of Computing for numerous discussions.
This work was done in part while the authors were visiting the Simons
Institute for the Theory of Computing.

\appendix
\section{Tree-like Merge Resolution}
\label{sec:extra}

For completeness we informally sketch the proofs of
Lemma~\ref{lem:tree-to-input} and Theorem~\ref{th:tree-to-merge},
which can be found in full detail in~\cite{Andrews68Resolution}.

\begin{lemma}[Lemma~\ref{lem:tree-to-input}, restated]
  \label{app:lem:tree-to-input}
  If there is a tree-like resolution derivation of $C$ of length $L$
  where at most the root is a merge, then there is an input resolution
  derivation of some $C' \subseteq C$ of length at most $L$.
\end{lemma}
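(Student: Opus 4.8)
The plan is to prove the statement by induction on the length $L$ of the tree-like derivation, transforming the tree into an input (centipede) derivation of a subclause. The base case $L=1$ is immediate: the derivation is a single axiom. For the inductive step, consider the final inference $C = \res(C_L^{(1)}, C_L^{(2)})$ with pivot $x$, where $C_L^{(1)} = A \lor x$ and $C_L^{(2)} = B \lor \olnot{x}$. Each premise is derived by a tree-like subderivation, and since $C$ is the only merge allowed, \emph{neither} subderivation contains a merge anywhere (a merge inside a subderivation would also be a merge of the whole tree). So by the induction hypothesis each premise subderivation can be replaced by an input derivation: of some $A' \subseteq A \lor x$ and of some $B' \subseteq B \lor \olnot{x}$, each of length at most the size of the corresponding subtree.

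The key step is to splice these two input derivations together. An input derivation of $A'$ is a sequence of clauses where each new clause is obtained by resolving the previous clause (the ``spine'') with an axiom; write its spine as $A'_0 = \text{(axiom)}, A'_1, \ldots, A'_p = A'$. I would take the input derivation of $B'$ (with spine $B'_0, \ldots, B'_q = B'$) and \emph{append} to its spine, one at a time, the axioms that were resolved against the spine in the derivation of $A'$. That is, after reaching $B'$, resolve $B'$ with the axiom used to get $A'_1$, then with the axiom used to get $A'_2$, and so on. If the pivot variable of such a step is present in the current spine clause, the resolution goes through; if it is already absent, we simply skip that axiom (the spine clause is still a subclause of what we want). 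Tracking the literals, at the end of this process the spine clause is a subclause of $(B' \setminus \set{\olnot x}) \lor (A' \setminus \set{x}) \subseteq A \lor B = C$, because every literal of $A'$ other than $x$ gets merged in along the way while the literal $x$ itself (if it ever appeared) is cancelled. This gives an input derivation of some $C' \subseteq C$.

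The main obstacle — and the place where the merge hypothesis is actually used — is the length bound: I must argue the spliced derivation has length at most $L$. Naively, splicing costs $p + q$ spine steps plus the axioms, which could exceed $L$. The resolution is that the step resolving the original root is \emph{absorbed} into the splice: we never actually perform the root resolution separately; instead the axioms from the $A'$-side are folded directly onto the $B'$-spine. More carefully, one shows the new derivation reuses exactly the axiom-leaves of the two subtrees (no new clauses introduced beyond skips, which only shorten things), so its length is bounded by (size of $B'$-derivation) + (number of non-spine axioms in $A'$-derivation) $\le q + 1 + p \le L$, using that the two subtrees plus the root account for all $L$ clauses. Getting this counting exactly right — and verifying that skipped steps genuinely reduce the count rather than merely leaving it unchanged — is the delicate part; the rest is bookkeeping on which literals survive each resolution.
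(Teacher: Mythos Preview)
Your splice has a genuine gap: it never uses the starting axiom $A'_0$ of the $A'$-derivation, so $\olnot{x}$ need not be eliminated from the spine. Take the simplest instance: the left subtree is the single axiom $x$ and the right subtree the single axiom $\olnot{x} \lor r$, so $C = r$. Your induction hypothesis gives $A' = x$ (with \emph{no} side axioms at all) and $B' = \olnot{x} \lor r$; your splice appends nothing and outputs $\olnot{x} \lor r \not\subseteq r$. More generally, the pivots $p_i$ of the $A'$-derivation live in the $A'$-spine and need not occur in $B'$, so every appended step may be skipped and $\olnot{x}$ survives. Note too that the no-merge hypothesis must do real work for \emph{correctness}, not only for the length count---without it the lemma is plainly false, since input resolution is incomplete---yet your splice step does not exploit it at all.

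The paper closes this gap via a strictly stronger induction hypothesis: for \emph{every} axiom $E$ in the tree there is an input derivation of some $C' \subseteq C$ whose spine \emph{starts} at $E$. On the $B$-side, the no-merge property is used to locate the unique axiom $D \lor \olnot{x}$ from which the literal $\olnot{x}$ propagates to the root; delete $\olnot{x}$ from it to get a merge-free tree deriving $B$, apply the hypothesis with $D$ as the chosen starting leaf to obtain an input derivation of $B' \subseteq B$ beginning at $D$, and then splice in the right order: run the $A'$-derivation to its end $A'' \lor x$, resolve once with the original axiom $D \lor \olnot{x}$ to obtain $A'' \lor D$, and continue with the $B'$-derivation after weakening its starting clause from $D$ to $A'' \lor D$. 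Because only axioms already present in the tree are used, the length bound is immediate; contrary to your assessment, it is the correctness of the splice---and the stronger hypothesis needed to make it go through---that is the delicate part.
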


\begin{proof}[Proof (sketch)]

  We prove by induction on $\setsize{\eta}$ that for every axiom
  $E\in\eta$ there exists an input derivation of $C'$
that uses a subset of the axioms of $\eta$
  where $E$ is the
  topmost axiom.
  As intermediate objects we allow clauses in this derivation to
  contain opposite literals; these are cleaned up later.

  Let $C=\res(A\lor x, B\lor\olnot{x})$, and let $\eta_1$ and $\eta_2$
  be the derivations used to infer $A\lor x$ and $B\lor x$ respectively. Assume wlog that $E\in\eta_1$.
  Since $\eta_2$ does not contain any merges there exists a unique
  path from $B\lor\olnot{x}$ to an axiom $D\lor \olnot{x} \in \eta_2$, where all clauses contain $\olnot{x}$. Note that other clauses in $\eta_2$ might still contain $x$ or $\olnot{x}$.
  We
  replace $D \lor \olnot{x}$ by $D$ in $\eta_2$ (and consequently remove all the
  occurrences of $\olnot{x}$ in the aforementioned path) and we obtain a valid derivation
  $\eta_3$ of $B$.
  We apply the induction hypothesis to $\eta_1$ and $\eta_3$ to obtain
  two unit derivations $\eta_4$ and $\eta_5$ of $A' \lor x \subseteq A \lor x$
  and $B' \subseteq B$ whose topmost leaves are $E$ and $D$.
  We replace $D$ by $D \lor A'$ in $\eta_5$ and obtain a unit
  derivation $\eta_6$ of $B' \lor A'' \subseteq B' \lor A'$.
  We stitch together $\eta_4$ and $\eta_6$ by observing that
  $\res(A' \lor x, D \lor \olnot{x}) = A' \lor D$, which is the only
  axiom in $\eta_6$ not present in the original axioms, and obtain a
  unit derivation $\eta_7$ of $B\lor A'' = C' \subseteq C$ that only uses original axioms.

  Finally, and outside the inductive argument, we get rid of clauses
  that contain opposite literals by
  replacing any such clause by $1$ to obtain a semantic derivation
  $\eta_8$. Its syntactic counterpart $s(\eta_8)$ satisfies the
  conclusion of the lemma.
\end{proof}

\begin{figure}
  \centering
    \includegraphics{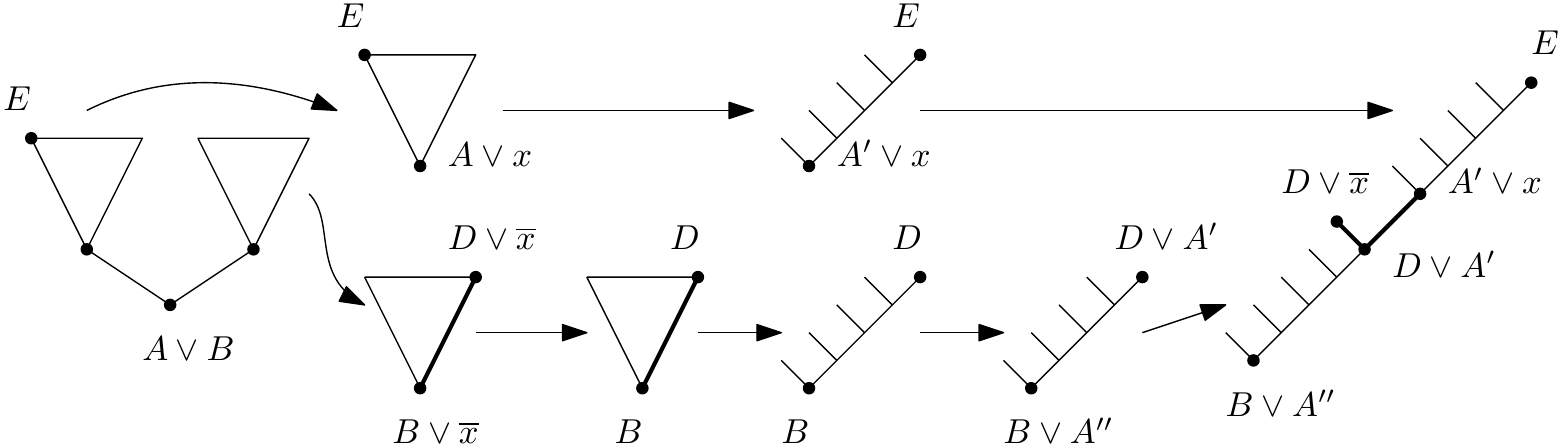}
  \caption{Proof of Lemma~\ref{app:lem:tree-to-input}}
  \label{fig:tree-to-input}
\end{figure}

\begin{theorem}[Theorem~\ref{th:tree-to-merge}, restated]
  \label{app:th:tree-to-merge}
  If there is a tree-like resolution derivation of $C$ of length $L$,
  then there is a merge resolution derivation of some $C' \subseteq C$ of
  length at most $L$.
\end{theorem}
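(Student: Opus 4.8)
The plan is to prove the statement by induction on the number of merge clauses of the given tree‑like resolution derivation $\eta$ that are \emph{not} its root (call these the \emph{internal} merges), using Lemma~\ref{lem:tree-to-input} as the engine. If $\eta$ has no internal merges, then at most its root is a merge, so Lemma~\ref{lem:tree-to-input} produces an input resolution derivation of some $C' \subseteq C$ of length at most $L$; since an input derivation resolves its running clause against an axiom at every step, it is in particular a tree‑like merge resolution derivation, and we are done.

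For the inductive step I would pick a \emph{topmost} internal merge $M$ — one with no internal merge among its proper ancestors — and let $\eta_M$ be the subtree of $\eta$ rooted at $M$. By the choice of $M$, the only merge of $\eta_M$ is its root, so Lemma~\ref{lem:tree-to-input} gives an input derivation $\delta_M$ of some $M' \subseteq M$ of length at most $\setsize{\eta_M}$, using a subset of the axioms used in $\eta_M$. I then replace the subtree $\eta_M$ inside $\eta$ by the single leaf $M'$ and propagate the resulting subclause shrinkages through the descendants of $M$, short‑circuiting any inference whose pivot disappears, exactly as in the construction of $s(\cdot)$; call the result $\eta'$. Now $\eta'$ is a tree‑like resolution derivation of some $C'' \subseteq C$ from $F \union \set{M'}$, of length at most $\setsize{\eta} - \setsize{\eta_M} + 1$, which uses $M'$ exactly as often as $\eta$ used $M$ (once), and which has strictly fewer internal merges — shrinking a clause can destroy merges but never create them, and the merge $M$ is gone. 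The induction hypothesis then gives a tree‑like merge resolution derivation $\rho'$ of some $C''' \subseteq C$ from $F \union \set{M'}$ of length at most $\setsize{\eta'}$, and tracking the construction one sees $\rho'$ may be taken to use $M'$ at most once. Splicing $\delta_M$ in front of $\rho'$ to produce that single copy of $M'$ (or keeping $\rho'$ unchanged if $M'$ is unused) yields a tree‑like merge resolution derivation of $C''' \subseteq C$ of length at most $\setsize{\delta_M} + \setsize{\rho'} - 1 \le \setsize{\eta_M} + (\setsize{\eta} - \setsize{\eta_M} + 1) - 1 = \setsize{\eta} = L$.

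The delicate part, I expect, is the bookkeeping around the reintroduced clause $M'$. I must check that the surgery replacing $\eta_M$ by $M'$ genuinely removes at least one merge — this is exactly where topmostness of $M$ is used, so that no merge sits strictly above $M$ inside $\eta_M$ — and, more importantly, that neither that surgery nor the recursive conversion of $\eta'$ ever \emph{duplicates} $M'$: a single extra copy would cost $\setsize{\eta_M}$ additional steps and break the tight $\le L$ bound, so I need the multiplicity‑preserving form of Lemma~\ref{lem:tree-to-input} (its construction uses each axiom no more often than the input derivation does). The remaining ingredients are routine: that an input derivation is a legal tree‑like merge resolution derivation, and that shrinking clauses along a path cannot turn a non‑merge into a merge. (One could instead run the argument by peeling off the last inference $C = \res(A\lor x, B\lor\olnot x)$ and recombining the merge derivations of subclauses of $A\lor x$ and $B\lor\olnot x$, routing the second derivation into a pivot‑carrying leaf of the first; this meets the same obstacle — keeping every modified inference witnessed by an axiom or a merge — and the structural facts it needs are again Lemma~\ref{lem:tree-to-input} and Observation~\ref{obs:input-structure}.)
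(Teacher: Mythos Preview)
Your induction scheme and length accounting are fine, and you correctly flag the multiplicity issue (which the paper's sketch also needs but does not spell out). The gap is at the splice. After you replace the leaf $M'$ in $\rho'$ by the input derivation $\delta_M$, the inference in $\rho'$ that consumed $M'$, say $\res(M',Q)$, must still satisfy the merge-resolution constraint ``at least one premise is an axiom or a merge.'' In $\rho'$ this held because $M'$ was an axiom; but $Q$ may well be the running clause of one of the input pieces of $\rho'$ (neither an axiom nor a merge), and after the splice $M'$ is an internal node whose producing step---the last step of the input derivation $\delta_M$---has no reason to be a merge. So the spliced tree can fail to be a merge resolution derivation. Your multiplicity-tracking does nothing for this, and the claim ``splicing $\delta_M$ in front of $\rho'$ \ldots\ yields a tree-like merge resolution derivation'' is exactly the unjustified step.

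The paper's proof avoids this by not splicing at $M'$ at all. After producing the input derivation of $M'\subseteq M$, it locates the \emph{last merge} $D$ along that input derivation, keeps the suffix $\omega$ from $D$ to $M'$ inside $\eta'$ before recursing (so the recursion sees $D$ as the extra axiom, not $M'$), and only afterwards grafts the prefix $\omega'$ that ends in $D$. Since $D$ is a merge by construction, wherever $D$ sits in the recursively obtained merge derivation the inference consuming it has a merge premise, and the graft is safe. You can repair your argument either by adopting this last-merge splice point, or by strengthening the induction hypothesis so that the designated axiom $M'$ is always placed as the \emph{topmost} leaf of its input piece (the strong form of Lemma~\ref{lem:tree-to-input} lets you choose the topmost axiom); then $\res(M',Q)$ has $Q$ an axiom and the concatenation of $\delta_M$ with that piece is again input.
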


\begin{proof}[Proof (sketch)]
  The proof is by induction on the number of merges. The base case
  when there are no merges follows by
  Lemma~\ref{app:lem:tree-to-input}. Otherwise let $\psi$ be a subtree
  where exactly the root $C$ is a merge. Let $\psi'$ be the input
  resolution derivation of $C'$ given by
  Lemma~\ref{app:lem:tree-to-input}, let $D$ be the last merge in
  $\psi'$, and let $\omega$ and $\omega'$ be the fragments of $\psi'$
  from $D$ to $C'$ and up to $D$ respectively. We replace $\psi$ by
  $\omega$ in $\eta$ to obtain a refutation $\eta'$ that uses $D$ as
  an axiom (note that in replacing $C$ by $C'$ we may have to prune
  away parts of $\eta$). Because $\eta'$ has one less merge we can
  apply the induction hypothesis and obtain a merge resolution
  derivation $\psi''$. Finally we replace the axiom $D$ by the
  derivation $\omega'$.
\end{proof}

\begin{figure}
  \centering
  \iffullversion\newcommand{\theoremone}{theorem1eccc}\else\newcommand{\theoremone}{theorem1}\fi
  \includegraphics{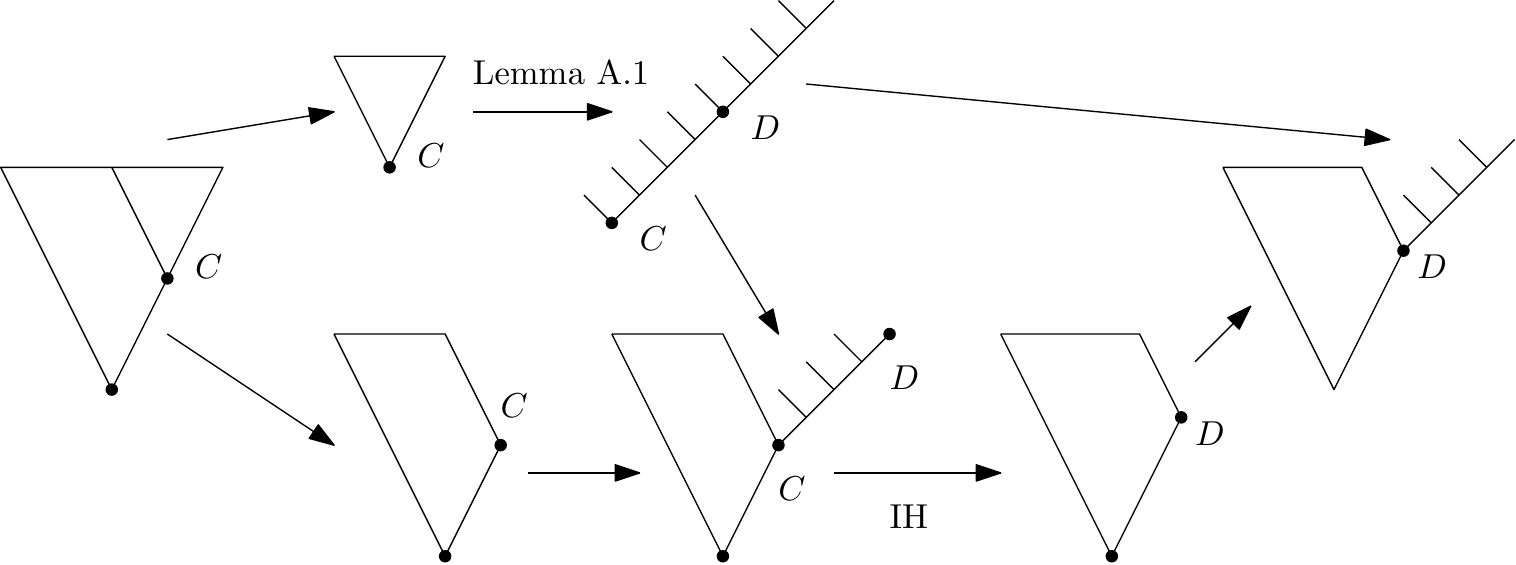}
  \caption{Proof of Theorem~\ref{app:th:tree-to-merge}}
  \label{fig:tree-to-merge}
\end{figure}

\bibliography{refArticles,refLocal}
\bibliographystyle{alphaurl}

\ifcsname OT1/cantarell-TLF/m/n/sub\endcsname
\error There was a bad font substitution (cantarell-TLF). Try fca.
\fi
\ifcsname OT1/fca/m/n/sub\endcsname
\error There was a bad font substitution (fca). Try cantarell-TLF,
\fi

\end{document}